\tikzstyle{every node} =
\tikzstyle{noall} =
\tikzstyle{nodraw} =
\tikzstyle{nofill} =
\tikzstyle{cnode} =
\newcommand{\details}[1]{}
\newcommand{\AAWA}{\text{\sffamily AAWA}}
\newcommand{\NAAWA}{\text{$n$\sffamily AAWA}}
\newcommand{\NAWA}{\text{\sffamily NAWA}}
\newcommand{\NNAWA}{\text{$n$\sffamily NAWA}}
\newcommand{\HLTL}{\text{\sffamily HyperLTL}}
\newcommand{\AHLTL}{\text{\sffamily A-HyperLTL}}
\newcommand{\SHLTL}{\text{\sffamily  HyperLTL$_{S}$}}
\newcommand{\CHLTL}{\text{\sffamily HyperLTL$_{C}$}}
\newcommand{\HPDL}{\text{\sffamily HyperPDL-$\Delta$}}
\newcommand{\HQPTL}{\text{\sffamily HyperQPTL}}
\newcommand{\HU}{\text{\sffamily H$_\mu$}}
\newcommand{\FO}{\text{\sffamily FO[$<$]}}
\newcommand{\FOF}{\text{\sffamily FO$_f$[$<$]}}
\newcommand{\FOPLUS}{\text{\sffamily FO$_f$[$<$, $+$]}}
\newcommand{\FOINPLUS}{\text{\sffamily FO[$<$, $+$]}}
\newcommand{\FOE}{\text{\sffamily FO[$<$,E]}}
\newcommand{\MSOE}{\textup{\sffamily S1S[E]}}
\newcommand{\MSO}{\text{\sffamily S1S}}
\newcommand{\PDL}{\text{\sffamily PDL}}
\newcommand{\QPTL}{\text{\sffamily QPTL}}
\newcommand{\CTL}{\text{\sffamily CTL}}
\newcommand{\CTLStar}{\text{\sffamily CTL$^{*}$}}
\newcommand{\LTL}{\text{\sffamily LTL}}
\newcommand{\HCTLStar}{\text{\sffamily HyperCTL$^{*}$}}
\newcommand{\Lang}{\mathcal{L}}
\newcommand{\Logic}{\textit{L}}
\newcommand{\AP}{\textsf{AP}}
\newcommand{\Ku}{\mathcal{K}}
\newcommand{\Au}{\mathcal{A}}
\newcommand{\AT}{\textsf{A}}
\newcommand{\ET}{\textsf{E}}
\newcommand \tpl[1]{\langle #1 \rangle}
\newcommand{\Var}{\textsf{VAR}}
\newcommand{\Lab}{\textit{Lab}}
\newcommand{\stfr}{\textit{stfr}}
\newcommand{\SUCC}{\textit{succ}}
\newcommand{\cl}{{\textit{cl}}}
\newcommand{\Dom}{{\textit{Dom}}}
\newcommand{\Atoms}{{\textit{Atoms}}}
\newcommand{\Inst}{{\textit{L}}}
\newcommand{\halt}{{\textit{halt}}}
\newcommand{\rec}{{\textit{rec}}}
\newcommand{\init}{{\textit{init}}}
\newcommand{\Beg}{\textit{beg}}
\newcommand{\Pad}{\textit{pad}}
\newcommand{\inc}{{\mathsf{inc}}}
\newcommand{\dec}{{\mathsf{dec}}}
\newcommand{\zero}{{\mathsf{if\_zero}}}
\newcommand{\instr}{{\textit{op}}}
\newcommand{\Until}{\textsf{U}}
\newcommand{\PUntil}{\textsf{S}}
\newcommand{\Next}{\textsf{X}}
\newcommand{\PNext}{\textsf{Y}}
\newcommand{\Always}{\textsf{G}}
\newcommand{\Eventually}{\textsf{F}}
\newcommand{\und}{\textsf{und}}
\def\N{{\mathbb{N}}}
\def\B{{\mathbb{B}}}
\def\S{{\mathcal{S}}}
\def\C{{\mathcal{C}}}
\def\F{{\mathcal{F}}}
\def\U{{\mathcal{U}}}
\def\M{{\mathcal{M}}}
\def\V{{\mathcal{V}}}
\newcommand{\true}{\texttt{true}}
\newcommand{\false}{\texttt{false}}
\def\PSPACE{{\sc Pspace}}
\newcommand{\Rel}[2]{\ensuremath{#1[#2]}}
\newcommand{\DefinedAs}{\ensuremath{\,\stackrel{\text{\textup{def}}}{=}\,}}
\title{Expressiveness and Decidability of Temporal Logics for Asynchronous Hyperproperties}
\titlerunning{Expressiveness and decidability  of temporal logics for asynchronous hyperproperties}
\author{Laura Bozzelli}{University of Napoli ``Federico II'', Napoli, Italy}{}{}{}{}
\author{Adriano Peron}{University of Napoli ``Federico II'', Napoli, Italy}{}{}{}{}
\author{C\'esar S\'anchez}{IMDEA Software Institute, Madrid, Spain}{}{}{}{}
 \authorrunning{L.\ Bozzelli,      A.\ Peron and C.\ S\'anchez}
\keywords{Asynchronous hyperproperties, Temporal logics for hyperproperties, Expressiveness, Decidability, Model checking}
\begin{document}

\maketitle

\begin{abstract}
  Hyperproperties are properties of systems that relate  different
  executions traces, with many applications from security to
  symmetry, consistency models of concurrency, etc.
  In recent years, different linear-time logics for specifying
  \emph{asynchronous} hyperproperties have been investigated.
  Though model checking of these logics is undecidable, useful
  decidable fragments have been identified with applications e.g.~for
  asynchronous security analysis.
  In this paper, we address  expressiveness and decidability
  issues of temporal logics for asynchronous hyperproperties.
  We compare the expressiveness of these logics together with
  the extension $\MSOE$ of $\MSO$ with the equal-level predicate by
  obtaining an almost complete expressiveness picture.
  We also study the expressive power of these logics when interpreted
  on singleton sets of traces.
  We show that for two asynchronous extensions of $\HLTL$, checking
  the existence of a singleton model is already undecidable, and for
  one of them, namely Context $\HLTL$ ($\CHLTL$), we establish a
  characterization of the singleton models in terms of the extension
  of standard $\FO$ over traces with addition.
  This last result generalizes the well-known equivalence between
  $\FO$ and $\LTL$.
  Finally, we identify new boundaries on the decidability of model
  checking $\CHLTL$.
\end{abstract}

\section{Introduction}\label{sec:Intro}

\noindent \textbf{Hyperproperties.}
In the last decade, a novel specification paradigm has been introduced  that generalizes  traditional  trace properties by properties of sets of traces, the so called \emph{hyperproperties}~\cite{ClarksonS10}.
Hyperproperties relate execution traces of a reactive system and are
useful in many settings.
In the area of information flow control, hyperproperties can formalize
security policies (like
noninterference~\cite{goguen1982security,McLean96} and observational
determinism~\cite{ZdancewicM03}) which compare observations made by an
external low-security agent along traces resulting from different
values of not directly observable inputs.
These security requirements go, in general, beyond regular properties
and cannot be expressed in classical regular temporal logics such as
\LTL~\cite{Pnueli77}, \CTL, and \CTLStar~\cite{EmersonH86}.
Hyperproperties also have applications in other settings, such as the
symmetric access to critical resources in distributed
protocols~\cite{FinkbeinerRS15}, consistency models in concurrent
computing~\cite{BonakdarpourSS18}, and distributed
synthesis~\cite{FinkbeinerHLST20}.

In the context of model checking of finite-state reactive systems, many temporal logics for hyperproperties have been proposed~\cite{DimitrovaFKRS12,ClarksonFKMRS14,BozzelliMP15,Rabe2016,FinkbeinerH16,CoenenFHH19,GutsfeldMO20}
for which model checking is decidable, including
$\HLTL$~\cite{ClarksonFKMRS14}, $\HCTLStar$~\cite{ClarksonFKMRS14},
$\HQPTL$~\cite{Rabe2016,CoenenFHH19}, and $\HPDL$~\cite{GutsfeldMO20}
which extend $\LTL$, $\CTLStar$, $\QPTL$~\cite{SistlaVW87}, and
$\PDL$~\cite{FischerL79}, respectively, by explicit first-order
quantification over traces and trace variables to refer to multiple
traces at the same time. The semantics of all these logics is \emph{synchronous}
and the temporal modalities are evaluated by a
lockstepwise traversal of all the traces assigned to the quantified
trace variables.

A different approach for the formalization of synchronous hyper logics is based on hyper variants
of monadic second-order logic over traces or trees~\cite{CoenenFHH19}. For the linear-time setting,
we recall the logic $\MSOE$~\cite{CoenenFHH19} (and its first-order fragment $\FOE$~\cite{Finkbeiner017})
which syntactically  extends monadic second-order logic of one successor $\MSO$ with the \emph{equal-level predicate}
$E$, which relate the same time points on different traces. Another class of hyperlogics is obtained
by adopting a \emph{team semantics} for standard temporal logics, in particular, $\LTL$~\cite{KrebsMV018,Luck20,VirtemaHFK021}.     \vspace{0.1cm}

\noindent \textbf{Asynchronous extensions of Hyper logics.} Many
application domains require asynchronous properties that relate traces
at distinct time points which can be arbitrarily far from each
other.
For example, asynchronous specifications are needed to reason about a
multithreaded environment in which threads are not scheduled in
lockstep, and traces associated with distinct threads progress at
different speed.
Asynchronous hyperproperties are also useful in information-flow
security where an observer is not time-sensitive, so the observer
cannot distinguish consecutive time points along an execution having
the same observation.
This again requires asynchronously matching sequences of observations
along distinct execution traces.
A first systematic study of asynchronous hyperproperties is done
in~\cite{GutsfeldOO21}, where two powerful and expressively equivalent
linear-time asynchronous formalisms are introduced: the temporal
fixpoint calculus $\HU$ and an automata-theoretic formalism where the
quantifier-free part of a specification is expressed by the class of
parity multi-tape Alternating Asynchronous Word Automata
(\AAWA)~\cite{GutsfeldOO21}.
While the expressive power of the quantifier-part of $\HLTL$ is just
that of $\LTL$ over tuples of traces of fixed arity
(\emph{multi-traces}), $\AAWA$ allow to specify very expressive
non-regular multi-trace properties.
As a matter of fact, model checking against $\HU$ or its $\AAWA$-based
counterpart is undecidable even for the quantifier alternation-free
fragment.
In~\cite{GutsfeldOO21}, two decidable subclasses of parity $\AAWA$ are
identified which express only multi-trace $\omega$-regular properties and lead to two $\HU$ fragments with decidable model
checking. 
More recently, other temporal logics~\cite{BaumeisterCBFS21,BozzelliPS21} which syntactically extend \HLTL\ have been introduced
for expressing  asynchronous hyperproperties.
\emph{Asynchronous $\HLTL$}
($\AHLTL$)~\cite{BaumeisterCBFS21}, useful for asynchronous security analysis,   models asynchronicity  by means of an additional quantification layer over the so called \emph{trajectories}.
Intuitively, a trajectory controls the relative speed at which traces progress by choosing
at each instant which traces move and which traces stutter.
The general logic also has an undecidable model-checking problem,
but~\cite{BaumeisterCBFS21} identifies practical decidable fragments,
and reports an empirical evaluation.
\emph{Stuttering $\HLTL$}
(\SHLTL) and \emph{Context $\HLTL$} (\CHLTL)
are introduced in~\cite{BozzelliPS21} as more expressive
asynchronous extensions of $\HLTL$.
$\SHLTL$ uses relativized versions of the temporal modalities with
respect to finite sets $\Gamma$ of $\LTL$ formulas.
Intuitively, these modalities are evaluated by a lockstepwise
traversal of the sub-traces of the given traces which are obtained by
removing ``redundant'' positions with respect to the pointwise
evaluation of the $\LTL$ formulas in $\Gamma$.
$\CHLTL$ extends $\HLTL$ by unary modalities $\tpl{C}$ parameterized by
a non-empty subset $C$ of trace variables---called the
\emph{context}---which restrict the evaluation of the temporal
modalities to the traces associated with the variables in $C$.
Both $\SHLTL$ and $\CHLTL$ are subsumed by $\HU$ and still have an
undecidable model-checking problem, and fragments of the two logics
with a decidable model-checking have been
investigated~\cite{BozzelliPS21}.\vspace{0.1cm} 

\noindent \textbf{Our contribution.} In this paper, we  study
expressiveness and decidability of 
asynchronous extensions of
$\HLTL$~\cite{GutsfeldOO21,BaumeisterCBFS21,BozzelliPS21}.
Our main goal is to compare the expressive power of these logics
together with the known logics for linear-time hyperproperties based
on the equal-level predicate whose most powerful representative is
$\MSOE$.
The first-order fragment $\FOE$ of $\MSOE$ is already strictly more
expressive than $\HLTL$~\cite{Finkbeiner017} and, unlike $\MSOE$, its
model-checking problem is decidable~\cite{CoenenFHH19}.
We obtain an almost complete expressiveness picture, summarized in
Figure~\ref{fig:Expressiveness}, where novel results are annotated in
red.
In particular, for $\AHLTL$, we show that although $\HLTL$ and
$\AHLTL$ are expressively incomparable,  $\HLTL$ can be embedded
into $\AHLTL$  using a natural encoding.
We also establish that 
$\AHLTL$ is
strictly less expressive than $\HU$ and its $\AAWA$
counterpart.
For the relative expressiveness of $\AHLTL$, $\SHLTL$, and $\CHLTL$,
we prove that $\AHLTL$ and $\SHLTL$ are expressively incomparable, and
that $\CHLTL$ is not subsumed by $\AHLTL$ or by $\SHLTL$.
The question of whether $\AHLTL$ and $\SHLTL$ are subsumed or not by
$\CHLTL$ remains open.
Additionally, we show that each of these logics is not subsumed by
$\MSOE$.
This last result solves a recent open
question~\cite{GutsfeldOO21,BaumeisterCBFS21}.

Since hyperproperties are a generalization of trace properties, we
also investigate the expressive power of the considered asynchronous
extensions of $\HLTL$ when interpreted on singleton sets of
traces.
For $\HLTL$ and its more expressive extension $\SHLTL$, singleton
models are just the ones whose traces are $\LTL$-definable and
checking the existence of such a model (\emph{single-trace
  satisfiability}) is decidable and \PSPACE-complete.
On the other hand, we show that for both $\AHLTL$ and $\CHLTL$,
single-trace satisfiability is highly undecidable being
$\Sigma_1^{1}$-hard.
Moreover, for $\CHLTL$ extended with past temporal modalities, we provide a nice
characterization of the singleton models which generalizes the
well-known equivalence of $\LTL$ and first-order logic $\FO$ over
traces established by Kamp's theorem.
We show that over singleton models, $\CHLTL$ with past corresponds to
the extension $\FOINPLUS$ of $\FO$ with addition over variables.

Finally, we investigate the decidability frontier for model-checking
$\CHLTL$ by enforcing the undecidability result of~\cite{BozzelliPS21}
and by identifying a maximal fragment of $\CHLTL$ for which model
checking is decidable.
This fragment subsumes $\HLTL$ and can be translated into $\FOE$. Due
to lack of space, many proofs are omitted and included in the
Appendix.

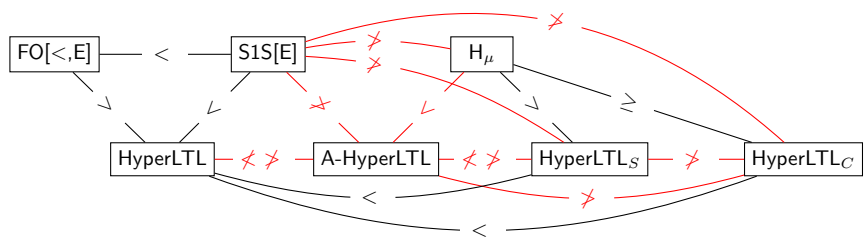
\begin{figure}[t]
    \centering
  \begin{center}
		{\begin{tikzpicture}
			[scale=1.0, bend angle = 15,  every node/.style={scale=0.8}]
	

			\node [cnode, node distance = 0em]
			(HU)
			{\normalsize  \,\,\,$\HU$\,\,\,};

            \node [cnode, draw = none, node distance = 7em]
		 	(Extra1)
		 	[below left of = HU]
		 	{ };	

            \node [cnode, node distance = 10em]
			(HLTL)
			[left of = Extra1]
			{\normalsize $\HLTL$};

            \node [cnode, node distance = 10em]
			(AHLTL)
			[right of = HLTL]
			{\normalsize $\AHLTL$};

            \node [cnode, node distance = 10em]
			(SHLTL)
			[right of = AHLTL]
			{\normalsize$\SHLTL$};

            \node [cnode, node distance = 10em]
			(CHLTL)
			[right of = SHLTL]
			{\normalsize$\CHLTL$};

           \node [cnode, node distance = 10em]
			(MSOE)
			[left of = HU]
			{\normalsize$\MSOE$};

            \node [cnode, node distance = 10em]
			(FOE)
			[left of = MSOE]
			{\normalsize$\FOE$};

           \path[red, thin,-] (HLTL) edge [] node[fill=white, anchor=center,  align=center, midway, sloped, font=\normalsize]
                { $\not < \,\, \not >$ }(AHLTL);

           \path[red, thin,-] (AHLTL) edge [] node[fill=white, anchor=center,  align=center, midway, sloped, font=\normalsize]
                { $\not < \,\, \not >$ }(SHLTL);

             \path[red, thin,-] (SHLTL) edge [] node[fill=white, anchor=center,  align=center, midway, sloped, font=\normalsize]
                { $ \,\not > \,$ }(CHLTL);

            \path[red, thin,-] (AHLTL) edge [] node[fill=white, anchor=center,  align=center, midway, sloped, font=\normalsize]
                { $ \,< \,$ }(HU);

            \path[black, thin,-] (SHLTL) edge [] node[fill=white, anchor=center,  align=center, midway, sloped, font=\normalsize]
                { $ \,> \,$ }(HU);

            \path[black, thin,-] (CHLTL) edge [] node[fill=white, anchor=center,  align=center, midway, sloped, font=\normalsize]
                { $ \,\geq \,$ }(HU);

            \path[black, thin,-] (HLTL) edge [] node[fill=white, anchor=center,  align=center, midway, sloped, font=\normalsize]
                { $ \,< \,$ }(MSOE);

            \path[red, thin,-] (AHLTL) edge [] node[fill=white, anchor=center,  align=center, midway, sloped, font=\normalsize]
                { $ \,\not > \,$ }(MSOE);

                \path[red, thin,-] (SHLTL) edge [bend right = 15] node[fill=white, anchor=center,  align=center, near end, sloped, font=\normalsize]
                { $ \,\not > \,$ }(MSOE);

            \path[red, thin,-] (MSOE) edge [bend left = 10] node[fill=white, anchor=center,  align=center, midway, sloped, font=\normalsize]
                { $ \,\not > \,$ }(HU);

             \path[red, thin,-] (AHLTL) edge [bend right = 15] node[fill=white, anchor=center,  align=center, midway, sloped, font=\normalsize]
                { $ \,\not > \,$ }(CHLTL);

             \path[black, thin,-] (HLTL) edge [bend right = 20] node[fill=white, anchor=center,  align=center, midway, sloped, font=\normalsize]
                { $ \,< \,$ }(CHLTL);

             \path[red, thin,-] (MSOE) edge [bend left = 30] node[fill=white, anchor=center,  align=center, midway, sloped, font=\normalsize]
                { $ \,\not > \,$ }(CHLTL);

               \path[black, thin,-] (HLTL) edge [bend right = 15] node[fill=white, anchor=center,  align=center, midway, sloped, font=\normalsize]
                { $ \,< \,$ }(SHLTL);

                 \path[black, thin,-] (HLTL) edge  node[fill=white, anchor=center,  align=center, midway, sloped, font=\normalsize]
                { $ \,> \,$ }(FOE);

                \path[black, thin,-] (MSOE) edge  node[fill=white, anchor=center,  align=center, midway, sloped, font=\normalsize]
                { $ \,< \,$ }(FOE);

			\end{tikzpicture} }
		\end{center}
\vspace{-0.4cm}
    \caption{Expressiveness comparisons between linear-time hyper logics}
    \label{fig:Expressiveness}
\end{figure}

\section{Preliminaries}
\label{sec-prelim}

Let $\N$ be the set of natural numbers.
For all $i,j\in\N$, $[i,j]$ denotes the set of natural numbers $h$
such that $i\leq h\leq j$.
Given a word $w$ over some alphabet $\Sigma$, $|w|$ is the length of $w$
($|w|=\infty$ if $w$ is infinite). For each $0\leq i<|w|$, $w(i)$ is
the $(i+1)^{th}$ symbol of $w$, and $w^{i}$ is the suffix of $w$ from
position $i$, i.e., the word $w(i)w(i+1)\ldots$.

We fix a finite  set $\AP$ of atomic propositions.
A \emph{trace} is an infinite word over $2^{\AP}$.
A \emph{pointed trace} is a pair $(\pi,i)$ consisting of a trace $\pi$
and a position $i\in\N$.
%
%
Two traces $\pi$ and $\pi'$ are \emph{stuttering equivalent} if there
are two infinite sequences of positions $0=i_0<i_1 \ldots $ and
$0=i'_0<i'_1\ldots$ s.t.~for all $k\geq 0$ and for all
$\ell\in [i_k,i_{k+1}-1]$ and $\ell'\in
[i'_k,i'_{k+1}-1]$, 
$\pi(\ell)=\pi'(\ell')$. The trace $\pi'$ is a \emph{stuttering
  expansion} of the trace $\pi$ if there is an infinite sequence of
positions $0=i_0<i_1 \ldots $ such that for all $k\geq 0$ and
\mbox{for all $\ell\in [i_k,i_{k+1}-1]$, $\pi'(\ell) =\pi(k)$.}
\vspace{0.1cm}

\noindent \textbf{Kripke structures.} A \emph{Kripke
  structure $($over $\AP$$)$} is a tuple $\Ku=\tpl{S,S_0,E,V}$, where
$S$ is a finite set of states, $S_0\subseteq S$ is the set of initial states, $E\subseteq
S\times S$ is a transition relation 
and $V:S \rightarrow 2^{\AP}$ is an \emph{$\AP$-valuation} of the set of states. 
A \emph{path} of $\Ku$ is an infinite sequence of states $t_0,t_1,\ldots$ such that $t_0\in S_0$
and $(t_{i},t_{i+1})\in E$ for all $i\geq 0$. The Kripke structure $\Ku$ induces the set $\Lang(\Ku)$ of traces
of the form $V(t_0),V(t_1),\ldots$ such that $t_0,t_1,\ldots$ is a path of $\Ku$.\vspace{0.1cm}

\noindent \textbf{Relative Expressiveness.} In Sections~\ref{section:AsynchronousHLTLResults}-\ref{sec:OverallExpressiveness}, we compare the expressiveness of various logics for linear-time
hyperproperties. Let $\M$ be a set of models (in our case, a model is a set of traces),
and $\Logic$ and $\Logic'$ be two logical languages interpreted over models in $\M$.
Given two formulas $\varphi\in\Logic$ and $\varphi'\in\Logic'$, we say that
$\varphi$ and $\varphi'$ are \emph{equivalent} if for each model $M\in\M$, $M$
satisfies $\varphi$ iff $M$ satisfies $\varphi'$.
The language $\Logic$ \emph{is subsumed by} $\Logic'$, denoted $\Logic\leq \Logic'$,
if each formula in $\Logic$ has an equivalent formula in $\Logic'$.
The language $\Logic$ \emph{is strictly less expressive than}
$\Logic$ (written $\Logic<\Logic')$ 
if $\Logic\leq \Logic'$ and there is a
$\Logic'$-formula which has no equivalent in $\Logic$.
Finally, two logics $\Logic$ and
$\Logic'$ \emph{are expressively incomparable} if both $\Logic\not\leq
\Logic'$ and $\Logic'\not\leq \Logic$.\vspace{0.1cm}

\noindent\textbf{Linear-time hyper specifications.}
We consider an abstract notion of linear-time hyper specifications which are interpreted over sets of traces.
We fix a finite set $\Var$ of trace variables.
A \emph{pointed-trace assignment $\Pi$} is a partial mapping  over $\Var$  assigning
to each trace variable $x$ in its domain
$\Dom(\Pi)$ a pointed trace.
The assignment $\Pi$ is \emph{initial} if for each $x\in \Dom(\Pi)$, $\Pi(x)$ is of the form $(\pi,0)$ for some trace $\pi$.
 For a  variable $x\in \Var$  and a pointed trace $(\pi,i)$, we denote by $\Pi[x\mapsto (\pi,i)]$ the pointed-trace  assignment having domain $\Dom(\Pi)\cup \{x\}$ defined as: $\Pi[x\mapsto (\pi,i)](x)=(\pi,i)$ and $\Pi[x\mapsto (\pi,i)](y)=\Pi(y)$  if $y\neq x$.

A \emph{multi-trace specification} $\S(x_1,\ldots,x_n)$ is a specification (in some formalism) parameterized by a subset $\{x_1,\ldots,x_n\}$ of $\Var$
whose semantics is given by a set $\Upsilon$ of \emph{initial} pointed-trace assignments with domain $\{x_1,\ldots,x_n\}$:  we write $\Pi\models S(x_1,\ldots,x_n)$ for the trace assignments $\Pi$ in $\Upsilon$.
Given a class $\C$ of multi-trace specifications,  \emph{linear-time hyper expressions} $\xi$ over $\C$ are defined as:
 $
\xi \DefinedAs    \exists x.  \xi \ | \ \forall x.  \xi \ | \ S(x_1,\ldots,x_n)
$,
 where $x,x_1,\ldots,x_n\in \Var$, $S(x_1,\ldots,x_n)\in\C$, and  $\exists x$ (resp., $\forall x$) is the
\emph{hyper} existential (resp., universal) trace quantifier for variable $x$.
An expression $\xi$ is a \emph{sentence} if every variable $x_i$ in the multi-trace specification $S(x_1,\ldots,x_n)$ of $\xi$ is \emph{not free} (i.e., $x_i$ is in the scope of a quantifier for  variable $x_i$).
The \emph{quantifier alternation depth} of $\xi$ is the number of switches between $\exists$ and $\forall$ quantifiers in the quantifier prefix of $\xi$. For a set $\Lang$ of traces  and an initial pointed-trace assignment $\Pi$ such that $\Dom(\Pi)$ contains the free variables of $\xi$ and the traces referenced by $\Pi$ are in $\Lang$, the satisfaction relation $(\Lang,\Pi)\models \xi$ is inductively defined as follows:
  \[ \begin{array}{ll}
  (\Lang,\Pi) \models  \exists x. \xi  &  \Leftrightarrow \text{ for some trace } \pi\in\Lang:\,  (\Lang,\Pi[x\mapsto (\pi,0)]) \models  \xi \\
    (\Lang,\Pi) \models  \forall x. \xi  &  \Leftrightarrow \text{ for each trace } \pi\in\Lang:\,  (\Lang,\Pi[x\mapsto (\pi,0)]) \models  \xi \\
(\Lang,\Pi) \models S(x_1,\ldots,x_n)  &  \Leftrightarrow \Pi\models S(x_1,\ldots,x_n)
\end{array}
\]
For a sentence $\xi$, we write $\Lang\models \xi$ to mean that
$(\Lang,\Pi_\emptyset)\models \xi$, where $\Pi_\emptyset$ is the empty
assignment.
If $\Lang\models \xi$ we say that $\Lang$ is a \emph{model} of
$\xi$.
If, additionally, $\Lang$ is a singleton we call  it a
\emph{single-trace model}.
By restricting our attention to the single-trace models, a
linear-time hyper sentence $\xi$ denotes a trace property consisting
of the traces $\pi$ such that $\{\pi\}\models \xi$. For a class $\C$ of
multi-trace specifications, we consider the following decision
problems:
\begin{compactitem}
\item the \emph{satisfiability} (resp., \emph{single-trace
    satisfiability}) problem is checking for a linear-time hyper
  sentence $\xi$ over $\C$, whether $\xi$ has a model (resp., a
  single-trace model), and
\item the model checking problem is checking for a Kripke structure
  $\Ku$ and a linear-time hyper sentence $\xi$ over $\C$, whether
  $\Lang(\Ku)\models \xi$.
\end{compactitem}

For instance, $\HLTL$ formulas are linear-time hyper sentences over
the class of multi-trace specifications, called \emph{$\HLTL$
  quantifier-free formulas}, obtained by standard $\LTL$
formulas~\cite{Pnueli77} by replacing atomic propositions $p$ with
relativized versions $p[x]$, where $x\in\Var$.
Intuitively, $p[x]$ asserts that $p$ holds at the current position of
the trace assigned to $x$. Given an $\HLTL$ quantifier-free  formula
$\psi(x_1,\ldots,x_n)$, an initial pointed trace assignment $\Pi$ such
that $\Dom(\Pi)\supseteq \{x_1,\ldots,x_n\}$, and a position
$i\geq 0$, the satisfaction relation $(\Pi,i)\models \psi$ is defined
as a natural extension of the satisfaction relation
$(\pi,i)\models \theta$ for $\LTL$ formulas $\theta$ and traces
$\pi$.
In particular,
\begin{compactitem}
\item $(\Pi,i)\models p[x_k]$ if  $p\in \Pi(x_k)(i)$,
\item $(\Pi,i)\models \Next\psi$ if $(\Pi,i+1)\models \psi$, and 
\item $(\Pi,i)\models \psi_1\Until\psi_2$ if there is $j\geq i$ such
  that $(\Pi,j)\models \psi_2$ and $(\Pi,k)\models \psi_1$ for all
  $k\in [i,j-1]$.
\end{compactitem}
\vspace{0.2cm}

\noindent \textbf{Asynchronous Word Automata and the Fixpoint Calculus $\HU$.} We shortly recall  the framework of parity alternating asynchronous word automata (parity $\AAWA$)~\cite{GutsfeldOO21}, a class of finite-state automata for the asynchronous traversal of multiple infinite words.
Intuitively, given $n\geq 1$, an $\AAWA$ with $n$ tapes ($\NAAWA$) has access to $n$ infinite words over the input alphabet  $\Sigma$ and at each step, activates multiple copies where for each of them, there is exactly one input word whose current input symbol is consumed (i.e., the reading head of such word moves one position to the right).
 In particular, the target of a move   of $\Au$ is encoded by a pair $(q,i)$, where $q$ indicates the target state while the direction $i\in[1,n]$ indicates on which input word to progress. Details on the syntax and semantics of $\AAWA$ are given in Appendix~\ref{APP:AAWA}. 
     We denote by \emph{Hyper
$\AAWA$} the class of linear-time hyper sentences over the multi-trace specifications given by parity $\AAWA$.
 We also consider the fixpoint calculus $\HU$ introduced in~\cite{GutsfeldOO21} that provides a logical characterization
  of  Hyper $\AAWA$.

\section{Advances in Asynchronous Extensions of $\HLTL$}\label{section:AsynchronousHLTLResults}

In this section, we investigate expressiveness and decidability issues
on known asynchronous extensions of $\HLTL$, namely,
\emph{Asynchronous $\HLTL$}~\cite{BaumeisterCBFS21}, \emph{Stuttering $\HLTL$}~\cite{BozzelliPS21}, and
\emph{Context $\HLTL$}~\cite{BozzelliPS21}.

\subsection{Results for Asynchronous $\HLTL$ (\AHLTL)}
\label{sec:AsynchronousHLTL}

We first recall $\AHLTL$~\cite{BaumeisterCBFS21}, a syntactical extension of $\HLTL$ which
allows to express pure asynchronous hyperproperties.
Then, we show that although $\AHLTL$ does not subsume $\HLTL$, $\HLTL$
can be embedded into $\AHLTL$ by means of an additional
proposition.
Second, we establish that $\AHLTL$ is subsumed by Hyper $\AAWA$ and
the fixpoint calculus $\HU$.
Finally, we show that unlike $\HLTL$, single-trace satisfiability of
$\AHLTL$ is undecidable.

\noindent The logic $\AHLTL$ models the asynchronous passage of time
between computation traces using the notion of a trajectory.
Given a non-empty subset $V\subseteq \Var$, a
\emph{trajectory over $V$} is an infinite sequence $t$ of non-empty
subsets of $V$.
Intuitively, the positions $i\geq 0$ along $t$ model
the global time flow and for each position $i\geq 0$, $t(i)$
determines the trace variables in $V$ whose associated traces make
progress at time $i$.  The trajectory $t$ is \emph{fair} if for each
$x\in V$, there are infinitely many positions $i$ such that
$x\in t(i)$.
%

$\AHLTL$ formulas are linear-time hyper sentences over multi-trace
specifications $\psi$, called \emph{$\AHLTL$ quantifier-free
  formulas}, where $\psi$ is of the form $\ET \theta$ or $\AT \theta$
and $\theta$ is a $\HLTL$ quantifier-free formula: $\ET$ is the
existential trajectory modality and $\AT$ is the universal trajectory
modality.
Given a pointed trace assignment $\Pi$ and a trajectory $t$ over
$\Dom(\Pi)$, the \emph{successor of $(\Pi,t)$}, denoted by
$(\Pi,t) + 1$, is defined as $(\Pi',t')$, where:
\begin{inparaenum}[(1)]
\item   $t'$ is the trajectory $t^{1}$ (the suffix of $t$ from position $1$), and
\item $\Dom(\Pi')=\Dom(\Pi)$ and for each $x\in\Dom(\Pi)$ with $\Pi(x)=(\pi,i)$,
  $\Pi'(x)=(\pi,i+1)$ if $x\in t(0)$, and $\Pi'(x)=\Pi(x)$ otherwise.
 \end{inparaenum}

For each $k\geq 1$, we write $(\Pi,t)+k$ for denoting the pair $(\Pi'',t'')$ obtained by $k$-applications of the
successor function starting from $(\Pi,t)$.
Given a $\HLTL$ quantifier-free formula $\theta$ such that $\Dom(\Pi)$ contains the  variables
occurring in $\theta$, the satisfaction relations $\Pi   \models \ET\theta$, $\Pi   \models \AT\theta$, and   $(\Pi,t)\models \theta$ are defined as follows (we omit the semantics of the Boolean
connectives): 
  \[ \begin{array}{r@{\;\;}c@{\;\;}ll}
    \Pi   &\models& \ET\theta  &  \Leftrightarrow   \text{ for some fair trajectory $t$ over $\Dom(\Pi)$, }  (\Pi,t)\models \theta\\
      \Pi   &\models& \AT\theta  &  \Leftrightarrow   \text{ for all fair trajectories $t$ over $\Dom(\Pi)$, }  (\Pi,t)\models \theta\\
 (\Pi,t)  &\models& \Rel{p}{x}  &  \Leftrightarrow  \Pi(x)=(\pi,i) \text{ and }p\in \pi(i)\\
   (\Pi,t)  &\models&  \Next\theta &  \Leftrightarrow  (\Pi,t)+1 \models  \theta\\
   (\Pi,t)  &\models&  \theta_1\Until \theta_2  &
  \Leftrightarrow  \text{for some }i\geq 0:\,   (\Pi,t)+i \models  \theta_2 \text{ and }  (\Pi,t)+k \models  \theta_1 \text{ for all } 0\leq k<i
\end{array}
\]
%
We also exploit an alternative characterization of the semantics of quantifier-free $\AHLTL$ formulas which easily follows from
the definition of trajectories.

\begin{proposition}
  \label{prop:CharacterizationAHLTLSemantics}
  Let $\theta$ be a quantifier-free $\HLTL$ formula over trace
  variables $x_1,\ldots,x_k$, and let $\pi_1,\ldots,\pi_k$ be $k$
  traces. Then:
\begin{compactitem}
\item
  $\{x_1 \leftarrow (\pi_1,0),\ldots,x_k \leftarrow (\pi_k,0)\}\models
  \ET\theta$ iff for all $i\in [1,k]$, there is a stuttering expansion
  $\pi'_i$ of $\pi_i$ such that
  $\{x_1 \leftarrow (\pi'_1,0),\ldots,x_k \leftarrow
  (\pi'_k,0)\}\models \theta$.
\item
  $\{x_1 \leftarrow (\pi_1,0),\ldots,x_k \leftarrow (\pi_k,0)\}\models
  \AT\theta$ iff for all $i\in [1,k]$ and for all stuttering
  expansions $\pi'_i$ of $\pi_i$, it holds that
  $\{x_1 \leftarrow (\pi'_1,0),\ldots,x_k \leftarrow
  (\pi'_k,0)\}\models \theta$.
\end{compactitem}
\end{proposition}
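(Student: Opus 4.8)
The plan is to reduce the trajectory-based semantics to the ordinary lockstep ($\HLTL$) semantics evaluated on stuttering expansions, via a \emph{faithful correspondence} between fair trajectories over $\{x_1,\dots,x_k\}$ and tuples $(\pi'_1,\dots,\pi'_k)$ where each $\pi'_i$ is a stuttering expansion of $\pi_i$. Both bullets then fall out by reading the $\ET$ and $\AT$ trajectory quantifiers as, respectively, an existential and a universal quantifier over such tuples.

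First I would set up the correspondence. Given a fair trajectory $t$, I track the advance of $\pi_i$ by $p_i(j)=|\{m\in[0,j-1] : x_i\in t(m)\}|$, so that $(\Pi,t)+j$ assigns the pointed trace $(\pi_i,p_i(j))$ to $x_i$. Defining $\pi'_i(j):=\pi_i(p_i(j))$ yields a trace, and fairness guarantees that $p_i$ is non-decreasing, unbounded, and constant on maximal \emph{finite} blocks, which is exactly the requirement for $\pi'_i$ to be a stuttering expansion of $\pi_i$. Conversely, from stuttering expansions with block boundaries $0=b^{i}_{0}<b^{i}_{1}<\cdots$, I recover a trajectory by putting $x_i\in t(j)$ iff $j+1=b^{i}_{m}$ for some $m\geq 1$; infinitude of the boundary sequence yields fairness, and reading $\pi_i$ along this $t$ returns $\pi'_i$. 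Checking that these two maps are mutually inverse makes the correspondence a bijection onto the set of all tuples of stuttering expansions.

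The bridge between the two semantics is the pointwise observation that, for every $j$, an atom $p[x_i]$ holds at $(\Pi,t)+j$ iff $p\in\pi_i(p_i(j))=\pi'_i(j)$, i.e.\ iff it holds at position $j$ of the lockstep assignment $\{x_1\leftarrow(\pi'_1,0),\dots,x_k\leftarrow(\pi'_k,0)\}$. I would then prove, by structural induction on $\theta$ and quantifying over the global index, that $(\Pi,t)+j\models\theta$ iff the lockstep assignment satisfies $\theta$ at position $j$: the atomic and Boolean cases are immediate, the $\Next$ case uses the defining equation for $(\Pi,t)+(j+1)$, and the $\Until$ case transfers the witness index across the two readings unchanged. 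Instantiating at $j=0$ and composing with the correspondence gives, for $\ET$, that $\Pi\models\ET\theta$ iff some tuple of stuttering expansions satisfies $\theta$ in lockstep (the stated ``for each $i$ there is $\pi'_i$''), and dually for $\AT$.

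The step I expect to be the main obstacle is establishing the \emph{faithfulness} of the correspondence, above all its surjectivity: one must verify that \emph{every} tuple of stuttering expansions, with each coordinate chosen independently, is realized by a single fair trajectory. This forces correct bookkeeping at global steps where several traces stutter simultaneously, together with a precise match between the finiteness of each stuttering block and the fairness condition. Once this accounting is pinned down the induction is routine, and the per-coordinate shape of the quantification in the statement is justified exactly because each expansion $\pi'_i$ can be specified on its own coordinate.
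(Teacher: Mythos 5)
Your overall strategy (a trajectory $\leftrightarrow$ tuple-of-stuttering-expansions correspondence, followed by a pointwise structural induction) is the natural one; the paper itself gives no written proof of this proposition, asserting only that it ``easily follows from the definition of trajectories,'' so there is no argument to compare against beyond the definitions themselves. The trajectory-to-expansions direction and the induction in your plan are fine. But the step you yourself flag as ``the main obstacle'' is not bookkeeping --- it fails outright under the paper's stated definitions. A trajectory over $V$ is defined as an infinite sequence of \emph{non-empty} subsets of $V$, whereas your inverse map ($x_i \in t(j)$ iff $j+1$ is a block boundary of $\pi'_i$) yields $t(j) = \emptyset$ at every global step at which all $k$ expansions stutter simultaneously. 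Such tuples of expansions are not realized by any fair trajectory, and this cannot be repaired by re-choosing the expansions: take $k=1$, $\pi_1 \DefinedAs \{p\}\,\emptyset\,\{p\}\,\emptyset\cdots$ and $\theta \DefinedAs \Next p[x_1]$. The only fair trajectory over $\{x_1\}$ is $\{x_1\}^{\omega}$, so $\{x_1 \leftarrow (\pi_1,0)\} \models \ET\theta$ would require $p \in \pi_1(1)$, which fails; yet the stuttering expansion $\pi'_1 \DefinedAs \{p\}\{p\}\,\emptyset\,\{p\}\,\emptyset\cdots$ satisfies $\Next p[x_1]$ in the lockstep semantics, so the right-hand side of the first bullet holds. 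The dual example ($\theta = \Next\neg p[x_1]$) breaks the second bullet.

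So, read literally against the paper's definition of trajectory, the equivalence is false, and no accounting at simultaneous-stutter steps will make the needed surjectivity true. Your proof (and the proposition) become correct precisely when trajectories are allowed to take the value $\emptyset$ at individual positions (a global stutter step), with fairness unchanged: then your two maps are mutually inverse up to the non-uniqueness of block decompositions, the atomic correspondence $(\Pi,t)+j \models p[x_i]$ iff $p \in \pi'_i(j)$ holds, and the induction over $\theta$ closes both bullets exactly as you describe. The honest conclusion of your attempt should therefore have been either to state this relaxation of the trajectory definition explicitly as a hypothesis (it is evidently what the paper intends, since later appendix proofs rely on the expansion-based reading), or to report the counterexample; asserting that the obstacle is routine is the one point at which the proposal goes wrong.
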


\noindent \textbf{$\AHLTL$ versus  $\HLTL$.}
We now show that, unlike other temporal logics for asynchronous
hyperproperties (see Sections~\ref{sec:StutteringHLTL}
and~\ref{sec:ContextHyper}), $\AHLTL$ does not subsume $\HLTL$.
Given an atomic proposition $p$, we consider the following linear-time
hyperproperty.

\noindent\fbox{
\parbox{0.97\textwidth}{
  \textbf{$p$-synchronicity:} a set $\Lang$ of traces  satisfies the  $p$-synchronicity hyperproproperty if for all traces   $\pi,\pi'\in\Lang$ and positions $i\geq 0$, $p\in \pi(i)$ iff $p\in \pi'(i)$.}
}\vspace{0.1em}

This hyperproperty can be expressed in $\HLTL$ as follows:
$\forall x_1.\,\forall x_2.\, \Always (p[x_1] \leftrightarrow
p[x_2])$.
However, it cannot be expressed in $\AHLTL$ (for details, see
Appendix~\ref{APP:AHLTLinexpressibilitypSynchronicity}).

\begin{theorem}\label{theo:AHLTLinexpressibilitypSynchronicity}
  $\AHLTL$ cannot express $p$-synchronicity. Hence, $\AHLTL$ does not
  subsume $\HLTL$.
\end{theorem}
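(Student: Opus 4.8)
The plan is to prove inexpressibility by exhibiting, for every fixed $\AHLTL$ sentence, two sets of traces---one satisfying $p$-synchronicity and one violating it---on which the sentence agrees. I will work over $\AP=\{p\}$, where the property simplifies drastically: since a trace over $\{p\}$ is completely determined by the positions at which $p$ holds, a set $\Lang$ is $p$-synchronous iff all its traces coincide. Thus over this alphabet expressing $p$-synchronicity is the same as expressing ``$\Lang$ is a singleton'', and the obstruction I exploit is that $\AHLTL$ cannot separate a trace from a distinct but stuttering-equivalent one in the rigid, synchronous way that $p$-synchronicity demands.

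For the models, fix $\pi=\{p\}\emptyset^\omega$ and, for $k\geq 1$, let $u_k=\{p\}^k\emptyset^\omega$; all the $u_k$ are stuttering expansions of $\pi$, hence pairwise stuttering equivalent. Given a candidate sentence $\xi=Q_1x_1\cdots Q_nx_n.\,\psi$ with $\psi\in\{\ET\theta,\AT\theta\}$, let $d$ be the temporal nesting depth of $\theta$ and choose $N\neq N'$ both larger than $d$. I set $\Lang_{\mathit{yes}}=\{u_N\}$ (a singleton, hence $p$-synchronous) and $\Lang_{\mathit{no}}=\{u_N,u_{N'}\}$ (two distinct traces, hence not $p$-synchronous), and I argue that $\xi$ cannot tell them apart.

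The core is a length-indistinguishability lemma for the quantifier-free part: for every initial assignment $\Pi$ whose values lie in $\{u_N,u_{N'}\}$ and every $\Pi'$ obtained from $\Pi$ by replacing some coordinates $u_{N'}$ with $u_N$ (or vice versa), one has $\Pi\models\psi$ iff $\Pi'\models\psi$. By Proposition~\ref{prop:CharacterizationAHLTLSemantics}, evaluating $\ET\theta$ (resp.\ $\AT\theta$) amounts to checking whether $\theta$ holds synchronously for some (resp.\ every) tuple of stuttering expansions of the assigned traces. Every stuttering expansion of a $u_k$ has the form $\{p\}^M\emptyset^\omega$ with $M\geq k$, so the synchronous joint word read by $\theta$ over such a tuple is determined, up to depth-$d$ $\LTL$ equivalence, only by the relative order of the block lengths and by which gaps between them exceed $d$---never by their exact values once these exceed $d$. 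Since $N,N'>d$, the family of depth-$d$ joint types realizable by expanding $u_N$ in a coordinate is identical to the one realizable by expanding $u_{N'}$, so the existence (for $\ET$) and the universality (for $\AT$) of a $\theta$-satisfying expansion tuple are unaffected by the replacement. The back-and-forth (Ehrenfeucht--Fra\"iss\'e) argument for $\LTL$ on ultimately-$\emptyset$ words, lifted through the expansion quantifier, is the technical heart and the main obstacle: one must check that the realizable depth-$d$ types coincide simultaneously for all $n$ coordinates and uniformly for the existential and the universal trajectory modality, despite $\AHLTL$ not being stuttering-invariant in general---nested $\Next$ operators do count positions, which is exactly why the hypothesis $N,N'>d$ is required.

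Granting the lemma, I conclude by induction on the quantifier prefix that $\Lang_{\mathit{yes}}\models\xi$ iff $\Lang_{\mathit{no}}\models\xi$: the only extra witness available in $\Lang_{\mathit{no}}$ is $u_{N'}$, and by the lemma binding a variable to $u_{N'}$ is interchangeable with binding it to $u_N$ inside $\psi$, so each existential and each universal quantifier yields the same truth value over both models. Hence no $\AHLTL$ sentence distinguishes the $p$-synchronous $\Lang_{\mathit{yes}}$ from the non-$p$-synchronous $\Lang_{\mathit{no}}$, proving that $\AHLTL$ cannot express $p$-synchronicity. Since $\forall x_1.\,\forall x_2.\,\Always(\Rel{p}{x_1}\leftrightarrow\Rel{p}{x_2})$ expresses it in $\HLTL$, it follows that $\AHLTL$ does not subsume $\HLTL$.
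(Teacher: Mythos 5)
Your proposal is correct and follows essentially the same route as the paper: the appendix proof uses exactly the models $\{p^n\emptyset^\omega\}$ versus $\{p^n\emptyset^\omega,\,p^{n+1}\emptyset^\omega\}$ with $n$ exceeding the formula's next-depth, reduces $\ET\theta$/$\AT\theta$ to tuples of stuttering expansions via Proposition~\ref{prop:CharacterizationAHLTLSemantics}, proves your ``length-indistinguishability'' lemma by uniformly prepending a $p$ to every witness expansion (the same shift/inflation idea underlying your type-family claim), and concludes by the same induction on the quantifier prefix. The only cosmetic differences are that the paper treats $\AT\theta$ by dualizing to $\ET\neg\theta$ instead of arguing universality directly, and that its core claim---positions $i$ and $i+1$ of a joint word with a long constant prefix are indistinguishable at bounded next-depth---is the single-shift special case of the Ehrenfeucht--Fra\"iss\'e fact you defer as the ``technical heart.''
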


\noindent Though $\AHLTL$ does not subsume $\HLTL$, we can embed $\HLTL$
into $\AHLTL$ by using an additional proposition $\#\notin \AP$ as
follows.
We can ensure that along a trajectory, traces progress at each global
instant by requiring that proposition $\#$ holds exactly at the even
positions.
Formally, given a trace $\pi$ over $\AP$, we denote by $enc_\#(\pi)$
the trace over $\AP\cup\{\#\}$ defined as:
$enc_\#(\pi)(2i)=\pi(2i)\cup \{\#\}$ and $enc_\#(\pi)(2i+1)=\pi(2i+1)$
for all $i\geq 0$. We extend the encoding $enc_\#$ to sets of traces
$\Lang$ and assignments $\Pi$ over $\AP$ in the obvious
way.
For each $x\in\Var$, let $\theta_\#(x)$ be the following one-variable
quantifier-free $\HLTL$ formula:
$ \#[x]\wedge \Always (\#[x] \leftrightarrow \neg \Next \#[x])$.
It is easy to see that for a trace $\rho$ over $\AP\cup \{\#\}$, a
stuttering expansion $\rho'$ of $\rho$ satisfies $\theta_\#(x)$ with
$x$ bound to $\rho'$ iff $\rho'=\rho$ and $\rho$ is the $\#$-encoding
of some trace over $\AP$.
It follows that satisfiability of an $\HLTL$ formula $\varphi$ can be
reduced in linear-time to the satisfiability of the $\AHLTL$ formula
$\varphi_\#$ obtained from $\varphi$ by replacing the quantifier-free
part $\psi(x_1,\ldots,x_k)$ of $\varphi$ with the quantifier-free
$\AHLTL$ formula 
$\ET.\,(\psi(x_1,\ldots,x_k)\wedge \bigwedge_{i\in
  [1,k]}\theta_\#(x_i))$.
For model checking, given a Kripke structure $\Ku=\tpl{S,S_0,E,V}$, we
construct in linear-time a Kripke structure $\Ku_\#$ over
$\AP\cup \{\#\}$ such that
$\Lang(\Ku_\#)=enc_\#(\Lang(\Ku))$. Formally,
$\Ku_\#=\tpl{S\times \{0,1\},$ $S_0\times \{1\},E',V'}$ where
$E'=\{((s,b),(s',1-b))\mid (s,s')\in E \text{ and }b=0,1\}$,
$V'((s,1))=V(s)\cup\{\#\}$ and $V'((s,0))=V(s)$ for all $s\in
S$. Thus, we obtain the following
result. 

\begin{theorem} Satisfiability (resp., model checking) of $\HLTL$ can
  be reduced in linear-time to satisfiability (resp., model checking)
  of $\AHLTL$.
\end{theorem}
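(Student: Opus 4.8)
The plan is to derive both reductions from a single observation: once each quantified trace is forced by $\theta_\#$ to be a $\#$-encoding, the existential trajectory modality $\ET$ can only select the lockstep trajectory, so the $\AHLTL$ matrix evaluates exactly like the synchronous $\HLTL$ matrix. Concretely, I would first isolate the following key lemma. For traces $\pi_1,\dots,\pi_k$, the assignment $\{x_i\leftarrow(\pi_i,0)\}_{i}$ satisfies $\ET\bigl(\psi\wedge\bigwedge_{i}\theta_\#(x_i)\bigr)$ if and only if every $\pi_i$ is a $\#$-encoding of some trace over $\AP$ and $\{x_i\leftarrow(\pi_i,0)\}_{i}\models\psi$ under the standard synchronous $\HLTL$ semantics. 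This is immediate from Proposition~\ref{prop:CharacterizationAHLTLSemantics} together with the already-noted property of $\theta_\#$: by the proposition the left-hand side amounts to choosing stuttering expansions $\pi'_i$ of $\pi_i$ whose synchronous evaluation satisfies $\psi\wedge\bigwedge_{i}\theta_\#(x_i)$, and $\theta_\#(x_i)$ admits such an expansion $\pi'_i$ only when $\pi_i$ is an encoding and $\pi'_i=\pi_i$, since the alternation of $\#$ forbids any nontrivial stuttering. With all expansions forced to be the identity, the choice of trajectory collapses to the lockstep one, under which the trajectory-based clauses for $\Next$ and $\Until$ coincide with the ordinary $\HLTL$ clauses, yielding the equivalence with $\{x_i\leftarrow(\pi_i,0)\}_{i}\models\psi$.

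Given the lemma, the satisfiability reduction is a routine induction on the quantifier prefix. For the forward direction, if $\Lang\models\varphi$ I would take the model $enc_\#(\Lang)$; since $enc_\#$ is a bijection between $\Lang$ and $enc_\#(\Lang)$ that preserves all propositions in $\AP$, and every trace of $enc_\#(\Lang)$ is an encoding, the lemma lets me replace each evaluation of the $\AHLTL$ matrix on encoded traces by the corresponding synchronous evaluation of $\psi$, giving $enc_\#(\Lang)\models\varphi_\#$. For the converse, suppose $\Lang_\#\models\varphi_\#$ and let $\Lang'\subseteq\Lang_\#$ be the subset of $\#$-encodings. The lemma shows that a non-encoding trace can neither serve as an existential witness nor lie in the range of a satisfied universal quantifier (in both cases the $\theta_\#$-conjunct falsifies the matrix); hence $\Lang'\models\varphi_\#$ as well, and stripping $\#$ from the traces of $\Lang'$ yields a set $\Lang$ of traces over $\AP$ with $\Lang\models\varphi$. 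Thus $\varphi$ is satisfiable iff $\varphi_\#$ is.

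For model checking I would use the Kripke structure $\Ku_\#$ already constructed, for which $\Lang(\Ku_\#)=enc_\#(\Lang(\Ku))$. Here every trace is an encoding, so no restriction step is needed: the bijection $enc_\#$ between $\Lang(\Ku)$ and $\Lang(\Ku_\#)$ together with the key lemma gives $\Lang(\Ku)\models\varphi$ iff $\Lang(\Ku_\#)\models\varphi_\#$ directly. Finally, linearity is clear: $\varphi_\#$ replaces the matrix $\psi$ by $\ET\bigl(\psi\wedge\bigwedge_{i}\theta_\#(x_i)\bigr)$, adding only $k$ copies of the constant-size formula $\theta_\#$, so $|\varphi_\#|=O(|\varphi|)$; and $\Ku_\#$ merely doubles the state space, so $|\Ku_\#|=O(|\Ku|)$.

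I expect the only genuinely delicate point to be the backward direction of the satisfiability reduction, namely justifying that restricting a model of $\varphi_\#$ to its encoding traces preserves satisfaction in the presence of universal quantifiers; everything else is bookkeeping, since the conceptual core---that $\ET$ together with $\theta_\#$ simulates synchronous time---is already supplied by Proposition~\ref{prop:CharacterizationAHLTLSemantics} and the discussion preceding the statement.
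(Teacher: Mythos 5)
Your proposal is correct and follows essentially the same route as the paper: the encoding $enc_\#$, the formula $\theta_\#$, the matrix replacement by $\ET\bigl(\psi\wedge\bigwedge_i\theta_\#(x_i)\bigr)$, the doubled Kripke structure $\Ku_\#$, and the reliance on Proposition~\ref{prop:CharacterizationAHLTLSemantics} together with the rigidity of $\#$-alternating traces under stuttering expansion are exactly the paper's argument. You merely make explicit the details the paper leaves implicit (the key lemma and the induction over the quantifier prefix, including the restriction to encoding traces in the backward satisfiability direction), and these details are sound.
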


\noindent \textbf{$\AHLTL$ versus Hyper $\AAWA$ and $\HU$.}
We show that $\AHLTL$ is subsumed by Hyper $\AAWA$ and $\HU$.
To this purpose, we exhibit an exponential-time translation of
quantifier-free $\AHLTL$ formulas into equivalent parity \AAWA.

\begin{theorem}\label{theo:FromAHLTLtoAAWA}
  Given an $\AHLTL$ quantifier-free formula $\psi$ with trace
  variables $x_1,\ldots,x_n$, one can build in singly exponential time
  a parity $\NAAWA$ $\Au_{\psi}$ over $2^{\AP}$ accepting the set of
  $n$-tuples $(\pi_1,\ldots,\pi_n)$ of traces such that
  $(\{x_1\leftarrow (\pi_1,0),\ldots,x_n\leftarrow (\pi_n,0)\})\models
  \psi$.
\end{theorem}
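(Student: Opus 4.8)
The plan is to reduce the problem to running an $\omega$-automaton on a \emph{synchronous} word and to use Proposition~\ref{prop:CharacterizationAHLTLSemantics} as the bridge. A trajectory $t$ over $\{x_1,\dots,x_n\}$ is equivalently described by $n$ monotone pointers $h_1,\dots,h_n$ into the traces $\pi_1,\dots,\pi_n$: at each global instant each $h_k$ either stays or moves one position to the right, at least one pointer moves, and (by fairness, equivalently by Proposition~\ref{prop:CharacterizationAHLTLSemantics}) each $h_k$ tends to infinity. Reading the tuple $(\pi_1(h_1),\dots,\pi_n(h_n))$ at every global instant yields a synchronous word $w_t$ over the product alphabet $(2^{\AP})^n$, and $(\Pi,t)\models\theta$ holds iff $w_t$ satisfies $\theta$ read as a plain $\LTL$ formula over $(2^{\AP})^n$ (with $p[x_k]$ inspecting the $k$-th component). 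Hence $\ET\theta$ (resp.\ $\AT\theta$) holds iff some (resp.\ every) fair trajectory induces a synchronous word modelling $\theta$. A parity $\NAAWA$ is the natural device: its heads realise the pointers, and since at each step it sees the current symbol of every tape it can evaluate $\theta$ on $w_t$, while the direction $i$ of a move selects which pointer advances.

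The hard part will be a synchronisation issue, and it is the crux of the whole construction. The naive attempt---plug in the linear-size \emph{alternating} automaton for $\theta$ and let every move also choose the advancing tape---is \emph{wrong}, because in an $\AAWA$ the copies spawned by alternation own independent heads, so branches may make incompatible trajectory choices and drift apart. For $\ET\theta$ the offending branching is conjunctive: reading $\Next\Next p[x_1]\wedge\Next\Next q[x_2]$, the two conjuncts would be allowed to use \emph{different} trajectories. Taking $p$ only at position $1$ of $\pi_1$ and $q$ only at position $0$ of $\pi_2$, no single trajectory can place the two heads at $(1,0)$ after two steps (their positions sum to at least $2$), yet the independent conjuncts can, so such a construction accepts tuples for which $\ET\theta$ is false. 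A dual defect afflicts $\AT\theta$ when an $\LTL$ disjunction is resolved before the full trajectory is known. A correct automaton must therefore commit \emph{all} branches to one and the same trajectory.

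The fix is to delete alternation from the direction that has to fix a single trajectory. For $\ET\theta$ I first build a nondeterministic B\"uchi automaton $\mathcal{B}_\theta$ over $(2^{\AP})^n$ equivalent to $\theta$, of singly exponential size. I then assemble a \emph{nondeterministic} asynchronous parity automaton: in a synchronisation state carrying a state $q$ of $\mathcal{B}_\theta$ it reads the current tuple $(\pi_1(h_1),\dots,\pi_n(h_n))$, fires a transition of $\mathcal{B}_\theta$ to some $q'$, guesses the non-empty set $S\subseteq[1,n]$ of tapes to advance, and then advances exactly the tapes of $S$ one at a time through auxiliary ``transit'' states---needed because a single $\AAWA$ move progresses only one tape---returning to a synchronisation state in $\mathcal{B}_\theta$-state $q'$. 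Since the whole run is nondeterministic there is a single thread, so the guessed sets form one genuine trajectory shared by the entire computation, which is precisely what the obstacle above demanded. The acceptance condition conjoins the B\"uchi condition of $\mathcal{B}_\theta$ with the $n$ fairness requirements ``tape $k$ advances infinitely often'' (ensuring each pointer is a true stuttering expansion); this generalised B\"uchi condition is converted to an equivalent parity condition, with transit states given a neutral priority.

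For $\AT\theta$ I exploit $\AT\theta\equiv\neg\,\ET\neg\theta$, which is sound because for $n\ge 1$ fair trajectories always exist: I build the nondeterministic automaton above from $\mathcal{B}_{\neg\theta}$ and take its dual, swapping $\vee/\wedge$ in the transition function and complementing the parity index, obtaining a \emph{universal} parity $\NAAWA$ that recognises the complement, i.e.\ exactly the $n$-tuples satisfying $\AT\theta$; here I rely on closure of parity $\AAWA$ under complementation, which follows from the usual dualization of alternating parity automata (determinacy of the parity acceptance game) and is available in the framework of~\cite{GutsfeldOO21}. For the size bound, $\mathcal{B}_\theta$ and $\mathcal{B}_{\neg\theta}$ are singly exponential in $|\theta|$, the synchronisation and transit layer adds a factor $2^{O(n)}\cdot n$ (subsets of $[1,n]$ plus the tape still to advance) together with the bounded bookkeeping of the generalised-B\"uchi-to-parity step, and dualization is linear; thus $\Au_\psi$ is produced in singly exponential time, as required.
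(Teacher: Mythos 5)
Your proposal is correct and follows essentially the same route as the paper: for $\ET\theta$ both constructions build a \emph{nondeterministic} asynchronous automaton (so that a single thread fixes one trajectory shared by the entire run, exactly the synchronisation issue you identify), simulate the standard LTL/B\"{u}chi machinery for $\theta$ on the tuple of currently-read symbols, organise the run into phases in which a guessed non-empty set of tapes is advanced one move at a time, and use a generalized B\"{u}chi condition (until-fulfilment plus per-direction fairness) converted to parity, while for $\AT\theta$ both complement the automaton for $\ET\neg\theta$ via the linear-time complementation of parity $\AAWA$ from~\cite{GutsfeldOO21}. The remaining differences are cosmetic: you plug in a Vardi--Wolper B\"{u}chi automaton as a black box and track the set of tapes still to advance, whereas the paper inlines the atom construction and advances the chosen directions in strictly increasing order.
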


\begin{proof}[Proof sketch] We first assume that $\psi$ is of the form
  $\ET\theta$ for some $\HLTL$ quantifier-free formula $\theta$.
  By an adaptation of the standard automata theoretic approach for
  \LTL~\cite{VardiW94}, we construct a \emph{nondeterministic}
  $\NAAWA$ ($\NNAWA$) $\Au_{\ET\theta}$ equipped with standard
  generalized B\"{u}chi acceptance conditions which accepts a
  $n$-tuple $(\pi_1,\ldots,\pi_n)$ of traces iff there is a fair
  trajectory $t$ such that
  $(\{x_1\leftarrow (\pi_1,0),\ldots,x_n\leftarrow
  (\pi_n,0)\}),t\models \theta$.
  By standard arguments, a generalized B\"{u}chi $\NNAWA$ can be
  converted in quadratic time into an equivalent parity $\NNAWA$.
  The behaviour of the automaton is subdivided into phases where each
  phase intuitively corresponds to a global timestamp.
  During a phase, $\Au_{\ET\theta}$ keeps tracks in its state of the
  guessed set of subformulas of $\theta$ that hold at the current
  global instant and guesses which traces progress at the next global
  instant by moving along a non-empty guessed set of directions in
  $\{1,\ldots,n\}$ in turns. In particular, after a movement along
  direction $i$, the automaton keeps track in its state of the
  previous chosen direction $i$ and \emph{either} moves to the next
  phase, \emph{or} remains in the current phase by choosing a
  direction $j>i$.  The transition function in moving from the end of
  a phase to the beginning of the next phase captures the semantics of
  the next modalities and the `local' fixpoint characterization of the
  until modalities.  Moreover, the generalized B\"{u}chi acceptance
  condition is used for ensuring the fulfillment of the liveness
  requirements $\theta_2$ in the until sub-formulas
  $\theta_1 \Until \theta_2$, and for guaranteeing that the guessed
  trajectory is fair (i.e., for each direction $i\in [1,n]$, the
  automaton moves along $i$ infinitely often). Details of the
  construction are given in Appendix~\ref{APP:FromExistentialAHLTL}.

Now, let  us consider a quantifier-free $\AHLTL$ formula of the form $\AT\theta$
with trace variables $x_1,\ldots,x_n$, and let $\Au_{\ET\neg\theta}$ be the parity $\NAAWA$ associated with the formula
$\ET\neg\theta$. By~\cite{GutsfeldOO21}, one can construct in linear-time (in the size of $\Au_{\ET\neg\theta}$), a parity
$\NAAWA$ $\Au_{\AT\theta}$ accepting the complement of the language of $n$-tuples of traces accepted by $\Au_{\ET\neg\theta}$.
\end{proof}

\noindent Thus, being $\HU$ and Hyper $\AAWA$  expressively equivalent, we obtain the following result.

 \begin{corollary}\label{cor:FromAHLTLToHU}
Hyper $\AAWA$  subsumes $\AHLTL$. $\HU$ also subsumes $\AHLTL$.
\end{corollary}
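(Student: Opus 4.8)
The plan is to reduce the statement to Theorem~\ref{theo:FromAHLTLtoAAWA}, exploiting the fact that $\AHLTL$ sentences and Hyper $\AAWA$ sentences share exactly the same quantifier prefix structure: both are instances of linear-time hyper expressions, built from the trace quantifiers $\exists x$ and $\forall x$ over a class of multi-trace specifications. The two logics differ \emph{only} in the formalism used for the quantifier-free kernel---$\AHLTL$ quantifier-free formulas $\psi(x_1,\dots,x_n)$ in one case, parity $\AAWA$ over $2^{\AP}$ in the other. Since the semantics of the trace quantifiers given in the Preliminaries refers to the kernel only through its set of satisfying initial pointed-trace assignments, it suffices to replace each $\AHLTL$ kernel by an equivalent parity $\AAWA$ without touching the surrounding prefix.

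First I would take an arbitrary $\AHLTL$ sentence $\varphi = Q_1 x_1 \cdots Q_n x_n.\,\psi(x_1,\dots,x_n)$, where each $Q_i\in\{\exists,\forall\}$, the prefix binds all free variables of $\psi$, and $\psi$ is an $\AHLTL$ quantifier-free formula. Applying Theorem~\ref{theo:FromAHLTLtoAAWA} to $\psi$ yields a parity $\NAAWA$ $\Au_\psi$ over $2^{\AP}$ accepting precisely the $n$-tuples $(\pi_1,\dots,\pi_n)$ with $\{x_1\leftarrow(\pi_1,0),\dots,x_n\leftarrow(\pi_n,0)\}\models\psi$; that is, $\Au_\psi$ and $\psi$ denote the same multi-trace specification. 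I would then form the Hyper $\AAWA$ sentence $\varphi' = Q_1 x_1 \cdots Q_n x_n.\,\Au_\psi(x_1,\dots,x_n)$, obtained by leaving the prefix untouched and substituting $\Au_\psi$ for $\psi$.

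Next I would verify the equivalence of $\varphi$ and $\varphi'$ by a straightforward induction on the quantifier prefix, following the inductive definition of $(\Lang,\Pi)\models\xi$: at the base case the two kernels agree on every initial assignment by construction, and each quantifier step preserves equivalence because $(\Lang,\Pi)\models Q x.\,\xi$ is defined purely in terms of satisfaction of $\xi$ over the extended assignments $\Pi[x\mapsto(\pi,0)]$. This establishes $\AHLTL\leq$ Hyper $\AAWA$. Finally, since $\HU$ and Hyper $\AAWA$ are expressively equivalent (recalled in the Preliminaries from~\cite{GutsfeldOO21}), subsumption of $\AHLTL$ by $\HU$ follows by composing the two inclusions.

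As for difficulty: essentially all the technical work is already discharged by Theorem~\ref{theo:FromAHLTLtoAAWA}, so there is no genuine obstacle here. The only point requiring a modicum of care is the observation that equivalence at the level of quantifier-free kernels lifts unchanged through the trace-quantifier prefix---which is immediate once one notes that both logics fit the same abstract linear-time hyper expression machinery and that the quantifier semantics is insensitive to the syntactic presentation of the kernel.
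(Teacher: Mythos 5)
Your proposal is correct and matches the paper's own argument: the paper derives this corollary directly from Theorem~\ref{theo:FromAHLTLtoAAWA} together with the expressive equivalence of $\HU$ and Hyper $\AAWA$ from~\cite{GutsfeldOO21}, exactly as you do. The only difference is that you explicitly spell out the (routine) lifting of kernel-level equivalence through the quantifier prefix, which the paper treats as immediate from the shared abstract semantics of linear-time hyper expressions.
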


\noindent \textbf{Undecidability of single-trace satisfiability for $\AHLTL$.}
It is easy to see that for $\HLTL$, single-trace satisfiability
corresponds to $\LTL$ satisfiability (hence, it is \PSPACE-complete).
We show now that for $\AHLTL$, the problem is highly undecidable being at least 
$\Sigma_{1}^{1}$-hard.
The crucial observation is that we can enforce alignment
requirements on distinct stuttering expansions of the same trace which
allow to encode recurrent computations of \emph{non-deterministic
  $2$-counter machines}~\cite{Harel86}. Recall that such a machine is
a tuple $M = \tpl{Q,\Delta,\delta_\init,\delta_\rec}$, where $Q$ is a
finite set of (control) locations,
$\Delta \subseteq Q\times \Inst \times Q$ is a transition relation
over the instruction set $\Inst= \{\inc,\dec,\zero\}\times \{1,2\}$,
and $\delta_\init\in \Delta$ and $\delta_\rec\in \Delta$ are two
designated transitions, the initial and the recurrent one.

An $M$-configuration is a pair $(\delta,\nu)$ consisting of a transition $\delta\in \Delta$ and a counter valuation $\nu: \{1,2\}\to \N$. A  computation of $M$ is an \emph{infinite} sequence of configurations of the form $((q_0,(\instr_0,c_0),q_1),\nu_0),((q_1,(\instr_1,c_1),q_2),\nu_1),\ldots$  such that for each $i\geq 0$:
 %
\begin{compactitem}
\item    $\nu_{i+1}(3-c_i)= \nu_i(3-c_i)$ ;
\item  $\nu_{i+1}(c_i)= \nu_i(c_i) +1$ if $\instr_i=\inc$, and $\nu_{i+1}(c_i)= \nu_i(c_i) -1$ if $\instr_i=\dec$;
\item  $\nu_{i+1}(c_i)= \nu_i(c_i)=0$ if $\instr_i= \zero$.
\end{compactitem}

The recurrence problem is to decide whether for a given machine
$M$, there is a computation starting at the initial configuration
$(\delta_\init,\nu_0)$, where $\nu_0(c)=0$ for each $c\in \{1,2\}$,
which visits $\delta_\rec$ infinitely often.
This problem is known to be $\Sigma_{1}^{1}$-complete~\cite{Harel86}.

\begin{theorem}\label{theo:UndecidabilitySIngleTraceAHLTL}
The single-trace satisfiability problem of $\AHLTL$ is at least $\Sigma_{1}^{1}$-hard.
\end{theorem}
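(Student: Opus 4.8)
The plan is to reduce the recurrence problem for nondeterministic 2-counter machines, which is $\Sigma_1^1$-complete~\cite{Harel86}, to single-trace satisfiability of $\AHLTL$. Given a machine $M=\tpl{Q,\Delta,\delta_\init,\delta_\rec}$, I will construct an $\AHLTL$ sentence $\xi_M$ such that $\xi_M$ has a single-trace model iff $M$ has a recurrent computation starting from the initial zero configuration. The key insight, as the excerpt indicates, is that $\AHLTL$ can enforce nontrivial \emph{alignment constraints between distinct stuttering expansions of one and the same trace}: on a single trace $\pi$, the quantifiers $\forall$ and $\exists$ all range over the same $\pi$, but the trajectory modalities $\ET,\AT$ still let us compare $\pi$ against its stuttering expansions (by Proposition~\ref{prop:CharacterizationAHLTLSemantics}). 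This comparison power is exactly what is needed to relate successive counter values encoded in unary.

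First I would fix an encoding of a computation as a single trace over a suitable alphabet $2^{\AP}$. A configuration $(\delta,\nu)$ is written as a finite block that records the transition $\delta$ (from a finite set, hence encodable by propositions) followed by a unary encoding of the two counter values $\nu(1),\nu(2)$ (e.g.\ $\nu(c)$ copies of a dedicated marker proposition, separated by delimiters). The whole trace is the infinite concatenation of such configuration blocks, with a marker proposition identifying block boundaries. The $\LTL$-definable control-flow constraints---that consecutive blocks carry transitions $\delta,\delta'$ that are composable in $\Delta$, that the instruction applied is consistent with $\delta$, that the initial block is $\delta_\init$ with both counters zero, that the zero-test instruction $\zero$ is taken only when the tested counter block is empty, and that $\delta_\rec$ occurs infinitely often---are all expressible directly in the $\HLTL$ quantifier-free part, since they are local or $\omega$-regular and a single variable suffices to read them off one trace.

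The hard part---and the genuine obstacle---is enforcing the \emph{counter-faithfulness} conditions $\nu_{i+1}(c_i)=\nu_i(c_i)\pm 1$ and $\nu_{i+1}(3-c_i)=\nu_i(3-c_i)$, which are counting constraints and \emph{not} $\omega$-regular, so they cannot be captured by the $\HLTL$ part alone. Here the plan is to exploit trajectories. The idea is to take two trace variables $x,y$ bound to the \emph{same} trace $\pi$, and to use a quantifier-free $\AHLTL$ formula of the form $\ET\theta$ (or $\AT\theta$) that, via Proposition~\ref{prop:CharacterizationAHLTLSemantics}, asserts the existence of stuttering expansions $\pi',\pi''$ of $\pi$ aligning the $i$-th counter block read along $x$ with the $(i{+}1)$-th counter block read along $y$. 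By stuttering, individual marker symbols in one copy can be duplicated to "wait" for the matching symbol in the other copy, so that a pointwise $\HLTL$ formula comparing $x$ and $y$ position by position can verify a length relationship (equality, or off-by-one) between the two unary blocks. The delicate engineering is to design $\theta$ so that the trajectory is \emph{forced} to realize precisely this shifted-by-one-block alignment and no other---otherwise a spurious expansion could satisfy $\ET\theta$ and break soundness---and symmetrically that every faithful computation admits such an expansion, giving completeness. I expect managing this forcing, i.e.\ pinning down the trajectory's freedom using a combination of boundary markers and pointwise synchronization conditions, to be the main technical burden. Once the counter-faithfulness gadget is correct, the full sentence $\xi_M$ is the conjunction of the $\LTL$-definable control constraints with the trajectory-based counting constraints, quantified appropriately over the single trace, and the reduction's correctness follows by matching recurrent computations of $M$ with single-trace models of $\xi_M$.
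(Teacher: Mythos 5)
Your overall reduction coincides with the paper's: both reduce the $\Sigma_1^1$-complete recurrence problem for nondeterministic $2$-counter machines to single-trace satisfiability, both encode a computation as an infinite concatenation of configuration blocks with unary counters, and both use a sentence of the form $\exists x_1.\exists x_2.\,\ET\,\psi$ in which the two variables are necessarily bound to the same trace $\pi$, so that Proposition~\ref{prop:CharacterizationAHLTLSemantics} turns $\ET$ into a comparison between two stuttering expansions of $\pi$, one shifted by one block relative to the other, with the successor relation between consecutive configurations checked pointwise on the aligned blocks.

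The genuine gap is that you leave unsolved precisely the step that carries the whole argument, namely how to force the stuttering expansions to preserve the encoded configuration sequence; worse, the mechanism you sketch points the wrong way. You propose to let stuttering duplicate marker symbols \emph{inside} the unary counter blocks so that one copy can ``wait'' for the other while a pointwise formula verifies an off-by-one length relation. This breaks soundness: since $\pi_1$ and $\pi_2$ are expansions of the same trace $\pi$, allowing counter positions to be stuttered means $\pi_1$ and $\pi_2$ may encode \emph{inflated} counter values differing from those of $\pi$ (and from each other), so the successor constraints checked between them say nothing about any genuine computation of $M$, and spurious single-trace models arise. The paper's construction supplies exactly the missing idea: inside the counter region of each segment, the proposition $\#$ is required to hold at precisely the odd positions, so stuttering any counter position destroys the strict $\#$-alternation and violates the segment well-formedness enforced by $\psi$; consequently all stuttering is pushed into a dedicated $\Pad$-block appended to each segment. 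This makes the counter regions stutter-rigid (both expansions encode the same configuration sequence as $\pi$) while the padding provides exactly the slack needed to align the $(i{+}1)$-th segment of $\pi_2$ with the $i$-th segment of $\pi_1$ and give them equal length, after which increment, decrement and zero-test are verified by plain until-formulas. Without this stutter-rigidity gadget, or an equivalent device pinning down what the trajectory may expand, the reduction you describe is not sound.
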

\begin{proof}[Proof sketch]
  Let $M = \tpl{Q,\Delta,\delta_\init,\delta_\rec}$ be a counter
  machine.
  We construct a two-variable $\AHLTL$ formula $\varphi_M$ such that
  $M$ is a positive instance of the recurrence problem if and only if
  $\varphi_M$ has a single-trace model.
  %
  %
  The set of atomic propositions is
  $\AP\DefinedAs \Delta \cup \{c_1,c_2,\#,\Beg,\Pad\}$.
  Intuitively, propositions $c_1$ and $c_2$ are used to encode the
  values of the two counters in $M$ and $\#$ is used to ensure that
  the values of the counters are not modified in the stuttering
  expansions of a trace encoding a computation of $M$.
  Proposition $\Beg$ marks the beginning of a configuration
  code, and proposition $\Pad$ is exploited for encoding a padding
  word at the end of a configuration code: formula $\varphi_M$ will
  ensure that only these words can be ``expanded'' in the stuttering
  expansions of a trace. Formally, an $M$-configuration $(\delta,\nu)$
  is encoded by the finite words over $2^{\AP}$ (called
  \emph{segments}) of the form
  $\{\Beg,\delta\} P_1\ldots P_m \{\Pad\}^{k}$, where $k\geq 1$,
  $m= \max(\nu(1),\nu(2))$, and for all $i\in [1,m]$,
  %
  \begin{compactitem}
  \item $\emptyset\neq P_i\subseteq \{\#,c_1,c_2\}$,
  \item $\#\in P_i$ iff $i$ is odd, and
  \item for all $\ell\in\{1,2\}$, $c_\ell\in P_i$ iff
    $i\leq \nu(\ell)$.
  \end{compactitem}
  A computation $\rho$ of $M$ is then encoded by the traces obtained
  by concatenating the codes of the configurations along $\rho$
  starting from the first one.
  The $\AHLTL$ formula $\varphi_M$ is given by
  $ \exists x_1\exists x_2.\,\ET\, \psi$, where the quantifier-free
  $\HLTL$ formula $\psi$ guarantees that for the two stuttering
  expansions $\pi_1$ and $\pi_2$ of the given trace $\pi$, the
  following holds:
\begin{compactitem}
  \item both $\pi_1$ and $\pi_2$ are infinite concatenations of segments;
  \item the first segment of $\pi_1$ encodes the initial configuration $(\delta_\init,\nu_0)$ of $M$ and the second segment of $\pi_1$
  encodes a configuration which is a successor of  $(\delta_\init,\nu_0)$ in $M$;
  \item $\delta_\rec$ occurs infinitely often along $\pi_1$;
  \item for each $i\geq 2$, the $(i+1)^{th}$ segment $s_2$ of $\pi_2$ starts at the same position as the
  $i^{th}$ segment $s_1$ of $\pi_1$. Moreover, $s_1$ and $s_2$ have the same length and the configuration encoded by $s_2$ is a successor in $M$ of the configuration encoded by $s_1$.
\end{compactitem}
Now, since $\pi_1$ and $\pi_2$ are stuttering expansions of the same
trace $\pi$, the alternation requirement for proposition $\#$ in the
encoding of an $M$-configurations ensures that $\pi_1$ and $\pi_2$
encode the same infinite sequence of $M$-configurations. Hence,
$\varphi_M$ has a single-trace model if and only if $M$ is a positive
instance of the recurrence problem.  Details appear in
Appendix~\ref{APP:UndecidabilitySIngleTraceAHLTL}.
\end{proof} 
\subsection{Results for Stuttering $\HLTL$ (\SHLTL)}%
\label{sec:StutteringHLTL}

Stuttering $\HLTL$ (\SHLTL)~\cite{BozzelliPS21} is an asynchronous extension of $\HLTL$
obtained by using \emph{stutter-relativized} versions of the temporal
modalities w.r.t.~finite sets $\Gamma$ of $\LTL$ formulas.
In this section, we show that $\AHLTL$ and $\SHLTL$ are expressively incomparable.

In $\SHLTL$, the notion of successor of a position $i$ along a trace
$\pi$ is relativized using a finite set $\Gamma$ of $\LTL$
formulas.
If in the interval $[i,\infty[$, the truth value of each
formula in $\Gamma$ does not change along $\pi$ (i.e., for each
$j\geq i$ and for each $\theta\in\Gamma$, $(\pi,i)\models\theta$ iff
$(\pi,j)\models\theta$), then the \emph{$\Gamma$-successor of $i$ in
  $\pi$} coincides with the local successor $i+1$.
Otherwise, the \emph{$\Gamma$-successor of $i$ in $\pi$} is the
smallest position $j>i$ such that the truth value of some formula
$\theta$ in $\Gamma$ changes in moving from $i$ to $j$ (i.e., for some
$\theta\in\Gamma$, $(\pi,i)\models\theta$ iff
$(\pi,j)\not\models\theta$).
The $\Gamma$-successor induces a trace, called \emph{$\Gamma$-stutter
  trace of $\pi$} and denoted by $\stfr_{\Gamma}(\pi)$, obtained from
$\pi$ by repeatedly applying the $\Gamma$-successor starting from
position $0$, i.e.
$\stfr_{\Gamma}(\pi)\DefinedAs \pi(i_0)\pi(i_1)\ldots$, where $i_0=0$
and $i_{k+1}$ is the $\Gamma$-successor of $i_k$ in $\pi$ for all
$k\geq 0$.
Note that $\stfr_{\Gamma}(\pi)=\pi$ if $\Gamma=\emptyset$.
Given a pointed-trace assignment $\Pi$, the \emph{$\Gamma$-successor
  $\SUCC_\Gamma(\Pi)$ of $\Pi$} is the  pointed
trace-assignment with domain $\Dom(\Pi)$ defined as follows for each $x\in\Dom(\Pi)$:
 if $\Pi(x)=(\pi,i)$, then
$\SUCC_\Gamma(\Pi)=(\pi,\ell)$ where $\ell$ is the $\Gamma$-successor
of $i$ in $\pi$.  For each $j\in \N$, we use $\SUCC^{\,j}_\Gamma$ for
the function obtained by $j$ applications of the function
$\SUCC_\Gamma$.

$\SHLTL$ formulas are linear-time hyper sentences over multi-trace
specifications $\psi$, called \emph{$\SHLTL$ quantifier-free
  formulas}, where the syntax of $\psi$ is as
follows: 
 %
\[
  \psi ::=    \top \ | \  \Rel{p}{x}  \ | \ \neg \psi \ | \ \psi \wedge \psi \ | \ \Next_\Gamma \psi  \ | \ \psi \Until_\Gamma \psi
\]
where $p\in \AP$, $x\in \Var$, $\Gamma$ is a finite set of $\LTL$
formulas over $\AP$, and $\Next_\Gamma$ and $\Until_\Gamma$ are the
stutter-relativized versions of the $\LTL$ temporal modalities.
Informally,  the relativized temporal modalities
$\Next_\Gamma$ and $\Until_\Gamma$ are evaluated by a lockstepwise
traversal of the $\Gamma$-stutter traces associated with the currently
quantified traces.
Standard $\HLTL$ corresponds to the fragment of
$\SHLTL$ where the subscript of each temporal modality is the empty set $\emptyset$.

Given a $\SHLTL$ quantifier-free formula $\psi$ and a pointed trace
assignment $\Pi$ such that $\Dom(\Pi)$ contains the trace variables
occurring in $\psi$, the satisfaction relation $\Pi\models \psi$ is
inductively defined as follows (we omit the semantics of the Boolean
connectives): 
  \[ \begin{array}{ll}
 \Pi  \models \Rel{p}{x}  &  \Leftrightarrow  \Pi(x)=(\pi,i) \text{ and }p\in \pi(i)\\
   \Pi  \models  \Next_\Gamma\psi &  \Leftrightarrow  \SUCC_\Gamma(\Pi) \models  \psi\\
   \Pi  \models  \psi_1\Until_\Gamma \psi_2  &
  \Leftrightarrow  \text{for some }i\geq 0:\,   \SUCC^{\,i}_\Gamma(\Pi) \models  \psi_2 \text{ and }  \SUCC^{\,k}_\Gamma(\Pi) \models  \psi_1 \text{ for all } 0\leq k<i
\end{array} \]

\emph{Stuttering $\LTL$} formulas, corresponding to one-variable
$\SHLTL$ quantifier-free formulas, can be translated in polynomial
time into equivalent $\LTL$ formulas (see~\cite{BozzelliPS21}).
Thus, since $\LTL$ satisfiability is \PSPACE-complete, the following result 
holds.
%

\begin{proposition}\label{prop:SingleTraceSHLTL} The trace properties
  definable by $\SHLTL$ formulas are $\LTL$ definable, and
  single-trace satisfiability of $\SHLTL$
is \PSPACE-complete.
\end{proposition}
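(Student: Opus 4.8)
The plan is to show that over a singleton model $\{\pi\}$ every $\SHLTL$ sentence collapses to a one-variable quantifier-free $\SHLTL$ (i.e.\ Stuttering $\LTL$) formula evaluated on $\pi$, and then to invoke the stated polynomial translation of Stuttering $\LTL$ into $\LTL$. First I would fix an $\SHLTL$ sentence $\xi = Q_1 x_1 \cdots Q_m x_m.\, \psi(x_1,\ldots,x_m)$ with $Q_i\in\{\exists,\forall\}$ and $\psi$ quantifier-free, and observe that when the model is a singleton $\Lang=\{\pi\}$ every trace quantifier---existential or universal---ranges over the single trace $\pi$. Hence $\{\pi\}\models\xi$ holds iff the quantifier-free part holds with all variables bound to the same initial pointed trace, that is, iff $\Pi_\pi\models\psi$ where $\Pi_\pi(x_i)=(\pi,0)$ for all $i\in[1,m]$.

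The key observation---which I would prove by induction on the structure of $\psi$---is that under the assignment $\Pi_\pi$ the variables stay \emph{synchronized}. Indeed, the $\Gamma$-successor $\SUCC_\Gamma$ advances the position of each variable according to its own trace; since all variables sit on the same trace $\pi$ at a common position $i$, they are all moved to the common $\Gamma$-successor of $i$ in $\pi$. Thus every assignment $\Pi$ reachable from $\Pi_\pi$ along the evaluation maps all variables to $(\pi,\ell)$ for a single position $\ell$. Consequently the atoms $\Rel{p}{x_1},\ldots,\Rel{p}{x_m}$ become interchangeable, and $\psi$ evaluated at $\Pi_\pi$ agrees with the one-variable Stuttering $\LTL$ formula $\psi'$ obtained from $\psi$ by renaming every $x_i$ to a single fresh variable $x$, evaluated on $(\pi,0)$.

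I would then apply the cited result that $\psi'$ translates in polynomial time into an equivalent $\LTL$ formula $\theta$, so that $\{\pi\}\models\xi$ iff $\pi\models\theta$; this already yields the first claim, namely that the trace property defined by $\xi$ is $\LTL$-definable. For the complexity, single-trace satisfiability of $\xi$ then amounts to $\LTL$-satisfiability of $\theta$, which, since $|\theta|$ is polynomial in $|\xi|$, lies in \PSPACE. For hardness I would use that $\HLTL$ is exactly the fragment of $\SHLTL$ in which every temporal subscript is $\emptyset$: given an $\LTL$ formula $\theta$, the sentence $\exists x.\,\theta[x]$ (relativizing every atom of $\theta$ to $x$) has a single-trace model iff $\theta$ is satisfiable, reducing \PSPACE-complete $\LTL$-satisfiability to single-trace satisfiability of $\SHLTL$. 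The only delicate point is the synchronization invariant of the second paragraph; once it is in place the remainder is a direct appeal to the Stuttering $\LTL$-to-$\LTL$ translation and to standard $\LTL$ bounds.
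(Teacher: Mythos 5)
Your proposal is correct and takes essentially the same route as the paper: the paper's (very terse) argument is exactly that over a singleton model an $\SHLTL$ sentence collapses to a one-variable quantifier-free formula (Stuttering $\LTL$), which by the cited result of~\cite{BozzelliPS21} translates in polynomial time into $\LTL$, giving both $\LTL$-definability and, together with \PSPACE-completeness of $\LTL$ satisfiability and the embedding of $\LTL$ via $\exists x.\,\theta[x]$, the stated complexity bound. Your synchronization invariant merely makes explicit the collapse step that the paper leaves implicit, and it is the right justification.
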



\noindent\textbf{$\SHLTL$ versus $\AHLTL$.}
We show that $\SHLTL$ and $\AHLTL$ are expressively incomparable even
over singleton sets of atomic propositions.
 By Theorem~\ref{theo:AHLTLinexpressibilitypSynchronicity}, unlike $\SHLTL$,
 $\AHLTL$ does not subsume $\HLTL$ even when $|\AP|=1$. Hence, $\SHLTL$ is not subsumed by $\AHLTL$ even when $|\AP|=1$.
 We show now that the converse holds as well. Intuitively,  $\AHLTL$ can encode counting mechanisms which
 cannot be expressed in $\SHLTL$.
 Let $\AP=\{p\}$.
 We exhibit two families $\{\Lang_n\}_{n\geq 1}$ and $\{\Lang'_n\}_{n\geq 1}$
 of trace sets and an $\AHLTL$ formula $\varphi_A$ such that
 \begin{compactitem}
 \item $\varphi_A$ can distinguish the traces set $\Lang_n$ and
   $\Lang'_n$ for each $n\geq 1$, but
 \item for each $\SHLTL$ formula $\psi$, there is $n$ such that
   $\psi$ does not distinguish $\Lang_n$ and $\Lang'_n$.
 \end{compactitem}
 For each $n\geq 1$, let $\pi_n$, $\rho_n$, and $\rho'_n$ be the traces defined as:\vspace{0.1cm}

$
 \pi_n \DefinedAs ( \emptyset\cdot p )^{n}\cdot \emptyset^{\omega} \hspace{5em}
 \rho_n \DefinedAs  ( \emptyset\cdot p )^{2n}\cdot \emptyset^{\omega} \hspace{5em}
  \rho'_n \DefinedAs  ( \emptyset\cdot p )^{2n+1}\cdot \emptyset^{\omega}
$\vspace{0.1cm}

\noindent For each $n\geq 1$, define $\Lang_n\DefinedAs\{\pi_n,\rho_n\}$ and   $\Lang'_n\DefinedAs\{\pi_n,\rho'_n\}$.
Let $\psi_1(x)$ and $\psi_2(x)$ be two one-variable quantifier-free
$\HLTL$ formulas capturing the following requirements:
\begin{compactitem}
\item $\psi_1(x)$ captures traces of the form
  $( \emptyset\cdot p )^{k}\cdot \emptyset^{\omega}$ for some
  $k\geq 1$,
\item $\psi_2(x)$ captures traces of the form
  $( \emptyset^{2}\cdot p^{2} )^{k}\cdot \emptyset^{\omega}$ for some
  $k\geq 1$.
\end{compactitem}
Let $\psi(x,y)$ be the two-variable quantifier-free $\HLTL$ formula
defined as follows:
\[
\psi(x,y)\DefinedAs \Eventually(p[x]\wedge p[y] \wedge \Next\Always (\neg p[x]\wedge \neg p[y]))
\]
\noindent Intuitively, if $x$ is bound to a trace $\nu_1$ satisfying
$\psi_1 $ and $y$ is bound to a trace $\nu_2$ satisfying $\psi_2$,
then $\psi(x,y)$ holds iff $\nu_1$ is of the form
$ ( \emptyset\cdot p )^{2k}$ and $\nu_2$ is of the form
$ ( \emptyset^{2}\cdot p^{2} )^{k}$ for some $k\geq 1$. The $\AHLTL$
formula $\varphi_A$ is then defined as follows:
\[
 \varphi_A\DefinedAs \forall x_1.\forall x_2.\ET \,\bigl([\psi(x_1,x_2) \wedge \psi_1(x_1)\wedge \psi_1(x_2) ]\,\vee\,
    [\psi(x_1,x_2)\wedge \displaystyle{\bigvee_{i\in \{1,2\}}}(\psi_1(x_i)\wedge \psi_2(x_{3-i}))]\bigr)
    \]
Let $\pi$ a trace of the form $\pi =( \emptyset\cdot p )^{k}\cdot \emptyset^{\omega}$ for some $k\geq 1$.
We  observe that  the unique stuttering expansion $\nu_1$ of $\pi$ such that $\nu_1$ satisfies $\psi_1(x)$ is $\pi$ itself.
Similarly,  there is a unique stuttering expansion $\nu_2$ of $\pi$ such that $\nu_2$ satisfies $\psi_2(x)$, and such a trace
$\nu_2$ is given by $( \emptyset^{2}\cdot p^{2} )^{k}\cdot \emptyset^{\omega}$.
Fix $n\geq 1$. Let us consider the trace set $\Lang_n=\{\pi_n,\rho_n\}$. By construction, if both variables $x_1$ and $x_2$
in the definition of $\varphi_A$ are bound to the same trace $\pi$ in $\Lang_n$, then the first disjunct in the definition
of $\varphi_A$ is fulfilled by taking $\pi$ itself as an expansion of $\pi$. On the other hand, assume that
variable $x_1$ (resp., $x_2$) is bound to trace $\pi_n$ and variable $x_2$ (resp., $x_1$) is bound to trace $\rho_n$. In this case, by taking as stuttering expansion of $\pi_n$ the trace  $( \emptyset^{2}\cdot p^{2} )^{n}\cdot \emptyset^{\omega}$ and
as stuttering expansion of $\rho_n =( \emptyset\cdot p )^{2n}\cdot \emptyset^{\omega}$ the trace $\rho_n$ itself, the second disjunct in the definition of $\varphi_A$ is fulfilled. Hence, $\Lang_n$ is a model of $\varphi_A$.

Now, we show that $\Lang'_n=\{\pi_n,\rho'_n\}$ does not satisfy $\varphi_A$. Let us consider the mapping assigning to variable
$x_1$ the trace $\pi_n$ and to variable $x_2$ the trace $\rho'_n$. With this mapping, the quantifier-free part of $\varphi_A$ cannot be fulfilled. This because the unique stuttering expansion of $\pi_n= ( \emptyset\cdot p )^{n}\cdot \emptyset^{\omega}$ (resp.,
  $\rho'_n= ( \emptyset\cdot p )^{2n+1}\cdot \emptyset^{\omega}$) satisfying $\psi_1$ is $\pi_n$ (resp., $\rho'_n$) itself. Moreover,
  the unique stuttering expansion of $\pi_n$ (resp., $\rho'_n$) satisfying $\psi_2$ is $  ( \emptyset^{2}\cdot p^{2} )^{n}\cdot \emptyset^{\omega}$ (resp.,
  $ ( \emptyset^{2}\cdot p^{2} )^{2n+1}\cdot \emptyset^{\omega}$). Hence, for all $n\geq 1$, $\Lang_n\models \varphi_A$
and $\Lang'_n\not\models \varphi_A$. On the other hand, one can show that the following holds (for details, see Appendix~\ref{APP:InexpressivityCountingsHLTL}).


\newcounter{prop-InexpressivityCountingsHLTL}
\setcounter{prop-InexpressivityCountingsHLTL}{\value{proposition}}
\newcounter{sec-InexpressivityCountingsHLTL}
\setcounter{sec-InexpressivityCountingsHLTL}{\value{section}}

\begin{proposition}\label{prop:InexpressivityCountingsHLTL} For each $\SHLTL$ formula $\psi$, there is $n\geq 1$ s.t.
$\Lang_n\models \psi$ iff $\Lang'_n\models \psi$.
\end{proposition}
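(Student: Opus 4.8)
The plan is to produce, for each $\SHLTL$ sentence $\psi$, a threshold $N$ such that $\Lang_n\models\psi$ iff $\Lang'_n\models\psi$ for all $n\geq N$ (so any such $n$ witnesses the proposition). The starting point is the bijection $\beta\colon\Lang_n\to\Lang'_n$ that fixes $\pi_n$ and sends $\rho_n$ to $\rho'_n$. Since each model has exactly two traces and $\beta$ matches the ranges of $\forall$ and $\exists$ one-to-one, the quantifier prefix takes care of itself once the matrix is handled: if every initial assignment $\Pi$ into $\Lang_n$ and its image $\Pi'=\beta\circ\Pi$ into $\Lang'_n$ satisfy the same quantifier-free subformulas, a routine induction on the prefix gives $\Lang_n\models\psi$ iff $\Lang'_n\models\psi$. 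Here we use that quantifier-free $\SHLTL$ satisfaction depends only on the assignment and not on the ambient trace set. Thus everything reduces to a preservation statement for the quantifier-free matrix under $\beta$.

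The engine is that the finitely many $\LTL$ formulas occurring as subscripts $\Gamma$ in $\psi$ can count only up to a bounded threshold on the traces at hand. Every suffix of $\pi_n$, $\rho_n$, $\rho'_n$ has the shape $(\emptyset\,p)^m\emptyset^\omega$ or $p\,(\emptyset\,p)^m\emptyset^\omega$, so the truth value of a fixed $\LTL$ formula along these traces is a function of the number $m$ of remaining blocks and of the parity of the position. Because $\LTL$-definable (star-free) languages are aperiodic, this value is eventually constant in $m$: there is a threshold $M$, uniform over the finitely many $\LTL$ formulas appearing in the subscripts of $\psi$, such that for $m,m'\geq M$ the formula agrees on $(\emptyset\,p)^m\emptyset^\omega$ and $(\emptyset\,p)^{m'}\emptyset^\omega$ (and likewise for the odd-offset suffixes). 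The same bound controls the $\Gamma$-successor, hence the stutter traces: for $n$ large, $\stfr_\Gamma(\pi_n)$ is identical across the two models, while $\stfr_\Gamma(\rho_n)$ and $\stfr_\Gamma(\rho'_n)$ carry the same local $\Gamma$-profiles throughout their common ``deep interior'' and differ only through the single extra $(\emptyset\,p)$-block of $\rho'_n$, which for large $n$ sits far inside that interior.

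To turn this into preservation I would set up an Ehrenfeucht--Fra\"{\i}ss\'e game for quantifier-free $\SHLTL$ played on a pair of pointed-trace tuples, one over $\Lang_n$ and one over $\Lang'_n$, with one round per nesting level of a $\Next_\Gamma$ or $\Until_\Gamma$ modality. Duplicator maintains the invariant that, for every trace variable and every relevant $\Gamma$, the current positions carry the same $\Gamma$-profile and that the corresponding remaining block counts are either equal or both at least $M$. Variables bound to $\pi_n$ need no work, since that trace is shared; for a variable bound to $\rho_n$ versus $\rho'_n$, the aperiodicity bound lets Duplicator answer a Spoiler $\Until_\Gamma$-jump, even one reaching deep into the interior past the extra block, by a matching jump preserving the invariant, precisely because beyond threshold the $\SUCC_\Gamma$-steps and the atom truth values no longer depend on the exact block count. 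If the nesting depth $D$ of $\psi$ is small relative to $2n-M$, Duplicator wins the $D$-round game, so $\Pi$ and $\Pi'$ agree on the matrix, closing the reduction. (The degenerate one-variable case recovers the $\LTL$-definability of single-trace $\SHLTL$ properties of Proposition~\ref{prop:SingleTraceSHLTL}.)

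The main obstacle I anticipate is the bookkeeping when distinct subformulas carry distinct subscripts $\Gamma$: the several stutter-relativized modalities traverse different stutter traces of the same underlying trace, so the invariant must simultaneously control $\stfr_\Gamma$-alignment for all $\Gamma$ occurring in $\psi$, and in the multi-variable case the lockstep $\SUCC_\Gamma$-stepping advances the variables at individually determined rates. Keeping the ``both counts at least $M$'' condition stable under these heterogeneous, asynchronous advances, and checking that a far $\Until_\Gamma$-jump never forces Duplicator across the unique block mismatch while Spoiler stays short of it, is the delicate part; the aperiodicity threshold is exactly what guarantees a single uniform $n$ works for all these $\Gamma$ at once.
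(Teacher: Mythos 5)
Your proposal is correct and takes essentially the same approach as the paper's proof: there too, the statement is reduced to preservation of quantifier-free formulas under the bijection fixing $\pi_n$ and sending $\rho_n$ to $\rho'_n$ (the paper's ``similar'' assignments), and Property~(1) of Lemma~\ref{lemma:stutterTraceInexpressivity} is precisely your aperiodicity/counting bound, proved there by direct induction on the formula. The ``delicate part'' you defer is discharged in the paper by a structural dichotomy for stutter traces (Property~(2) of the same lemma)---every subscript $\Gamma$ built from formulas of size $<n$ either yields $+1$ successors throughout the block region or collapses the stutter traces of $\pi_n,\rho_n,\rho'_n$ onto a common tail---after which the double induction of Lemma~\ref{lemma:InexpressivityCountingsHLTL} carries out exactly your Duplicator bookkeeping (positions matched identically, or shifted by one block so that the suffixes of $\rho_n$ and $\rho'_n$ coincide), with the threshold taken as $n>|\psi|$.
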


Thus, since $\AHLTL$ does not subsume
$\SHLTL$, we obtain the following result. 

\begin{corollary}\label{cor:IncomparabilityAHLTL_SHLTL} $\AHLTL$ and $\SHLTL$ are expressively incomparable.
\end{corollary}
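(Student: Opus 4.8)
The plan is to prove the two non-subsumptions $\SHLTL\not\leq\AHLTL$ and $\AHLTL\not\leq\SHLTL$; by definition their conjunction is exactly expressive incomparability. For $\SHLTL\not\leq\AHLTL$ I would exploit that $\SHLTL$ subsumes $\HLTL$, since standard $\HLTL$ is precisely the fragment of $\SHLTL$ in which every temporal modality carries the empty subscript $\emptyset$. The $p$-synchronicity hyperproperty is expressed by the $\HLTL$ (hence $\SHLTL$) sentence $\forall x_1.\,\forall x_2.\,\Always(p[x_1]\leftrightarrow p[x_2])$, whereas by Theorem~\ref{theo:AHLTLinexpressibilitypSynchronicity} it is not $\AHLTL$-definable. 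This single $\SHLTL$ sentence therefore has no $\AHLTL$ equivalent, establishing $\SHLTL\not\leq\AHLTL$; and since $p$-synchronicity uses only the proposition $p$, this already holds when $|\AP|=1$.

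For the converse $\AHLTL\not\leq\SHLTL$ I would use the separating $\AHLTL$ sentence $\varphi_A$ together with the two families $\{\Lang_n\}_{n\geq 1}$ and $\{\Lang'_n\}_{n\geq 1}$ constructed above. The preceding argument already shows $\Lang_n\models\varphi_A$ and $\Lang'_n\not\models\varphi_A$ for \emph{every} $n\geq 1$, so $\varphi_A$ separates $\Lang_n$ from $\Lang'_n$ uniformly in $n$. Suppose, toward a contradiction, that some $\SHLTL$ sentence $\psi$ were equivalent to $\varphi_A$. Then $\psi$ would inherit this uniform separation, i.e.\ $\Lang_n\models\psi$ and $\Lang'_n\not\models\psi$ for all $n$, which flatly contradicts Proposition~\ref{prop:InexpressivityCountingsHLTL}, since that result furnishes an $n$ with $\Lang_n\models\psi$ iff $\Lang'_n\models\psi$. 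Hence $\varphi_A$ admits no equivalent $\SHLTL$ sentence, giving $\AHLTL\not\leq\SHLTL$, again already with $|\AP|=1$.

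Having both $\SHLTL\not\leq\AHLTL$ and $\AHLTL\not\leq\SHLTL$, the corollary follows immediately from the definition of expressive incomparability. The mathematical weight sits entirely in the two imported ingredients rather than in this combination: Theorem~\ref{theo:AHLTLinexpressibilitypSynchronicity} for the first direction and Proposition~\ref{prop:InexpressivityCountingsHLTL} for the second. I expect the genuine obstacle to be Proposition~\ref{prop:InexpressivityCountingsHLTL}, whose content is that $\SHLTL$ cannot detect the parity of the number of $\emptyset\cdot p$ blocks; I would attack it by pushing the quantifier-free parts through the one-variable $\SHLTL$-to-$\LTL$ translation underlying Proposition~\ref{prop:SingleTraceSHLTL} and then running a bounded-nesting-depth (Ehrenfeucht--Fra\"iss\'e) argument showing that no fixed $\LTL$ formula distinguishes $2n$ from $2n+1$ occurrences of $\emptyset\cdot p$ once $n$ is large enough relative to the operator depth. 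Since that proposition is already established earlier, here I merely cite it and assemble the two directions.
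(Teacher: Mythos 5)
Your proposal is correct and follows essentially the same route as the paper: $\SHLTL\not\leq\AHLTL$ via the $p$-synchronicity property (Theorem~\ref{theo:AHLTLinexpressibilitypSynchronicity}, using that $\HLTL$ is the $\emptyset$-subscript fragment of $\SHLTL$), and $\AHLTL\not\leq\SHLTL$ via the uniform separation of $\Lang_n$ and $\Lang'_n$ by $\varphi_A$ combined with Proposition~\ref{prop:InexpressivityCountingsHLTL}. Your closing remark on how you would prove that proposition differs slightly in flavour from the paper's appendix argument (a direct double induction on two-variable quantifier-free $\SHLTL$ formulas rather than a reduction through the one-variable $\LTL$ translation), but since you only cite it here, this does not affect the correctness of the proof of the corollary.
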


\subsection{Results for Context $\HLTL$ (\CHLTL)}\label{sec:ContextHyper}

Context $\HLTL$ ($\CHLTL$)~\cite{BozzelliPS21} extends $\HLTL$ by
unary modalities $\tpl{C}$ parameterized by a non-empty subset $C$ of
trace variables---called the \emph{context}---which restrict the
evaluation of the temporal modalities to the traces associated with
the variables in $C$.
We show that $\CHLTL$ is not subsumed by $\AHLTL$ or
$\SHLTL$, and single-trace satisfiability of $\CHLTL$ is undecidable.
Moreover, we provide a characterization of the \emph{finite trace properties}
denoted by $\CHLTL$ formulas in terms of the extension $\FOPLUS$ of standard
first-order logic $\FOF$ over finite words with
\emph{addition}. We also establish that the variant of $\FOPLUS$ over infinite words characterizes the 
trace properties expressible in the extension of $\CHLTL$ with past temporal modalities.
Finally, we identify a fragment of $\CHLTL$ which subsumes $\HLTL$ and
for which model checking is decidable.

$\CHLTL$ formulas are linear-time hyper sentences over multi-trace
specifications $\psi$, called \emph{$\CHLTL$ quantifier-free
  formulas}, where the syntax of $\psi$ 
is as follows:
\[
   \psi ::=  \top  \ | \   \Rel{p}{x}  \ | \ \neg \psi \ | \ \psi \wedge \psi \ | \ \Next\psi \ | \   \psi \Until \psi \ | \ \tpl{C} \psi
\]
where $p\in \AP$, $x\in \Var$, and $\tpl{C}$ is the context modality
with $\emptyset \neq C\subseteq \Var$.
A context $C$ is \emph{global for a formula $\varphi$} if $C$ contains
all the trace variables occurring in $\varphi$.  Let $\Pi$ be a
pointed-trace assignment.
Given a context $C$ and an offset $i\geq 0$, we denote by $\Pi +_C i$
the pointed-trace assignment with domain $\Dom(\Pi)$ defined as
follows. For each $x\in \Dom(\Pi)$, where $\Pi(x)=(\pi,h)$:
$[\Pi +_C i](x)=(\pi,h+i)$ if $x\in C$, and $[\Pi +_C i](x)=\Pi(x)$
otherwise.
Intuitively, the positions of the pointed traces associated with the
variables in $C$ advance of the offset $i$, while the positions of the
other pointed traces remain unchanged.
Let $\psi$ be a $\CHLTL$ quantifier-free formula such that $\Dom(\Pi)$
contains the variables occurring in $\psi$.
The satisfaction relation $(\Pi,C)\models \psi$ is defined as follows
(we omit the semantics of the Boolean connectives):
   %
  \[ \begin{array}{ll}
 (\Pi,C) \models \Rel{p}{x} & \Leftrightarrow  \Pi(x)=(\pi,i) \text{ and }p\in \pi(i)\\
  (\Pi,C) \models  \Next\psi & \Leftrightarrow (\Pi +_C 1,C)\models  \psi\\
  (\Pi,C) \models  \psi_1\Until \psi_2 & \Leftrightarrow  \text{for some }i\geq 0:\,  (\Pi +_C i,C)\models  \psi_2   \text{ and }(\Pi +_C k,C) \models  \psi_1 \text{ for all } k<i\\
 (\Pi,C) \models \tpl{C'}\psi&\Leftrightarrow (\Pi,C') \models  \psi
\end{array} \]
We write $\Pi\models \psi$ to mean that $(\Pi,\Var)\models\psi$.\vspace{0.2cm}

\noindent \textbf{$\CHLTL$ versus $\AHLTL$ and $\SHLTL$.}
We show that $\CHLTL$ is able to capture powerful non-regular trace
properties which cannot be expressed in $\AHLTL$ or in $\SHLTL$.
In particular, we consider the following trace property over
$\AP=\{p\}$:\vspace{0.3em}

\noindent\fbox{
\parbox{0.97\textwidth}{
%
  \emph{Suffix Property:} a trace $\pi$ satisfies the suffix property if $\pi$ has a proper suffix
  $\pi^{i}$ for some $i>0$ such that $\pi^{i}=\pi$.
}
}\vspace{0.3em}

\noindent{}This property can be expressed in $\CHLTL$ by the following formula
\vspace{0.1cm}

$
\varphi_{\textit{suff}} \DefinedAs \forall x_1.\,\forall x_2.\, \bigwedge_{p\in \AP}\Always(p[x_1] \leftrightarrow p[x_2])
\wedge \{x_2\}\Eventually\Next \{x_1,x_2\} \bigwedge_{p\in \AP}\Always(p[x_1] \leftrightarrow p[x_2])
$\vspace{0.1cm}

\noindent We show that no $\AHLTL$ and no $\SHLTL$ formula is
equivalent to $\varphi_{\textit{suff}}$.

\begin{theorem}\label{theo:A_S_HLTLinexpressibilitypSuffix}
  $\AHLTL$ and $\SHLTL$ cannot express the suffix property.  Hence,
  $\CHLTL$ is not subsumed by $\AHLTL$ or by $\SHLTL$.
\end{theorem}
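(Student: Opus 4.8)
The plan is to read the suffix property as a concrete trace language and then attack the two logics separately, exploiting the fact that over a singleton model $\{\pi\}$ every trace quantifier is forced to bind its variable to $\pi$, so that the denoted trace property is governed entirely by the quantifier-free part evaluated with all variables mapped to $(\pi,0)$. Concretely, the suffix property defines the set $P$ of \emph{purely periodic} traces, i.e.\ those $\pi$ with $\pi(j)=\pi(i+j)$ for all $j\ge 0$ and some $i>0$, equivalently $\pi=w^{\omega}$ for a finite nonempty word $w$ over $2^{\{p\}}$.

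For $\SHLTL$ the work is mostly done by Proposition~\ref{prop:SingleTraceSHLTL}: every trace property definable in $\SHLTL$ is $\LTL$-definable, hence $\omega$-regular. So it suffices to show that $P$ is \emph{not} $\omega$-regular, which I would do by a direct pumping argument. Assuming $P$ is recognised by a Büchi automaton with $N$ states, consider the purely periodic trace $(\emptyset^{N}\{p\})^{\omega}\in P$. Any accepting run repeats a state while reading the initial block $\emptyset^{N}$, yielding a cycle that reads $\emptyset^{k}$ for some $1\le k\le N$; pumping this cycle once gives an accepting run on $\emptyset^{N+k}\{p\}(\emptyset^{N}\{p\})^{\omega}$, whose first $\emptyset$-block is strictly longer than all the others, hence this word is not purely periodic. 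This contradiction shows $P$ is not $\omega$-regular, so it is not $\LTL$-definable, and therefore $\SHLTL$ cannot express the suffix property.

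The case of $\AHLTL$ is the substantial one, since $\AHLTL$ can define non-regular single-trace properties, so regularity is unavailable. By Proposition~\ref{prop:CharacterizationAHLTLSemantics}, for a quantifier-free part $\ET\theta$ (resp.\ $\AT\theta$) the singleton $\{\pi\}$ is a model iff $\theta$, read in lockstep, is satisfied by \emph{some} (resp.\ \emph{every}) tuple $(\pi'_1,\dots,\pi'_n)$ of stuttering expansions of $\pi$. I would refute definability of $P$ by an inexpressibility argument in the style of Theorem~\ref{theo:AHLTLinexpressibilitypSynchronicity}, exhibiting two families $\alpha_m\DefinedAs(\{p\}\,\emptyset^{m})^{\omega}\in P$ and $\beta_m\DefinedAs\{p\}\,\emptyset^{m+1}(\{p\}\,\emptyset^{m})^{\omega}\notin P$ and showing that for every fixed $\AHLTL$ sentence $\varphi$ there is an $m$ with $\{\alpha_m\}\models\varphi$ iff $\{\beta_m\}\models\varphi$; since $\alpha_m\in P$ and $\beta_m\notin P$, this contradicts the assumption that $\varphi$ defines $P$. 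The two traces share the same alternating block pattern and differ only in the length of a single $\emptyset$-block, by one.

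The decisive structural fact is that the stuttering expansions of $\beta_m$ form a subfamily of those of $\alpha_m$ (they differ only by requiring the first $\emptyset$-block to have length $\ge m+1$ rather than $\ge m$), which immediately gives one direction in each case: for $\ET\theta$ a satisfying $\beta_m$-tuple is also an $\alpha_m$-tuple, and for $\AT\theta$ universality over all $\alpha_m$-tuples descends to the $\beta_m$-subfamily. The crux, and the step I expect to be the main obstacle, is the converse: a satisfying (or a falsifying) $\alpha_m$-expansion tuple that uses the minimal first $\emptyset$-block must be matched by a $\beta_m$-expansion tuple with the same lockstep truth value of $\theta$. I would obtain this from an Ehrenfeucht--Fra\"iss\'e/pumping argument for $\LTL$ over the product alphabet $(2^{\{p\}})^{n}$: once $m$ exceeds the temporal resources of $\theta$, inserting one extra stutter into a sufficiently long monochromatic segment leaves the truth of $\theta$ unchanged. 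The hard part is that the per-component stuttering freedom \emph{desynchronises} the blocks across the $n$ components, so the extra stutter cannot be inserted at a single global position; proving the required pumping lemma for long \emph{componentwise}-stable segments of the lockstep product, and thereby controlling this misalignment, is the technical heart of the argument.
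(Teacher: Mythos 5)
Your $\SHLTL$ half is complete and matches the paper: Proposition~\ref{prop:SingleTraceSHLTL} reduces the question to $\omega$-regularity of the suffix property, and your B\"uchi pumping argument correctly shows that the set of purely periodic traces is not $\omega$-regular (the paper simply asserts non-regularity). The gap is in the $\AHLTL$ half, and it is exactly the step you flag as ``the technical heart'': the transfer of a satisfying (or falsifying) tuple of stuttering expansions of $\alpha_m$ to one of $\beta_m$ is never proved. With your choice of family this step is genuinely delicate, because $\beta_m=\{p\}\,\emptyset^{m+1}(\{p\}\,\emptyset^m)^{\omega}$ is \emph{not} of the form $c\cdot\alpha_m$ for a letter $c$: the extra $\emptyset$ sits after the initial $\{p\}$, so converting a witness tuple for $\alpha_m$ into one for $\beta_m$ requires inserting one $\emptyset$ inside each component's \emph{own} first $\emptyset$-block. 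Since different components may stutter their initial $\{p\}$ for arbitrarily long, these blocks (hence the insertion points) can lie arbitrarily far apart; between two insertion points the components become misaligned by one step, so the modified lockstep product is a shifted overlay of the old one over an unbounded window, not a lengthening of a constant block. Indistinguishability there does not follow from ``one extra stutter in a long monochromatic segment,'' and you give no proof of the pumping lemma you would need.

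The paper sidesteps this problem entirely by choosing the pair so that the long block comes \emph{first}: $\pi_n=(p^{n}\cdot\emptyset)^{\omega}$ and $\pi'_n=p^{n+1}\cdot\emptyset\cdot(p^{n}\cdot\emptyset)^{\omega}$, so that $\pi'_n=p\cdot\pi_n$. One direction is as in your argument (expansions of $\pi'_n$ are expansions of $\pi_n$); for the converse, given witnesses $\rho_1,\dots,\rho_k$ for $\pi_n$, the tuple $(p\cdot\rho_1,\dots,p\cdot\rho_k)$ consists of expansions of $\pi'_n$ and is a \emph{global} one-step shift of the original tuple---the same letter is prepended to every component, so no desynchronization can arise. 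Since each $\rho_i$ begins with at least $n$ letters $p$, a simple induction on the nesting depth of $\Next$ shows that a quantifier-free formula of next-depth less than $n$ cannot distinguish positions $0$ and $1$ of the shifted tuple, which gives the transfer; the $\AT$ case then follows by dualizing to $\ET\neg\theta$, as you also intend. If you replace your pair by one in which the inserted letter is a common prefix, e.g.\ $(\emptyset^{m}\{p\})^{\omega}$ versus $\emptyset^{m+1}\{p\}(\emptyset^{m}\{p\})^{\omega}$, your outline closes along these lines; as it stands, the decisive step is missing.
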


\begin{proof}[Proof sketch]
  By Proposition~\ref{prop:SingleTraceSHLTL}, the set of single-trace
  models of a $\SHLTL$ formula is regular. Thus, since the suffix
  trace property is not regular, the result for $\SHLTL$ follows.

  Consider now $\AHLTL$.
  For each $n\geq 1$, let
  $\pi_n \DefinedAs (p^{n}\cdot \emptyset )^{\omega}$ and
  $\pi'_n \DefinedAs p^{n+1}\cdot \emptyset \cdot (p^{n}\cdot
  \emptyset)^{\omega}$.
By construction $\pi_n$ satisfies  the suffix property but $\pi'_n$ not.
Hence, for each $n\geq 1$, the $\CHLTL$ formula $\varphi_{\textit{suff}}$ distinguishes the singleton sets
$\{\pi_n\}$ and $\{\pi'_n\}$.
 On the other hand, we can show the following result, hence, Theorem~\ref{theo:A_S_HLTLinexpressibilitypSuffix}
 directly follows (a proof of the following claim is given in Appendix~\ref{APP:A_S_HLTLinexpressibilitypSuffix}).  \vspace{0.1cm}

\noindent \textbf{Claim.} For each $\AHLTL$ formula $\psi$, there is $n\geq 1$
such that $\{\pi_n\}\models \psi $ \emph{iff} $\{\pi'_n\}\models \psi$.
 \end{proof}

\noindent \textbf{Single-trace satisfiability and characterization of $\CHLTL$ finite-trace properties.}
Like $\AHLTL$, and unlike $\HLTL$ and $\SHLTL$, single-trace satisfiability of $\CHLTL$ turns out to be undecidable. In particular, by a straightforward adaptation of the undecidability proof in~\cite{BozzelliPS21} for model checking $\CHLTL$, one can reduce the recurrence problem in Minsky counter machines~\cite{Harel86} to single-trace satisfiability of $\CHLTL$.

\begin{theorem}
  The single-trace satisfiability problem for $\AHLTL$ is
  $\Sigma_{1}^{1}$-hard.
\end{theorem}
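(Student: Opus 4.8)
The plan is to reduce the recurrence problem for nondeterministic $2$-counter machines, which is $\Sigma_1^1$-complete, to single-trace satisfiability of $\AHLTL$. Given a machine $M=\tpl{Q,\Delta,\delta_\init,\delta_\rec}$ over the instruction set $\Inst=\{\inc,\dec,\zero\}\times\{1,2\}$, I will build a two-variable sentence $\varphi_M\DefinedAs \exists x_1\exists x_2.\,\ET\,\psi$ over $\AP\DefinedAs \Delta\cup\{c_1,c_2,\#,\Beg,\Pad\}$ such that $\varphi_M$ has a single-trace model iff $M$ admits a recurring computation. The decisive observation is that over a candidate single-trace model $\{\pi\}$ both $x_1$ and $x_2$ are bound to the \emph{same} trace $\pi$, yet by Proposition~\ref{prop:CharacterizationAHLTLSemantics} the existential-trajectory modality $\ET$ is satisfied iff there exist two (possibly \emph{different}) stuttering expansions $\pi_1,\pi_2$ of $\pi$ making the $\HLTL$ matrix true; the construction is engineered to exploit this single degree of freedom to realize a one-step look-ahead over configurations.

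I encode an $M$-configuration $(\delta,\nu)$ as a finite \emph{segment} $\{\Beg,\delta\}\,P_1\cdots P_m\,\{\Pad\}^k$ with $k\ge 1$ and $m=\max(\nu(1),\nu(2))$, where $c_\ell\in P_i$ iff $i\le\nu(\ell)$ records counter $\ell$ in unary and $\#\in P_i$ iff $i$ is odd; the trace $\pi$ encoding a computation is the concatenation of the codes of its successive configurations. The alternation of $\#$ is the linchpin: since $\pi_1$ and $\pi_2$ are both stuttering expansions of the single trace $\pi$, requiring via $\psi$ that $\#$ strictly alternate at every counter position forces stuttering to duplicate only $\Pad$-symbols and never counter symbols, so both expansions must reproduce the \emph{same} infinite sequence of configurations; no cheating expansion can silently alter a counter value.

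The quantifier-free $\HLTL$ formula $\psi$ then asserts that (i) $\pi_1$ and $\pi_2$ are both infinite concatenations of well-formed segments; (ii) the first segment of $\pi_1$ codes $(\delta_\init,\nu_0)$ and its second codes an $M$-successor thereof; (iii) $\delta_\rec$ recurs infinitely often along $\pi_1$, written with $\Always\Eventually$; and, crucially, (iv) for each $i\ge 2$ the $(i{+}1)$-th segment of $\pi_2$ begins at the same global position as the $i$-th segment of $\pi_1$, has the same length, and codes an $M$-successor of it. Because the two expansions encode the same sequence of configurations but $\pi_2$ is offset by one segment, requirement (iv) places configuration $C_i$ (read from $\pi_1$) and its successor $C_{i+1}$ (read from $\pi_2$) at exactly the same positions, so $\psi$ can verify each transition by a purely \emph{local}, position-wise comparison of the $c_1,c_2$ flags under the offset dictated by $\inc$, $\dec$ or $\zero$ — the kind of synchronous two-variable test that $\HLTL$ expresses directly, with no unbounded-counting gadget.

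The main obstacle is the correctness of this alignment in both directions. For soundness I must show that the $\#$-alternation genuinely forbids stuttering from tampering with counters, so that any $(\pi_1,\pi_2)$ satisfying $\psi$ encodes one and the same valid $M$-computation in which every consecutive pair of configurations respects $\Delta$; for completeness I must produce, from a given recurring computation, a base trace $\pi$ together with expansions $\pi_1,\pi_2$ that simultaneously meet the equal-length and one-segment-shift constraints of (iv) — in particular checking that the stutter freedom granted by $\ET$ suffices to realize the identity expansion for $\pi_1$ and the shifted expansion for $\pi_2$ at once. Once both directions are established, $\varphi_M$ has a single-trace model exactly when $M$ has a recurring computation, yielding $\Sigma_1^1$-hardness.
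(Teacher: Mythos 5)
Your proposal is correct and takes essentially the same route as the paper's own proof of this statement (Theorem~\ref{theo:UndecidabilitySIngleTraceAHLTL} together with Appendix~\ref{APP:UndecidabilitySIngleTraceAHLTL}): the same reduction from the $\Sigma_1^1$-complete recurrence problem for nondeterministic $2$-counter machines, the same formula shape $\exists x_1\exists x_2.\,\ET\,\psi$, and the same segment encoding over $\Delta\cup\{c_1,c_2,\#,\Beg,\Pad\}$ with $\#$ alternating at counter positions. In particular, you correctly identify the two load-bearing ideas of the paper's argument — that the $\#$-alternation confines stuttering to $\Pad$-positions, forcing both expansions of the single trace to encode the same configuration sequence, and that the one-segment shift with equal segment lengths lets the $\HLTL$ matrix check $M$-successorship position-wise — so nothing essential is missing.
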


A finite trace (over $\AP$) is a finite non-empty word over $2^{\AP}$.
By adding a fresh proposition $\#\notin \AP$, a finite trace can
be encoded by the trace $enc(w)$ over $\AP\cup\{\#\}$ given by
$w\cdot \{\#\}^{\omega}$.
Given a $\CHLTL$ formula $\varphi$ over $\AP\cup \{\#\}$, the
\emph{finite-trace property denoted by $\varphi$}  is the language
$\Lang_f(\varphi)$ of finite traces $w$ over $\AP$ such that the
single-trace model $\{enc(w)\}$ satisfies $\varphi$.
We provide now a characterization of the finite-trace properties
denoted by $\CHLTL$ formulas over $\AP\cup \{\#\}$ in terms of the
extension $\FOPLUS$ of standard first-order logic $\FOF$ over finite
words on $2^{\AP}$ with
addition. 
Formally, $\FOPLUS$ is a first-order logic with equality over the
signature $\{<,+\}\cup \{P_a\mid a\in\AP \}$, where the atomic
formulas $\psi$ have the following syntax with $x,y$ and $z$ being
first-order variables:
$\psi\DefinedAs x= y\mid x<y \mid z=y+x \mid P_a(x)$. A $\FOPLUS$
sentence (i.e., a $\FOPLUS$ formula with no free variables) is
interpreted over finite traces $w$, where:
\begin{inparaenum}[(i)]
\item variables ranges over
  the set $\{0,\ldots,|w|-1\}$ of positions of $w$,
\item the binary predicate $<$ is the natural ordering on
  $\{0,\ldots,|w|-1\}$, and
\item the predicates $z=y+x$ and $P_a(x)$ are interpreted in the
  obvious way.
\end{inparaenum}
We establish the following result.

\newcounter{theorem-CharacterizationCHLTLFIniteTraces}
\setcounter{theorem-CharacterizationCHLTLFIniteTraces}{\value{theorem}}

\newcounter{sec-CharacterizationCHLTLFIniteTraces}
\setcounter{sec-CharacterizationCHLTLFIniteTraces}{\value{section}}

\begin{theorem}\label{theo:CharacterizationCHLTLFIniteTraces} Given a $\FOPLUS$ sentence $\varphi$ over $\AP$, one can construct in polynomial time a $\CHLTL$ formula $\psi$ over $\AP\cup \{\#\}$ such that $\Lang_f(\psi)$ is the set of models of $\varphi$.
Vice versa, given a $\CHLTL$ formula $\psi$ over $\AP\cup \{\#\}$, one can construct in single exponential time a $\FOPLUS$ sentence $\varphi$ whose set of models is $\Lang_f(\psi)$.
\end{theorem}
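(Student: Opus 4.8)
The plan is to build a tight dictionary between the pointers that a $\CHLTL$ sentence maintains into the single word $enc(w)$ and the first-order position variables of $\FOPLUS$: the position held by a trace variable $x$ encodes the value of the corresponding first-order variable, a context-relativized modality advances exactly the pointers named in the context, and the addition predicate is matched by moving two pointers in lockstep. Two gadgets are used throughout. First, $\mathsf{inrange}(x)\DefinedAs \neg\Rel{\#}{x}$ states that pointer $x$ still lies in the $w$-part of $enc(w)$. Second, set $\mathsf{eq}(x,y)\DefinedAs \tpl{\{x,y\}}\Always\bigwedge_{a\in\AP\cup\{\#\}}(\Rel{a}{x}\leftrightarrow\Rel{a}{y})$; since $\#$ occurs in $enc(w)$ exactly from position $|w|$ on, we have $enc(w)^{i}=enc(w)^{j}$ iff $i=j$ or $i,j\geq|w|$, so $\mathsf{eq}(x,y)$ holds precisely when $x$ and $y$ point to the same position, provided at least one of them is in range.

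For the direction from $\FOPLUS$ to $\CHLTL$ I would give a direct compositional translation $\tau(\cdot)$, which accounts for the polynomial bound. Fix one trace variable $x$ per first-order variable, together with a reference variable $x_0$ that is never placed in any context and hence stays frozen at position $0$; the $\CHLTL$ formula opens with existential trace quantifiers introducing all of them (over the singleton model these are forced to $enc(w)$ at position $0$). I then set $\tau(\exists x.\phi)=\tpl{\{x\}}\Eventually(\mathsf{inrange}(x)\wedge\tau(\phi))$ and $\tau(\forall x.\phi)=\tpl{\{x\}}\Always(\mathsf{inrange}(x)\rightarrow\tau(\phi))$, so that the sweep of $x$ realizes quantification over the positions of $w$; moreover $\tau(P_a(x))=\Rel{a}{x}$, $\tau(x=y)=\mathsf{eq}(x,y)$, and $\tau(x<y)=\tpl{\{x\}}\Next\Eventually\,\mathsf{eq}(x,y)$. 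The only delicate clause is $z=x+y$: here I advance $x$ together with the frozen $x_0$ by a guessed common offset $t$, via $\tpl{\{x_0,x\}}\Eventually(\mathsf{eq}(x_0,y)\wedge\mathsf{eq}(x,z))$; the first conjunct forces $t$ to equal the position of $y$, and the second then forces the position of $x$ plus $t$ to equal that of $z$. Because context modalities have no side effects on the ambient assignment, each gadget reads current positions without disturbing the others, and a routine induction shows $\tau(\varphi)$ holds at the all-zero configuration iff $w\models\varphi$.

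For the converse I would argue by structural induction on the quantifier-free part $\theta$ of the $\CHLTL$ sentence, observing first that over the singleton $\{enc(w)\}$ each trace quantifier merely installs one more pointer at position $0$, so all content lies in $\theta$ evaluated from the all-zero configuration under the global context. Translating a subformula under a syntactically fixed context $C$ and a tuple $\vec y$ of position variables, $\Rel{a}{x}$ becomes $P_a(y_x)$, $\tpl{C'}\theta'$ only switches the context, $\tpl{C}\Next\theta'$ substitutes $y_i+1$ for the pointers $i\in C$, and $\theta_1\mathbin{\tpl{C}\Until}\theta_2$ becomes (schematically) $\exists t\,[\theta_2 \text{ translated at } \vec y +_C t]\wedge\forall s\,(s<t\rightarrow[\theta_1 \text{ translated at } \vec y +_C s])$, where adding the common offset to all pointers of $C$ is exactly what the predicate $+$ supplies. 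The one genuine obstacle, and the source of the single-exponential blow-up, is that pointers may be pushed past $|w|-1$ into the uniform $\{\#\}^\omega$ tail, where $\FOPLUS$ variables cannot range; I would handle this by indexing the translation by the set $T\subseteq\Var$ of pointers currently in the tail (for which only $\Rel{\#}{\cdot}$ holds and exact positions are irrelevant), taking a disjunction over the $2^{|\Var|}$ possible sets $T$, and, inside each $\Until$, splitting the offset interval $[0,t)$ into the boundedly many subintervals on which $T$ is constant so that the correct indexed translation applies.

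The hard part is therefore not the algebra of addition—captured cleanly by the lockstep gadget in one direction and by the ternary $+$ predicate in the other—but the bookkeeping for the $\#$-tail in the $\CHLTL$-to-$\FOPLUS$ direction: one must verify that collapsing every out-of-range pointer to the single abstract state ``in the tail'' is sound for all temporal modalities, which is where the $2^{|\Var|}$ case analysis (and hence the exponential) seems unavoidable in this approach. Modulo that bookkeeping, both translations are verified by induction on formula structure using the equivalence $enc(w)^{i}=enc(w)^{j}\Leftrightarrow(i=j\text{ or }i,j\geq|w|)$, and the whole argument specializes, once $+$ and the context modalities are dropped, to the classical Kamp equivalence between $\FOF$ and $\LTL$ on finite words.
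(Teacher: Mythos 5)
Your proposal is correct and follows essentially the same route as the paper's proof: quantification over positions is simulated by $\Eventually$ in a singleton context restricted to the $w$-part of $enc(w)$, addition by advancing a pointer in lockstep with a reference pointer frozen at position $0$, and the converse direction by an inductive translation indexed by the set of pointers already pushed into the $\{\#\}^{\omega}$ tail, with until-offsets split at the points where pointers cross the boundary --- exactly the paper's exhausted-set/partition bookkeeping and the same source of the single-exponential blow-up. The only differences are at the gadget level: you use one global frozen reference variable and a suffix-comparison equality test $\tpl{\{x,y\}}\Always\bigwedge_{a}(a[x]\leftrightarrow a[y])$ (sound here because distinct positions of $enc(w)$ have distinct suffixes), whereas the paper introduces per-atomic-formula synchronization and initialization copies of the variables and tests equality by simultaneous arrival at the last non-$\#$ position via $\tpl{\{x,y\}}\Eventually(\neg\#[x]\wedge\neg\#[y]\wedge\Next(\#[x]\wedge\#[y]))$.
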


Intuitively, when a $\CHLTL$ formula $\psi$ over $\AP\cup\{\#\}$ is
interpreted over singleton models $\{enc(w)\}$ for a given finite
trace $w$, the trace variables in the quantifier-free part of $\psi$
and the temporal modalities evaluated in different contexts can
simulate quantification over positions in $w$ and the atomic formulas
of $\FOPLUS$.
In particular, the addition predicate $z=x+y$ can be simulated by
requiring that two segments of $w$ whose endpoints are referenced by
trace variables have the same length: this is done by shifting with
the eventually modality in a suitable context the left segment of a
non-negative offset, and by checking that the endpoints of the
resulting segments coincide. Note that for trace variables $x$ and $y$
which refer to positions $i$ and $j$ of $w$, one can require that $i$
and $j$ coincide by the $\CHLTL$ formula
$\{x,y\}\Eventually (\neg \#[x]\wedge \neg \#[y]\wedge \Next(
\#[x]\wedge \#[y]))$. A detailed proof of
Theorem~\ref{theo:CharacterizationCHLTLFIniteTraces} is given in
Appendix~\ref{APP:CharacterizationCHLTLFIniteTraces}. Similarly, if we
consider the  extension of $\CHLTL$ with the past counterparts of
the temporal modalities, then the trace properties denoted by past
$\CHLTL$ formulas correspond to the ones denoted by sentences in the
variant $\FOINPLUS$ of $\FOPLUS$ over infinite words on $2^{\AP}$
(traces).  For arbitrary traces, past temporal modalities are crucial for enforcing that two   variables 
refer to the same position (for details see Appendix~\ref{APP:CharacterizationCHLTLInFIniteTraces}).  

\begin{theorem}\label{theo:CharacterizationCHLTLInFIniteTraces} Past $\CHLTL$ and $\FOINPLUS$ capture the same class
  of trace properties.
\end{theorem}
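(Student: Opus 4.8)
The plan is to prove Theorem~\ref{theo:CharacterizationCHLTLInFIniteTraces} by establishing both inclusions, mirroring the finite-trace correspondence of Theorem~\ref{theo:CharacterizationCHLTLFIniteTraces} but now working directly over infinite traces without the $\#$-padding encoding. For the direction from $\FOINPLUS$ to past $\CHLTL$, I would take a $\FOINPLUS$ sentence $\varphi$ over $\AP$ and, by a routine structural translation, build a past $\CHLTL$ formula $\psi$ such that $\{\pi\}\models\psi$ iff $\pi\models\varphi$. The idea is exactly as sketched for the finite case: first-order variables of $\varphi$ ranging over positions are simulated by trace variables, all bound to copies of the same trace $\pi$, whose individual contexts let each copy sit at an independent position. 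A context-restricted shift $\{x\}\Eventually$ advances the position pointed to by $x$ alone; the ordering $x<y$ is the existence of a positive context shift of $x$ reaching $y$'s position; the atomic predicate $P_a(x)$ becomes $a[x]$; and the addition predicate $z=x+y$ is encoded, as in the finite case, by requiring that two segments delimited by the position markers have equal length, realised by shifting one segment in a suitable context and checking coincidence of endpoints.

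The crucial new ingredient over infinite traces is position equality. In the finite-trace encoding the formula $\{x,y\}\Eventually(\neg\#[x]\wedge\neg\#[y]\wedge\Next(\#[x]\wedge\#[y]))$ used the $\#$-boundary to pin two variables to the same position, but over arbitrary traces there is no such boundary marker. This is where the past modalities become essential, and I expect it to be the main obstacle. The plan is to express that two variables $x$ and $y$ refer to the same position by simultaneously advancing both with a shared context $\{x,y\}$ and then using a past modality (relativised to the singleton context of one variable) to verify that neither can be reached from the other by a strictly positive shift---equivalently, $x\le y$ and $y\le x$. Concretely, $x<y$ is captured by $\{x\}\Eventually\Next(\text{``}x\text{ and }y\text{ coincide''})$ only once coincidence itself is available, so the equality primitive must be built first; the past operators supply the symmetric backward reachability test that makes this antisymmetry check possible without any absolute anchor in the trace.

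For the converse direction, from past $\CHLTL$ to $\FOINPLUS$, I would proceed by induction on the structure of a quantifier-free past $\CHLTL$ formula interpreted on a single trace, translating each subformula $\psi$ into a $\FOINPLUS$ formula with free position variables recording the current offsets of all trace variables in the relevant context. Context shifts $\tpl{C}$ restrict which offset variables are simultaneously incremented by the future and past temporal operators; the $\Until$ and its past dual unfold into existential quantification over an offset together with the addition predicate $+$ to express that all intermediate positions within the context advance in lockstep. The outer trace quantifiers $\exists x$ and $\forall x$ of the $\CHLTL$ sentence collapse, over a singleton model, to binding the unique trace, so they contribute no genuine alternation and leave a plain $\FOINPLUS$ sentence over $\pi$. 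The exponential blow-up noted in Theorem~\ref{theo:CharacterizationCHLTLFIniteTraces} reappears here because distinct contexts occurring in nested modalities force separate families of offset variables. Finally I would combine the two translations to conclude mutual capture of the same class of trace properties, deferring the detailed verification of the equality and addition gadgets to Appendix~\ref{APP:CharacterizationCHLTLInFIniteTraces}.
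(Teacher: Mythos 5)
Your overall architecture is the paper's: two translations mirroring Theorem~\ref{theo:CharacterizationCHLTLFIniteTraces}, with past modalities replacing the $\#$-boundary in one direction and an inductive offset-variable translation in the other. However, the single genuinely new ingredient of this theorem---the position-equality gadget over infinite traces---is exactly where your sketch fails to close, and it is left circular. You correctly observe that $x<y$ and the addition gadget reduce to coincidence, so "the equality primitive must be built first"; but your construction of equality is $x\le y \wedge y\le x$, where $\le$ is a "backward reachability" test, and reachability of one variable's position from the other's is itself a coincidence test. Nothing in the sketch ever grounds this recursion in a primitive that past $\CHLTL$ can actually evaluate. Worse, your claim that the test works "without any absolute anchor in the trace" points away from the construction that does work: the anchor is precisely the initial position of the trace, and the entire reason past modalities are needed (as the paper stresses) is that this anchor becomes detectable, because a backward shift at position $0$ is undefined. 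Concretely, the paper sets $\init(z)\DefinedAs \{z\}\,\neg\PNext\top$ ($z$ has no predecessor, i.e., sits at position $0$) and defines
\[
\psi_=(x,y)\,\DefinedAs\, \bigl(\init(x)\wedge\init(y)\bigr)\,\vee\,\{x,y\}\PNext\bigl(\top\,\PUntil\,(\init(x)\wedge\init(y))\bigr),
\]
which moves $x$ and $y$ backward \emph{in lockstep}---the shared context $\{x,y\}$ keeps the two offsets synchronized, and the shift becomes undefined as soon as either variable would pass position $0$---and checks that both reach the origin simultaneously; this holds iff the two positions coincide. Your antisymmetry idea can in fact be salvaged in the same style ($x\le y$ iff $\{x,y\}(\top\,\PUntil\,\init(x))$), but only by grounding it in $\init$, i.e., in the absolute anchor you explicitly disclaim. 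Without that grounding, no formula in your first direction ever becomes concrete.

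Two smaller remarks. First, your complexity claim for the converse direction is off: the exponential blow-up in Theorem~\ref{theo:CharacterizationCHLTLFIniteTraces} stems from the bookkeeping of variables that have crossed into the $\{\#\}^{\omega}$ padding (the disjunctions over subsets $V\subseteq C\setminus Ex$ and over partitions of the context), not from the families of offset variables; over genuinely infinite traces this bookkeeping disappears, and the paper's translation of past $\CHLTL$ into $\FOINPLUS$ is polynomial. Second, the rest of your plan (simulating first-order position variables by trace variables bound to the same trace, contexts for independent shifts, the addition gadget by equal-length segment shifts, and the inductive offset-variable translation back into $\FOINPLUS$ with the outer trace quantifiers collapsing over a singleton model) does match the paper's proof and is fine once the equality primitive is in place.
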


\noindent \textbf{Results about model checking $\CHLTL$.}
Model checking $\CHLTL$ is known to be undecidable even for formulas
where the unique temporal modality occurring in the scope of a
non-global context is $\Eventually$.
For the fragment where the unique temporal modality occurring in a
non-global context is $\Next$, then the problem is decidable.
This fragment has the same expressiveness as $\HLTL$ but it is
exponentially more succinct than $\HLTL$~\cite{BozzelliPS21}.
We provide now new insights on model checking $\CHLTL$.
On the negative side, we show that model checking is undecidable even
for the fragment $\U$ consisting of two-variable quantifier
alternation-free formulas of the form
$\exists x_1.\exists x_2.\, \psi_0\wedge \{x_2\}\Eventually
\{x_1,x_2\}\psi$, where $\psi_0$ and $\psi$ are quantifier-free
$\HLTL$ formulas. A proof of Theorem~\ref{theo:UndecidabilityMC_CHLTL} is in Appendix~\ref{APP:UndecidabilityMC_CHLTL}.

\newcounter{theorem-UndecidabilityMC_CHLTL}
\setcounter{theorem-UndecidabilityMC_CHLTL}{\value{theorem}}

\newcounter{sec-UndecidabilityMC_CHLTL}
\setcounter{sec-UndecidabilityMC_CHLTL}{\value{section}}

\begin{theorem}\label{theo:UndecidabilityMC_CHLTL}
  Model-checking against the fragment $\U$ of $\CHLTL$ is $\Sigma_0^{1}$-hard.
\end{theorem}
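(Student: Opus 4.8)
The plan is to reduce from the \emph{halting problem} for nondeterministic Minsky $2$-counter machines, which is $\Sigma_0^1$-complete~\cite{Harel86}, to model checking the fragment $\U$. Given a machine $M$, I would construct a Kripke structure $\Ku_M$ over an alphabet $\AP\DefinedAs \Delta \cup \{c_1,c_2,\#,\Beg,\Pad,\halt\}$ and a $\U$-sentence $\varphi_M = \exists x_1.\exists x_2.\, \psi_0 \wedge \{x_2\}\Eventually\{x_1,x_2\}\psi$ such that $\Lang(\Ku_M)\models\varphi_M$ iff $M$ halts. The configuration encoding would reuse the \emph{segment} encoding of Theorem~\ref{theo:UndecidabilitySIngleTraceAHLTL}: each $M$-configuration $(\delta,\nu)$ is a block $\{\Beg,\delta\}P_1\cdots P_m\{\Pad\}^{k}$ with $m=\max(\nu(1),\nu(2))$, the proposition $\#$ alternating on the cells $P_i$, and $c_\ell\in P_i$ iff $i\le\nu(\ell)$. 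The structure $\Ku_M$ would nondeterministically emit arbitrary finite concatenations of such blocks, ending in a block whose transition carries a halting marker $\halt$ and then looping on $\{\Pad\}$; note that $\Lang(\Ku_M)$ is $\omega$-regular and contains both well-formed and ill-formed candidates, so all semantic constraints must be enforced by $\varphi_M$.

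I would bind both $x_1$ and $x_2$ to the \emph{same} trace $\pi\in\Lang(\Ku_M)$, so that $\psi_0$, evaluated in the global context $\{x_1,x_2\}$ and hence acting as a single-trace $\LTL$ condition on $\pi$, checks the \emph{local} requirements: that $\pi$ is a well-formed concatenation of blocks, that the first block encodes the initial configuration $(\delta_\init,\nu_0)$, that consecutive transition labels are compatible with $\Delta$, and that some block eventually carries $\halt$. The remaining, genuinely non-regular requirement---that the counter valuation of each block is the correct $M$-successor of the previous one---is delegated to the shift part $\{x_2\}\Eventually\{x_1,x_2\}\psi$. This modality advances $x_2$ alone by some existentially chosen offset $i$ and then traverses $x_1$ and $x_2$ in lockstep, so $\psi$ is evaluated over the diagonal pairing of $\pi$ with its shifted copy $\pi^{i}$; inside $\psi$ I would use the $\#$-alternation to rule out corruption and assert cell-by-cell, via the propositions $c_1,c_2$, that the block read on $x_2$ encodes the $M$-successor (preserve, increment, decrement, or zero-test, according to the instruction) of the simultaneously read block on $x_1$.

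The crux, and the main obstacle, is that the single $\Eventually$ supplies only \emph{one} offset $i$, whereas lockstep comparison of block $k$ on $x_1$ with block $k+1$ on $x_2$ for \emph{all} $k$ requires that every block have the same length, so that $i$ equals this common length. Unlike the $\AHLTL$ proof, where stuttering expansions (Proposition~\ref{prop:CharacterizationAHLTLSemantics}) absorb the length mismatch, here the shift is rigid. The plan is therefore to force a uniform block length \emph{within} $\varphi_M$ rather than in $\Ku_M$, which, being finite-state, cannot impose the non-regular constraint that all blocks share an arbitrary common length $B$: I would have $\psi$ assert $\Always(\Beg[x_1]\leftrightarrow \Beg[x_2])$ under the shift, so that the set of $\Beg$-positions of $\pi$ is invariant under translation by $i$; together with the initial-block and halting constraints in $\psi_0$, this makes a satisfying offset $i$ exist exactly when all blocks are uniformly spaced by $i$ and each is the $M$-successor of its predecessor. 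The structure $\Ku_M$ can realize such a uniform-length encoding of any genuine halting run simply by padding every block with $\{\Pad\}$ up to the common length $B$ attained on the run, so that $M$ halts iff some $\pi\in\Lang(\Ku_M)$ and offset $i$ satisfy $\varphi_M$.

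I would then conclude with the two-way correctness argument and deduce $\Sigma_0^1$-hardness. The hardest technical point is precisely the alignment argument above: verifying that the conjunction of the local constraints in $\psi_0$ with the shift-invariance condition in $\psi$ genuinely forces a \emph{single} uniform offset, and admits no weaker or spurious offset (for instance one that skips or merges blocks), is where the care is needed.
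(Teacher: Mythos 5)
Your overall route is exactly the paper's: reduce from the halting problem for nondeterministic Minsky $2$-counter machines, let a finite Kripke structure emit segment-encoded configuration sequences, bind both variables to one trace, evaluate the successor constraints over the diagonal pairing of the trace with a shifted copy of itself, and -- crucially -- observe that a halting run has uniformly bounded counters, so the padding proposition lets every segment be stretched to one common length $B$, which is what makes the single rigid offset sufficient. However, the step you yourself flag as ``where the care is needed'' is a genuine gap, and the paper closes it with a device your construction lacks: $\Ku_M$ marks exactly the \emph{second} segment with a special proposition (the paper repurposes $\#$ for this, dropping the $\#$-alternation inside segments that you carry over from the $\AHLTL$ proof -- alternation is pointless here, since with a rigid shift there are no stuttering expansions to corrupt counter values), and the formula asserts $\#[x_2]$ immediately after the shift, forcing the offset to land precisely at the start of the second segment. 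Your substitute -- $\Beg$-set invariance plus the initial-block and halting constraints -- does not pin the offset. First, the unrelativized $\Always(\Beg[x_1]\leftrightarrow\Beg[x_2])$ is falsified at the last $\Beg$ of $x_1$, where $x_2$ already sits in the terminal padding, so the biconditional must be relativized to positions before the tail. But once relativized, an offset $i=2B$ also passes it: the violating lockstep positions become vacuous. The succession constraints then compare block $k$ with block $k+2$, splitting the trace into two interleaved chains connected only by the regular transition-label compatibility in $\psi_0$, which constrains control locations but not counter values. When the number of blocks is even, the halting block lies on the chain that does \emph{not} start at the initial configuration; a machine that never halts from the initial configuration but has some configuration (with suitable counter values) from which a halting configuration is reachable would then wrongly yield a model. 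So your claim that a satisfying offset exists ``exactly when'' blocks are uniformly spaced by $i$ and consecutive is not established, and as stated the reduction is unsound.

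A second, smaller omission: nothing in your $\varphi_M$ actually forces $x_1$ and $x_2$ onto the same trace. You describe the intended assignment as binding both to one $\pi$, but under $\exists x_1.\exists x_2.$ soundness requires enforcing this, which the paper does by making $\bigwedge_{p\in\AP}\Always(p[x_1]\leftrightarrow p[x_2])$ the first conjunct of the quantifier-free part. If $x_2$ were bound to a distinct well-formed trace $\pi'$, the shift constraints would only assert that the $(k{+}1)$-th block of $\pi'$ is the $M$-successor of the $k$-th block of $\pi$, which chains nothing within either trace and is again easily faked. With the equality conjunct made explicit and the second-segment anchor added, your construction coincides with the paper's, and the two-way correctness argument goes through as you outline.
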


By Theorem~\ref{theo:UndecidabilityMC_CHLTL}, $\CHLTL$ model checking
becomes undecidable whenever in a formula a non-singleton context $C$
occurs within a distinct context $C'\neq C$.
Thus, we consider the fragment of $\CHLTL$, called \emph{simple
  $\CHLTL$}, where each context $C$ which occurs in a distinct context
$C'\neq C$ is a singleton. Note that simple $\CHLTL$ subsumes $\HLTL$.

\begin{theorem}\label{theo:DecidabilityMCSimpleCHLTL} The model
  checking problem of simple $\CHLTL$ is decidable.
\end{theorem}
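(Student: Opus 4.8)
The plan is to translate every simple $\CHLTL$ sentence into an equivalent $\FOE$ sentence and then invoke the decidability of $\FOE$ model checking~\cite{CoenenFHH19}. Recall that $\FOE$ is interpreted over the disjoint union of the positions of all the traces of a model, where $<$ orders the positions inside a single trace and the equal-level predicate $E$ links positions carrying the same index on (possibly) different traces. I would represent each trace variable $x$ by a first-order variable $z_x$ denoting the current position of the pointed trace assigned to $x$: a trace quantifier $\exists x$ (resp.\ $\forall x$) is encoded by quantifying over an \emph{anchor} at level $0$, namely the unique $<$-minimal position of its trace, definable by $\neg\exists w\,(w<z)$, and $\Rel{p}{x}$ becomes $P_p(z_x)$. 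The Boolean connectives are translated homomorphically, so the whole difficulty lies in the temporal modalities evaluated inside the various contexts.

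The conceptual core is a \emph{synchronization invariant}: in a simple formula, whenever a temporal modality is evaluated with a non-singleton active context $C$, all the variables of $C$ sit at the same level. I would prove this by analyzing the sequence of active contexts along a path from the root of the syntax tree. Simplicity forces the non-singleton contexts on such a path to be ``sticky'': by the very definition of the fragment, a non-singleton context can be entered only from the global context or from itself, and once the active context becomes a singleton it can be followed only by singletons. Consequently, at any point where the active context $C$ is non-singleton, every temporal modality encountered before has advanced either all the variables (in the global context) or exactly the variables of $C$ (in the context $C$ itself); in both cases the advance is uniform over $C$, so all variables of $C$ accumulate the same total offset and therefore share a common level.

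Granting the invariant, the translation of the modalities splits into two regimes. Inside a singleton context $\tpl{\{x\}}$ only $x$ progresses, so $\Next$ and $\Until$ are rendered purely in $\FO$ over the single trace $x$ (existentially quantifying a later position on trace $x$ and checking the intermediate ones), exactly as in the Kamp-style translation of $\LTL$. Inside a non-singleton context $C$ the operators must advance \emph{all} the variables of $C$ by one common offset $i$; here the invariant is decisive, because the sources $\{z_x\}_{x\in C}$ are synchronized ($E(z_x,z_y)$ for $x,y\in C$), so a uniform advance is captured by existentially quantifying, for each $x\in C$, a position $w_x$ on trace $x$ with $z_x\le w_x$ and requiring the targets to be again synchronized, i.e.\ $E(w_x,w_y)$ for all $x,y\in C$; the ``$\forall k<i$'' part of $\Until$ is expressed by universally quantifying a synchronized family of intermediate positions, each obtained as the unique $E$-counterpart on its trace of a chosen representative. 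This is precisely where simplicity is indispensable: in full $\CHLTL$ a non-singleton context may be entered while its variables are at distinct levels, so a uniform offset would require genuine \emph{addition} between levels, matching the $\FOPLUS$ characterization of Theorem~\ref{theo:CharacterizationCHLTLFIniteTraces} and the undecidability of Theorem~\ref{theo:UndecidabilityMC_CHLTL}, which $\FOE$ cannot express.

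The main obstacle I anticipate is the bookkeeping in the synchronization lemma together with the faithful $\FOE$-encoding of the uniform-offset $\Until$: one must ensure that the existentially and universally quantified families of positions are genuinely tied together by $E$ and by the trace of each variable, and that nested singleton contexts occurring inside a non-singleton context (which simplicity allows) do not disturb the levels of the remaining $C$-variables. Once the translation is shown correct by structural induction, decidability of model checking simple $\CHLTL$ follows immediately from the decidability of $\FOE$ model checking; moreover the fragment subsumes $\HLTL$, since $\HLTL$ evaluates all its modalities in the global context and is therefore trivially simple.
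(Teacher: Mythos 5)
Your proposal is correct and follows essentially the same route as the paper: both translate simple $\CHLTL$ in polynomial time into $\FOE$ and invoke the decidability of $\FOE$ model checking, rendering modalities under singleton contexts Kamp-style along a single trace and modalities under a non-singleton context $C$ as existentially/universally quantified families of positions tied together by the equal-level predicate $E$, which is sound precisely because the variables of $C$ are at a common level when $C$ is entered. Your synchronization invariant is exactly the paper's notion of an assignment being \emph{consistent with} the context $C$; the paper secures it by rewriting every simple formula as a Boolean combination of formulas $\tpl{C}\psi$ with $\psi$ containing only singleton contexts (so non-singleton contexts are entered only at initial, hence synchronized, positions), whereas you establish it by analyzing context nesting along the syntax tree---a difference of packaging, not of method.
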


Theorem~\ref{theo:DecidabilityMCSimpleCHLTL} is proved by a polynomial time translation of simple $\CHLTL$ formulas
into equivalent sentences of \emph{first-order logic $\FOE$ with the equal-level predicate $E$}~\cite{Finkbeiner017} whose model checking problem is known to be decidable~\cite{CoenenFHH19}.
This logic is interpreted over sets $\Lang$ of traces, and first-order
variables refer to pointed traces over $\Lang$. In simple $\CHLTL$,
the evaluation of temporal modalities is subdivided in two phases. In
the first phase, modalities are evaluated by a synchronous traversal
of the traces bound to the variables in a non-singleton context. In
the second phase, the temporal modalities are evaluated along a single
trace and singleton contexts allows to switch from a trace to another
one by enforcing a weak form of mutual temporal relation. This
behaviour can be encoded in $\FOE$ (for details, see
Appendix~\ref{APP:DecidabilityMCSimpleCHLTL}).

 \subsection{$\HU$ versus $\AHLTL$, $\SHLTL$, and  $\CHLTL$}\label{sec:HUVersusOthers}

 Both $\SHLTL$ and $\CHLTL$ are 
 subsumed by $\HU$ and
 Hyper $\AAWA$~\cite{BozzelliPS21}.
 In particular, quantifier-free formulas of $\SHLTL$ and $\CHLTL$ can
 be translated in polynomial time into equivalent B\"{u}chi
 \AAWA.
 Corollary~\ref{cor:FromAHLTLToHU} shows that $\AHLTL$ is subsumed by
 $\HU$ as well.
 Thus, since $\AHLTL$ and $\SHLTL$ are expressively incomparable (by
 Corollary~\ref{cor:IncomparabilityAHLTL_SHLTL}), there is an $\HU$
 formula which cannot be expressed in $\AHLTL$ (resp., $\SHLTL$). 
 Therefore, we obtain the following corollary.

 \begin{corollary}\label{cor:ExpressivenessHU}  $\HU$ is
   strictly more expressive than $\AHLTL$ and $\SHLTL$, and subsumes $\CHLTL$.
\end{corollary}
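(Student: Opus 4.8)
The plan is to combine three subsumption facts with the incomparability of $\AHLTL$ and $\SHLTL$. First I would record that $\HU$ subsumes each of the three logics: the inclusion $\AHLTL\leq\HU$ is exactly Corollary~\ref{cor:FromAHLTLToHU}, while $\SHLTL\leq\HU$ and $\CHLTL\leq\HU$ are the polynomial-time translations of quantifier-free $\SHLTL$ and $\CHLTL$ specifications into B\"{u}chi $\AAWA$ recalled at the start of this subsection (from~\cite{BozzelliPS21}), together with the expressive equivalence of $\HU$ and Hyper $\AAWA$. The inclusion $\CHLTL\leq\HU$ already discharges the last assertion of the corollary, so only the two strictness claims $\AHLTL<\HU$ and $\SHLTL<\HU$ remain.

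For each strictness claim it suffices to exhibit an $\HU$ formula with no equivalent in the smaller logic; since we already have $\AHLTL\leq\HU$ and $\SHLTL\leq\HU$, producing such a witness upgrades the corresponding inclusion to a strict one. The key idea is to transport the separating formulas supplied by incomparability through the $\HU$-subsumption. By Corollary~\ref{cor:IncomparabilityAHLTL_SHLTL}, $\AHLTL$ and $\SHLTL$ are expressively incomparable, so in particular $\SHLTL\not\leq\AHLTL$: there is an $\SHLTL$ formula $\psi$ with no equivalent $\AHLTL$ formula. Translating $\psi$ via $\SHLTL\leq\HU$ yields an $\HU$ formula equivalent to $\psi$, which therefore also has no $\AHLTL$ equivalent; hence $\HU\not\leq\AHLTL$, and together with $\AHLTL\leq\HU$ this gives $\AHLTL<\HU$.

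The second strictness claim is fully symmetric: the other half of incomparability, $\AHLTL\not\leq\SHLTL$, provides an $\AHLTL$ formula $\varphi$ with no $\SHLTL$ equivalent; its image under $\AHLTL\leq\HU$ is an $\HU$ formula equivalent to $\varphi$ and hence inexpressible in $\SHLTL$, so $\HU\not\leq\SHLTL$ and thus $\SHLTL<\HU$. There is essentially no hard step here, as the argument is bookkeeping over the already-established inclusions and the incomparability result; the only point demanding a little care is pairing the correct direction of incomparability with the correct translation, namely using $\SHLTL\not\leq\AHLTL$ together with $\SHLTL\leq\HU$ to separate $\HU$ from $\AHLTL$, and dually using $\AHLTL\not\leq\SHLTL$ with $\AHLTL\leq\HU$ to separate $\HU$ from $\SHLTL$.
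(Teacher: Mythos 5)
Your proof is correct and follows essentially the same route as the paper: the subsumptions $\AHLTL\leq\HU$ (Corollary~\ref{cor:FromAHLTLToHU}), $\SHLTL\leq\HU$ and $\CHLTL\leq\HU$ (via the B\"{u}chi $\AAWA$ translations of~\cite{BozzelliPS21}), combined with the incomparability of $\AHLTL$ and $\SHLTL$ (Corollary~\ref{cor:IncomparabilityAHLTL_SHLTL}) to transport separating formulas into $\HU$ and obtain both strictness claims. The paper states this argument more tersely, but your bookkeeping, including the pairing of each direction of incomparability with the matching inclusion, is exactly its reasoning.
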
 
\section{Asynchronous vs Synchronous Extensions of
  $\HLTL$}\label{sec:OverallExpressiveness}

We compare now the expressiveness of the asynchronous extensions of
$\HLTL$ against $\MSOE$~\cite{CoenenFHH19}.
$\MSOE$ is a monadic second-order logic with equality over the
signature $\{<,E\}\cup \{P_a \mid a\in \AP\}$ which syntactically
extends the monadic second-order logic of one successor $\MSO$ with
the equal-level binary predicate
$E$.
While $\MSO$ is interpreted over traces, $\MSOE$ is interpreted over
sets of traces.
A set $\Lang$ of traces induces the relational structure with domain
$\Lang\times \N$ (i.e., the set of pointed traces associated with
$\Lang$), where
%
\begin{compactitem}
\item the binary predicate $<$ is interpreted as the set of pairs of
  pointed traces in $\Lang\times \N$ of the form
  $((\pi,i_1),(\pi,i_2))$ such that $i_1<i_2$, and
\item the equal-level predicate $E$ is interpreted as the set of pairs
  of pointed traces in $\Lang\times \N$ of the form
  $((\pi_1,i),(\pi_2,i))$.
\end{compactitem}
Hence, $<$ allows to compare distinct timestamps along the same trace,
while the equal-level predicate allows to compare distinct traces at
the same timestamp. For a formal definition of the syntax and
semantics of $\MSOE$, see Appendix
\ref{APP:LogicsWIthEqualLevelPredicate}.

We  show that the considered asynchronous extensions of $\HLTL$
are not subsumed by $\MSOE$.
Intuitively, for some $k\geq 1$, the logic $\AHLTL$ (resp., $\SHLTL$, resp., $\CHLTL$) can
express hyperproperties whose set of models having cardinality $k$
(\emph{$k$-models}) can be encoded by a non-regular set of traces.
On the other hand, we show that the encoding of the $k$-models of a
$\MSOE$ formula always leads to a regular language.

Let $k\geq 1$. We consider the set of atomic propositions given by
$\AP\times [1,k]$ for encoding sets $\Lang$ of traces (over $\AP$)
having cardinality $k$ by traces over $\AP\times [1,k]$.

A trace $\nu$ over $\AP\times [1,k]$ is \emph{well-formed} if for all
$\ell,\ell'\in [1,k]$ with $\ell\neq \ell'$, there is $i\in\N$ and
$p\in\AP$ so that $(p,\ell)\in \nu(i)$ iff $(p,\ell')\notin \nu(i)$.
A well-formed trace $\nu$ over $\AP\times [1,k]$ encodes the set
$\Lang(\nu)$ of the traces $\pi$ (over $\AP$) such that there is
$\ell\in[1,k]$ where $\pi$ corresponds to the projection of $\nu$ over
$\AP\times \{\ell\}$, i.e.~for each $i\geq 0$,
$\pi(i)=\{p\in \AP\mid (p,\ell)\in \nu(i)\}$. Since $\nu$ is
well--formed, $|\Lang(\nu)|= k$ and we say that $\nu$ is a
\emph{$k$-code} of $\Lang(\nu)$. Note that for a set $\Lang$ of traces
(over $\AP$) of cardinality $k$, each ordering of the traces in
$\Lang$ induces a distinct $k$-code.

  Given an hyperproperty specification $\xi$ over $\AP$, a \emph{$k$-model of $\xi$} is a set of traces satisfying
  $\xi$ having cardinality $k$. The \emph{$k$-language of $\xi$} is the set of $k$-codes  associated with the $k$-models
  of $\xi$. The specification $\xi$ is \emph{$k$-regular} if its $k$-language is a regular language over $\AP\times [1,k]$.

  We first show that for each $k\geq 1$, $\MSOE$ sentences are $k$-regular.

\newcounter{lemma-MSOEkregular}
\setcounter{lemma-MSOEkregular}{\value{lemma}}
\newcounter{sec-MSOEkregular}
\setcounter{sec-MSOEkregular}{\value{section}}

\begin{lemma}\label{lemma:MSOEkregular} Let $k\geq 1$ and  $\varphi$ be a $\MSOE$
  sentence over $\AP$. Then, one can construct a $\MSO$ sentence $\varphi'$
  over $\AP\times [1,k]$ whose set of models is the $k$-language of
  $\varphi$.
\end{lemma}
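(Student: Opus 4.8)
The plan is to turn the fixed cardinality $k$ into a \emph{finite coloring} of the domain. For a well-formed $k$-code $\nu$ over $\AP\times[1,k]$, let $\pi_\ell$ denote the projection of $\nu$ on $\AP\times\{\ell\}$, so that $\Lang(\nu)=\{\pi_1,\ldots,\pi_k\}$ with pairwise distinct $\pi_\ell$. The map $(\pi_\ell,i)\mapsto(\ell,i)$ is then a bijection between the domain $\Lang(\nu)\times\N$ of the $\MSOE$-structure and $[1,k]\times\N$, and it is an isomorphism of relational structures: the order $<$ corresponds to ``equal color and strictly smaller position'', the equal-level predicate $E$ corresponds to ``equal position'', and $P_a$ at $(\pi_\ell,i)$ corresponds to $(a,\ell)$ being read at position $i$ of $\nu$. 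Since $\MSOE$ is invariant under isomorphism and $\nu$ is well-formed (hence $|\Lang(\nu)|=k$), it suffices to express, by an $\MSO$ sentence over the single trace $\nu$, satisfaction of $\varphi$ in this colored copy of the structure.

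First I would define, by structural induction on an $\MSOE$ formula $\psi$, a translation $T_c(\psi)$ parameterized by a color map $c$ assigning a color in $[1,k]$ to each free first-order variable of $\psi$. A first-order variable $x$ (ranging over $\Lang(\nu)\times\N$) is simulated by the color $c(x)$ together with an ordinary position variable $\hat x$ over $\N$; a monadic variable $X$ (ranging over subsets of $\Lang(\nu)\times\N$) is simulated by a $k$-tuple $X_1,\ldots,X_k$ of monadic variables over $\N$, with $X_\ell$ recording the positions of the color-$\ell$ elements of $X$. The atomic clauses for $P_a$, $E$ and membership read
\[
T_c(P_a(x))=P_{(a,c(x))}(\hat x),\qquad T_c(E(x,y))=(\hat x=\hat y),\qquad T_c(x\in X)=\hat x\in X_{c(x)},
\]
while $T_c(x<y)$ is $\hat x<\hat y$ if $c(x)=c(y)$ and $\false$ otherwise (distinct traces being $<$-incomparable), and likewise $T_c(x=y)$ is $\hat x=\hat y$ if $c(x)=c(y)$ and $\false$ otherwise. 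Boolean connectives are translated homomorphically, and the quantifier clauses are
\[
T_c(\exists x.\,\psi)=\bigvee_{\ell\in[1,k]}\exists \hat x.\,T_{c[x\mapsto\ell]}(\psi),\qquad T_c(\exists X.\,\psi)=\exists X_1\cdots\exists X_k.\,T_c(\psi),
\]
and dually for $\forall$. The finite disjunction over colors is exactly what replaces quantification over the $k$ disjoint copies of $\N$, and the tuple $X_1,\ldots,X_k$ replaces quantification over subsets that cut across the copies.

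Next I would capture well-formedness of the code by the $\MSO$ (indeed first-order) sentence
\[
\Phi_{\mathrm{wf}}\;\DefinedAs\;\bigwedge_{1\le \ell<\ell'\le k}\exists x\,\bigvee_{a\in\AP}\bigl(P_{(a,\ell)}(x)\leftrightarrow\neg P_{(a,\ell')}(x)\bigr),
\]
which states that the $k$ projections of $\nu$ are pairwise distinct, and set $\varphi'\DefinedAs \Phi_{\mathrm{wf}}\wedge T_\emptyset(\varphi)$, where $T_\emptyset$ is the translation applied with the empty color map (legitimate since $\varphi$ is a sentence, so the final formula is closed). Correctness follows from an inductive invariant: for every well-formed $\nu$, color map $c$, and valuation, the simulated valuation satisfies $T_c(\psi)$ on $\nu$ iff the corresponding valuation---sending $x$ to $(\pi_{c(x)},\hat x)$ and $X$ to $\bigcup_\ell\{(\pi_\ell,i)\mid i\in X_\ell\}$---satisfies $\psi$ in the structure induced by $\Lang(\nu)$. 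Consequently the models of $\varphi'$ are exactly the well-formed codes $\nu$ with $\Lang(\nu)\models\varphi$, which by definition is the $k$-language of $\varphi$.

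I expect the only delicate point to be the monadic second-order case of this invariant, where one must verify that the correspondence between subsets of $\Lang(\nu)\times\N$ and $k$-tuples of subsets of $\N$ is a bijection compatible with the membership clause; but given the isomorphism of the underlying relational structures established at the outset, this is routine bookkeeping rather than a genuine obstacle, and the base cases for $<$ and $E$ are immediate from the definition of the color bijection.
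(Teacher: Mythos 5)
Your proposal is correct and takes essentially the same approach as the paper: your color map $c$ is the paper's choice function $g$, your translation clauses (colored monadic predicates, position equality for $E$, same-color requirement for $<$ and $=$, disjunction over colors for first-order quantifiers, $k$-tuples of monadic variables for second-order quantifiers) coincide clause-by-clause with the paper's mapping $f(\psi,g)$, and your well-formedness sentence $\Phi_{\mathrm{wf}}$ is exactly the paper's second conjunct. The only cosmetic difference is that you reuse a single position variable $\hat x$ per first-order variable where the paper introduces $k$ fresh copies $y_{i,1},\ldots,y_{i,k}$.
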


A proof of Lemma~\ref{lemma:MSOEkregular} is in
Appendix~\ref{APP:MSOEkregular}.  Since $\MSO$ sentences capture only
regular languages of traces, by Lemma~\ref{lemma:MSOEkregular}, we
obtain the following result.

\begin{proposition}\label{prop:MSOEkregular} Let $k\geq 1$ and
  $\varphi$ be a  $\MSOE$ sentence over $\AP$. Then, $\varphi$ is
  $k$-regular.
\end{proposition}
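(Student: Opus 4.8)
The plan is to derive this immediately from Lemma~\ref{lemma:MSOEkregular} together with the classical B\"{u}chi correspondence between $\MSO$ (i.e.~S1S) and $\omega$-regular languages. Since Lemma~\ref{lemma:MSOEkregular} already performs all the genuine work---eliminating the equal-level predicate by folding the $k$ traces of a $k$-model into a single trace over $\AP\times[1,k]$---this last proposition is essentially a one-line consequence, and I expect the author's argument to be equally short.

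Concretely, first I would apply Lemma~\ref{lemma:MSOEkregular} to the given $\MSOE$ sentence $\varphi$ over $\AP$ and the fixed $k\geq 1$, obtaining a $\MSO$ sentence $\varphi'$ over the enlarged alphabet $\AP\times[1,k]$ whose set of models is exactly the $k$-language of $\varphi$. Next I would invoke B\"{u}chi's theorem: every $\MSO$-definable set of infinite words is $\omega$-regular, so the set of models of $\varphi'$ is a regular language over $\AP\times[1,k]$. Since this set coincides with the $k$-language of $\varphi$, that $k$-language is regular, which is precisely the definition of $\varphi$ being $k$-regular. As the argument nowhere depends on the particular value of $k$, it holds uniformly for every fixed $k\geq 1$.

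There is no real obstacle at the level of this proposition: the statement is obtained by chaining the lemma with a standard theorem, and the only content is observing that the two set-of-models coincide. The genuine difficulty lives entirely in Lemma~\ref{lemma:MSOEkregular}, whose construction of $\varphi'$ must simulate, inside ordinary $\MSO$ over $\AP\times[1,k]$, both the equal-level predicate $E$ of $\MSOE$ and the well-formedness constraint that singles out legitimate $k$-codes. Because the cardinality $k$ of the models is fixed in advance, interleaving the $k$ traces into a single trace over $\AP\times[1,k]$ makes $E$ collapse to the ordinary ``same position'' relation, so the real work there is ensuring that $\varphi'$ faithfully quantifies over the $k$ components and rejects non-well-formed codes. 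Since I am entitled to assume Lemma~\ref{lemma:MSOEkregular}, nothing beyond the invocation of B\"{u}chi's theorem is needed here.
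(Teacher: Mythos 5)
Your proposal is correct and is exactly the paper's argument: the paper derives the proposition in one line from Lemma~\ref{lemma:MSOEkregular} combined with the fact that $\MSO$ sentences define only regular languages of traces (B\"{u}chi's theorem), which is precisely the chaining you describe. Nothing further is needed.
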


We now show that given one of the considered asynchronous extensions
$\Logic$ of $\HLTL$, there is $k\geq 1$ and a $\Logic$ formula
$\varphi$ such that $\varphi$ is \emph{not} $k$-regular.

\begin{proposition}\label{prop:AsynchronousNOtkRegular}
  There is a $\CHLTL$ formula over $\{p\}$ which is not $1$-regular,
  and there are $\AHLTL$ and $\SHLTL$ formulas  over $\{p\}$
    which   are not $2$-regular.
\end{proposition}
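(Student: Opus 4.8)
The plan is to prove all three claims by the same template: exhibit a concrete formula of the relevant logic and show that its $k$-language is not $\omega$-regular, with $k=1$ for $\CHLTL$ and $k=2$ for $\AHLTL$ and $\SHLTL$. In each case the non-regularity is ultimately reduced to the classical fact that a \emph{length-doubling} relation between two synchronous tracks (a $\{a^nb^{2n}\}$-style constraint) is not Büchi-recognizable, proved by a pumping argument; when the chosen formula has additional models, I first intersect its $k$-language with a suitable $\omega$-regular \emph{shape} language to isolate the bad sublanguage (regular languages being closed under intersection).

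For $\CHLTL$ I would reuse the formula $\varphi_{\textit{suff}}$ of Theorem~\ref{theo:A_S_HLTLinexpressibilitypSuffix}, which over $\AP=\{p\}$ expresses the suffix property. Its single-trace models are exactly the traces $\pi$ with a proper suffix $\pi^i=\pi$, i.e.\ the purely periodic traces; identifying $\AP\times\{1\}$ with $\AP$, this set \emph{is} the $1$-language of $\varphi_{\textit{suff}}$. It thus suffices to show that the purely periodic traces are not $\omega$-regular. I would argue by contradiction: from a Büchi automaton with $m$ states recognizing them, take an accepting run on $(a^m b)^\omega$ (writing $a=\emptyset$, $b=\{p\}$); by pigeonhole a state repeats inside the first $a$-block, and pumping that loop yields accepting runs on words $a^{m'}b\,(a^m b)^\omega$ with $m'>m$. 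Such a word is ultimately periodic but \emph{not} purely periodic (its sequence of $a$-run lengths is $m',m,m,\dots$, which is non-periodic), a contradiction. Hence $\varphi_{\textit{suff}}$ is not $1$-regular.

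For $\AHLTL$ I would take the formula $\varphi_A$ from Corollary~\ref{cor:IncomparabilityAHLTL_SHLTL}. Writing $\pi_n\DefinedAs(\emptyset p)^n\emptyset^\omega$, the analysis underlying that corollary shows that a two-element model $\{\pi_a,\pi_c\}$ satisfies $\varphi_A$ iff one trace doubles the other, i.e.\ $c=2a$ or $a=2c$ (the quantifier-free part aligns the last $p$ of a $\psi_1$-expansion of one trace with that of a $\psi_2$-expansion of the other). Intersecting the $2$-language of $\varphi_A$ with the $\omega$-regular shape language whose two projections both have the form $(\emptyset p)^*\emptyset^\omega$ therefore yields exactly the codes of the pairs $(\pi_a,\pi_{2a})$, $a\geq 1$. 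In such a code the region where the shorter trace has already become constant but the longer one still carries $p$'s has length equal to the active part of the shorter trace; a pumping argument confined to this ``tape-$2$-only'' region inserts extra $p$-blocks on the longer trace only, producing an accepted code of $(\pi_a,\pi_{a'})$ with $a'\neq 2a$, a contradiction. So the $2$-language is not $\omega$-regular and $\varphi_A$ is not $2$-regular.

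The delicate case is $\SHLTL$: Proposition~\ref{prop:SingleTraceSHLTL} rules out any $1$-regularity failure, and Proposition~\ref{prop:InexpressivityCountingsHLTL} says $\SHLTL$ cannot perform the very doubling test used for $\AHLTL$. I would instead exploit that the stutter-relativized modalities compare traces \emph{up to stuttering}, and arrange that this comparison forces a length-doubling in the encoding. Let $\alpha(x)$ and $\beta(x)$ be the ($\HLTL$, i.e.\ $\Gamma=\emptyset$) formulas defining the shapes $(\emptyset p)^*\emptyset^\omega$ and $(\emptyset^2 p^2)^*\emptyset^\omega$, and let $\chi(x_1,x_2)\DefinedAs\Always_{\{p\}}(p[x_1]\leftrightarrow p[x_2])$, where $\Always_{\{p\}}$ is the relativized always $\neg(\top\,\Until_{\{p\}}\,\neg(\cdot))$. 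Then $\chi$ holds exactly when $\stfr_{\{p\}}(\pi_1)=\stfr_{\{p\}}(\pi_2)$, and I claim the $2$-models of $\exists x_1\exists x_2.(\alpha(x_1)\wedge\beta(x_2)\wedge\chi(x_1,x_2))$ are precisely the pairs $\{\pi_n,(\emptyset^2 p^2)^n\emptyset^\omega\}$; their codes again realize a length-doubling and are non-regular by the same tape-$2$-only pumping. I expect the main obstacle to be exactly this stutter computation: one must check that the relativized successor collapses the doubled blocks of the $\beta$-trace, giving $\stfr_{\{p\}}((\emptyset^2 p^2)^b\emptyset^\omega)=\pi_b$, while the ``reset to $+1$'' rule, once a trace is eventually constant, leaves the $\beta$-trace and the $\alpha$-trace both uncollapsed in their tails, so that $\stfr_{\{p\}}(\pi_a)=\pi_a$; equality of the two $\{p\}$-stutter traces then holds iff the two block counts agree, which is what forces the doubling in the encoding. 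Once this is in place, the remaining non-regularity step is the routine pumping argument above.
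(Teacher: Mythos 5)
Your proposal is correct, and for two of the three claims it takes a genuinely different route from the paper. For $\CHLTL$ you do exactly what the paper does (use $\varphi_{\textit{suff}}$, whose $1$-models are the purely periodic traces); the paper merely asserts non-regularity, while you supply the pumping argument, which is sound (the pumped word $a^{m'}b(a^mb)^\omega$ cannot be of the form $u^\omega$). For $\AHLTL$ and $\SHLTL$, however, the paper does not reuse the doubling machinery of Section~\ref{sec:StutteringHLTL}: it introduces a fresh family $\pi_{k,n}\DefinedAs\emptyset^{k}\cdot(\{p\}\cdot\emptyset)^{n}\cdot\emptyset^{\omega}$ and purpose-built formulas $\psi_A$, $\psi_S$ whose $2$-models are the pairs $\{\pi_{1,n},\pi_{k,n}\}$, i.e.\ \emph{equal $p$-counts under an arbitrary offset}; non-regularity is then immediate in the style of $a^nb^n$ (for $k>2n$ the two $p$-regions are disjoint). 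You instead rely on \emph{length-doubling}: reusing $\varphi_A$ for $\AHLTL$, and building a new $\alpha$-$\beta$-$\chi$ formula for $\SHLTL$ that exploits the same $\Always_{\{p\}}$ stutter-comparison mechanism as the paper's $\psi_S$, together with the (correct) computation $\stfr_{\{p\}}((\emptyset^2p^2)^b\emptyset^\omega)=\pi_b$. What the paper's choice buys is a shorter verification: the characterization of the $2$-models of $\psi_A,\psi_S$ is almost immediate, whereas your route requires upgrading the analysis of Section~\ref{sec:StutteringHLTL} from ``$\Lang_n\models\varphi_A$ and $\Lang'_n\not\models\varphi_A$'' to a full characterization of all $2$-models of $\varphi_A$; what your route buys is economy of concepts (one doubling gadget serves both the expressiveness separation and the non-regularity result) and an explicit closure-under-intersection template that isolates the bad sublanguage.

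Two details you should tighten, though neither is a genuine gap. First, for $\AHLTL$ you need that \emph{every} $2$-model of $\varphi_A$ consists of two traces of shape $(\emptyset p)^k\emptyset^\omega$; this does hold, because $\varphi_A$ begins with $\forall x_1.\forall x_2$ and the same-trace assignments force each trace of the model to admit a stuttering expansion satisfying $\psi_1$, which (since such shapes have no repeated non-tail letters) forces the trace itself to have that shape --- but this step must be said, since otherwise the ``exactly the codes of $(\pi_a,\pi_{2a})$'' claim after intersection is unjustified. Second, for $\SHLTL$ your shapes are written with Kleene stars, $(\emptyset p)^{*}\emptyset^\omega$ and $(\emptyset^2p^2)^{*}\emptyset^\omega$; allowing zero repetitions makes $\emptyset^\omega$ satisfy both $\alpha$ and $\beta$, so any pair $\{\emptyset^\omega,\rho\}$ becomes a $2$-model (bind $x_1=x_2=\emptyset^\omega$) and the ``precisely'' claim fails as stated. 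Either require at least one block (as the paper's $\psi_1,\psi_2$ do) or invoke your intersection step also here; with either fix the argument goes through.
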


\begin{proof}

  Let $\AP=\{p\}$. The $\CHLTL$ formula $\varphi_{\textit{suff}}$
  defined in Section~\ref{sec:ContextHyper} whose models consist of
  the singletons $\{\pi\}$ such that $\pi$ satisfies the suffix
  property is not   $1$-regular.

  Consider now $\AHLTL$ and $\SHLTL$. For all $k,n\geq 1$, let
  $\pi_{k,n}\DefinedAs\emptyset^{k} \cdot (\{p\}\cdot \emptyset
  )^{n}\cdot \emptyset^{\omega}$ and
  $\Lang_{k,n}\DefinedAs\{\pi_{1,n},\pi_{k,n}\}$. We denote by
  $\Lang_2$ the set of traces over $\AP\times [1,2]$ which are
  $2$-codes of the sets $\Lang_{k,n}$ for $k>1$ and $n\geq 1$.
  Clearly, $\Lang_2$ is not regular. Let $\theta(x)$ be a one-variable
  quantifier-free $\HLTL$ formula   capturing the traces
  $\pi_{k,n}$ for $k,n\geq 1$.  We define a $\SHLTL$ formula $\psi_S$
  and an $\AHLTL$ formula $\psi_A$ whose $2$-language is
  $\Lang_2$: \vspace{0.1cm}

$\psi_S\DefinedAs\exists x_1.\,\forall x_2.\, \Next p[x_1]\wedge \theta(x_1)\wedge \theta(x_2)\wedge \Always_{\{p\}}(p[x_1] \leftrightarrow p[x_2]);
 $

 $\psi_A\DefinedAs \exists x_1.\,\forall x_2.\,\forall x_3.\,\ET.\, \Next p[x_1]\wedge   \theta(x_2) \wedge \theta(x_3) \wedge \Always(p[x_2] \leftrightarrow p[x_3])$.
\end{proof}

\noindent{}By Propositions~\ref{prop:MSOEkregular}
and~\ref{prop:AsynchronousNOtkRegular}, and since $\AHLTL$, $\SHLTL$,
and $\CHLTL$ are  subsumed by $\HU$
(Corollary~\ref{cor:ExpressivenessHU}), we obtain the following
result.

\begin{corollary}\label{cor:InExpressivenessMSOE} $\AHLTL$, $\SHLTL$,
  $\CHLTL$, and $\HU$ are not subsumed by $\MSOE$.
\end{corollary}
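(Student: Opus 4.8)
The plan is to combine the $k$-regularity of $\MSOE$ (Proposition~\ref{prop:MSOEkregular}) with the failure of $k$-regularity for the asynchronous logics (Proposition~\ref{prop:AsynchronousNOtkRegular}), using the fact that $k$-regularity is preserved under logical equivalence. The conceptual hinge I would record first is this invariance: if two hyperproperty sentences $\xi$ and $\xi'$ are equivalent, then for every $k\geq 1$ they have exactly the same $k$-models, hence exactly the same set of associated $k$-codes, i.e.~the same $k$-language. Consequently $\xi$ is $k$-regular iff $\xi'$ is. This follows immediately from the definition of equivalence and of the $k$-language, but it is the step that lets us transfer regularity between logics.

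Next I would argue by contradiction for each asynchronous extension $\Logic\in\{\AHLTL,\SHLTL,\CHLTL\}$. Suppose $\Logic\leq \MSOE$. Then every $\Logic$-formula $\varphi$ admits an equivalent $\MSOE$ sentence $\varphi'$, which by Proposition~\ref{prop:MSOEkregular} is $k$-regular for \emph{every} $k\geq 1$; by the invariance observation, $\varphi$ would then be $k$-regular for every $k\geq 1$ as well. But Proposition~\ref{prop:AsynchronousNOtkRegular} exhibits a $\CHLTL$ formula over $\{p\}$ that is not $1$-regular, and $\AHLTL$ and $\SHLTL$ formulas over $\{p\}$ that are not $2$-regular. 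Each of these contradicts the conclusion above, so none of $\AHLTL$, $\SHLTL$, and $\CHLTL$ is subsumed by $\MSOE$.

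Finally, for $\HU$ I would invoke Corollary~\ref{cor:ExpressivenessHU}, which gives $\CHLTL\leq \HU$. If $\HU\leq\MSOE$ held, then transitivity of $\leq$ would yield $\CHLTL\leq\MSOE$, contradicting the previous paragraph; hence $\HU\not\leq\MSOE$ as well. I do not expect any genuine obstacle here: once Propositions~\ref{prop:MSOEkregular} and~\ref{prop:AsynchronousNOtkRegular} and Corollary~\ref{cor:ExpressivenessHU} are available, the statement is essentially immediate, and the only point requiring care is the routine verification that $k$-regularity is an invariant of the class of models and therefore passes across equivalent sentences.
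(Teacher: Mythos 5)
Your proposal is correct and follows essentially the same route as the paper: the paper likewise derives the corollary directly from Proposition~\ref{prop:MSOEkregular}, Proposition~\ref{prop:AsynchronousNOtkRegular}, and the fact that $\AHLTL$, $\SHLTL$, and $\CHLTL$ are subsumed by $\HU$ (Corollary~\ref{cor:ExpressivenessHU}). You merely make explicit the routine steps the paper leaves implicit, namely that $k$-regularity is invariant under equivalence and that subsumption is transitive.
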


%
%

\section{Conclusions}\label{Sec:Concl}

Two interesting questions are left open.
The first concerns the expressiveness of $\CHLTL$ versus $\AHLTL$ and
$\SHLTL$. We have shown that $\CHLTL$ is not subsumed by $\AHLTL$ or
$\SHLTL$. We conjecture that the converse holds too. The motivation is
that (unlike $\CHLTL$) $\AHLTL$ and $\SHLTL$ implicitly allow a
restricted form of monadic second-order quantification.
In particular, we conjecture that the hyperproperty characterizing
the sets consisting of stuttering-equivalent traces, which can be
easily expressed both in $\AHLTL$ and $\SHLTL$, cannot be captured by
$\CHLTL$.

The second question is whether $\MSOE$ is subsumed or not by $\HU$.
It is known that contrary to $\MSOE$ and $\FOE$, $\HLTL$ cannot
express requirements which relate at some point an unbounded number of
traces~\cite{BozzelliMP15}.
The main reason is that---differently from $\MSOE$ and
$\FOE$---quantifiers in $\HLTL$ only refer to the initial positions of
the traces.
Since in $\HU$ the semantics of quantifiers is the same as $\HLTL$, we
conjecture that the inexpressiveness result for $\HLTL$
in~\cite{BozzelliMP15} can be extended to $\HU$ as well.
This would imply together with the results of
Corollary~\ref{cor:InExpressivenessMSOE} that $\MSOE$ and $\HU$ are
expressively incomparable and that so are $\FOE$ and $\HU$.




\bibliographystyle{plainurl}
\bibliography{biblio}

\newpage

\makeatletter
\edef\thetheorem{\expandafter\noexpand\thesection\@thmcountersep\@thmcounter{theorem}}
\makeatother

\begin{LARGE}
  \noindent\textbf{Appendix}
\end{LARGE}

\newcounter{aux}
\newcounter{auxSec}

\section{Syntax and semantics of $\AAWA$}\label{APP:AAWA}

We recall the syntax and semantics of parity $\AAWA$~\cite{GutsfeldOO21}. We need additional definitions.

For a  set $X$, $\B^{+}(X)$   denotes the set
of \emph{positive} Boolean formulas over $X$, i.e. Boolean formulas  built from elements in $X$
using $\vee$ and $\wedge$ (we also allow the formulas $\true$ and
$\false$). A subset $Y$ of $X$  \emph{satisfies}
$\theta\in\B^{+}(X)$ iff the truth assignment that assigns $\true$
to the elements in $Y$ and $\false$ to the elements of $X\setminus
Y$ satisfies $\theta$.\vspace{0.2cm}

\noindent \textbf{Labeled trees.} A tree $T$ is a   prefix closed subset of $\N^{*}$.  Elements of $T$ are called nodes and $\varepsilon$ (the empty word) is the root of $T$. For $x\in T$, the set of children of $x$ in $T$ is the set of
nodes of the form $x\cdot n$ for some $n\in \N$.  A  path of $T$ is a maximal sequence $\pi$ of nodes such that $\pi(0)=\varepsilon$ and $\pi(i)$ is a child in $T$ of $\pi(i-1)$ for all $0<i<|\pi|$.   For an alphabet $\Sigma$, a $\Sigma$-labeled   tree is a pair $\tpl{T, \Lab}$ consisting of a  tree
 and a labelling $\Lab:T \mapsto \Sigma$  assigning to  each node in $T$ a symbol in $\Sigma$. \vspace{0.2cm}

\noindent \textbf{Syntax and Semantics of parity $\AAWA$.} Here, we consider a equivalent and slight variant of the automata in~\cite{GutsfeldOO21}.  Let $n\geq 1$. A parity $\AAWA$ with $n$ tapes (parity $\NAAWA$ for short)  over a finite alphabet $\Sigma$ is a tuple  $\Au=\tpl{\Sigma, Q,Q_0,\rho,\Omega}$, where
 $Q$ is a finite set of (control) states, $Q_0\subseteq Q$ is a set of initial states,
  $\rho:Q\times \Sigma^{n}\rightarrow \B^{+}(Q\times [1,n])$ is the transition function, and $\Omega:Q \mapsto \N$ is a parity acceptance condition assigning to each state a natural number (\emph{color}).
  If $\rho(q,\sigma)$ only consists  of disjunctions for all $q\in Q$ and $\sigma\in \Sigma^{n}$, we call an
  $\AAWA$ a Nondeterministic Asynchronous Word Automaton (\NAWA\ for short).

Let $n\geq 1$ and $\Au=\tpl{\Sigma, Q,Q_0,\rho,\Omega}$ be a parity $\AAWA$ with $n$ tapes (parity $\NAAWA$).  A run of $\Au$ over an $n$-tuple $\overrightarrow{w}=(w_1,\ldots,w_n)$ of infinite words over $\Sigma$ is a $(Q\times \N^{n})$-labeled  tree
$r=\tpl{T_r,\Lab_r}$, where each node  of $T_r$ labelled by $(q,\wp)$ with $\wp=(i_1,\ldots,i_n)$ describes a copy of the automaton that is in  state $q$ and reads the $(i_h+1)^{th}$ symbol of the input word $w_h$ for each $h\in [1,n]$. Moreover, we require that $r(\varepsilon)=(q_0,(0,\ldots,0))$ (initially, the automaton is in state $q_0$ reading the first position of each input word), and for each  $\tau\in T_r$ with $\Lab_r(\tau)=(q,\wp)$ and $\wp=(i_1,\ldots,i_n)$, there
 is a (possibly empty)  set
  $\{(q_1,d_1),\ldots,(q_k,d_k)\}\subseteq Q\times [1,n]$ for some $k\geq 0$  satisfying $\delta(q,(w_1(i_1),\ldots,w_n(i_n)))$ such that $\tau$ has $k$ children $\tau_1,\ldots,\tau_k$ and  $\Lab_r(\tau_j) = (q_j,(i_1,\ldots, i_{d_j}+1,\ldots,i_n))$ for all $1\leq j\leq k$.
   The run $r$ is accepting if for each infinite path  $\nu$ of $r$, the smallest
color of the states in $Q$ that occur infinitely often along $\nu$ is even.
    An $n$-tuple  $\overrightarrow{w}$ of infinite words over $\Sigma$ is accepted by
    $\Au$ if there exists
  an accepting run of  $\Au$ over $\overrightarrow{w}$.

We also consider B\"{u}chi $\NAAWA$ and generalized B\"{u}chi $\NAAWA$. In a B\"{u}chi $\NAAWA$, the acceptance condition is given by a set $F$ of accepting states (encodable as a parity condition with two colors): in this case, a run is accepting if each infinite
path  visits infinitely often some accepting state in $F$. In a generalized B\"{u}chi $\NAAWA$, the acceptance condition is given by a family $\F$ of sets of accepting states (B\"{u}chi components): a run is accepting if every infinite path visits a state in every set $F\in\F$ infinitely
often. By standard arguments (see e.g.~\cite{Demri2016}), a generalized B\"{u}chi $\NAAWA$ can be translated into an equivalent B\"{u}chi
$\NAAWA$ with quadratic blowup by exploiting a counter modulo the number of B\"{u}chi components.

\newpage

\section{Proofs from Section~\ref{sec:AsynchronousHLTL}}\label{APP:AsynchronousHLTL}

\subsection{Proof of Theorem~\ref{theo:AHLTLinexpressibilitypSynchronicity}}\label{APP:AHLTLinexpressibilitypSynchronicity}

In this section, we provide a proof of Theorem~\ref{theo:AHLTLinexpressibilitypSynchronicity} by showing that
$\AHLTL$ cannot express the $p$-synchronicity requirement. At this purpose, we need additional definitions.

 Let $\AP=\{p\}$ and for each $n\geq 1$, let $\pi_n$ and $\pi'_n$ be the traces defined as:
 \[
 \pi_n \DefinedAs p^{n}\cdot \emptyset^{\omega} \text{ and } \pi'_n \DefinedAs  p^{n+1}\cdot \emptyset^{\omega}
 \]
Moreover, let $\Lang_n\DefinedAs\{\pi_n\}$ and $\Lang'_n\DefinedAs\{\pi_n,\pi'_n\}$.
By construction $\Lang_n$ satisfies  the $p$-synchronicity requirement
while $\Lang'_n$  not. Thus, in order to show that $\AHLTL$ cannot express $p$-synchronicity, it suffices to prove the following result.

\begin{proposition}\label{prop:AHLTLinexpressibilitypSynchronicity} For each $\AHLTL$ formula $\psi$, there is $n\geq 1$
such that $\Lang_n\models \psi $ if and only if $\Lang'_n\models \psi $.
\end{proposition}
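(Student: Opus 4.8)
The plan is to reduce the question to a counting argument over the transition points of the traces and then to invoke a standard threshold (Ehrenfeucht--Fra\"iss\'e) property of $\LTL$. Write the $\AHLTL$ sentence $\psi$ as $Q_1 x_1\cdots Q_m x_m.\,\Phi$ with $Q_i\in\{\exists,\forall\}$ and $\Phi\in\{\ET\theta,\AT\theta\}$ for a quantifier-free $\HLTL$ formula $\theta$. First I would describe the relevant stuttering expansions. Directly from the definition, a stuttering expansion of $\pi_k=p^{k}\cdot\emptyset^{\omega}$ repeats each position a positive finite number of times, so these expansions are exactly the traces $u_A \DefinedAs p^{A}\cdot\emptyset^{\omega}$ with $A\geq k$. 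Since $\pi'_n=\pi_{n+1}$, a variable bound to $\pi_n$ has available transition points $\{A: A\geq n\}$ and one bound to $\pi'_n$ has $\{A: A\geq n+1\}$; the two sets differ only in whether the transition point may equal $n$. Given transition points $\vec A=(A_1,\dots,A_m)$, the product trace on which $\theta$ is evaluated at position $0$ carries at position $\ell$ exactly the atoms $\{p[x_j]: A_j>\ell\}$; let $f_\theta(\vec A)\in\{0,1\}$ be the resulting truth value. By Proposition~\ref{prop:CharacterizationAHLTLSemantics}, for a trace assignment whose $j$-th variable has minimal available transition point $m_j\in\{n,n+1\}$, the specification $\ET\theta$ holds iff $\exists\vec A\geq\vec m.\,f_\theta(\vec A)=1$, and $\AT\theta$ holds iff $\forall\vec A\geq\vec m.\,f_\theta(\vec A)=1$.

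The key step is a uniform-shift invariance of $f_\theta$. Note that $u_{A+1}$ differs from $u_A$ by one extra initial copy of $\{p\}$, so the product trace for $\vec A+\vec 1$ is obtained from that for $\vec A$ by prepending one more copy of its (full) initial symbol. Writing $A_{\min}\DefinedAs\min_j A_j$, the initial constant block of the product trace has length $A_{\min}$, and a standard Ehrenfeucht--Fra\"iss\'e argument for $\LTL$ shows that lengthening a constant initial block that is already longer than the temporal nesting depth $d$ of $\theta$ does not change the truth of $\theta$ at position $0$. Hence, for $n>d$ and any $\vec A\geq(n,\dots,n)$ we get $f_\theta(\vec A+\vec 1)=f_\theta(\vec A)$. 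I expect this invariance lemma (and pinning down the exact threshold, whose precise value is immaterial) to be the main technical point; the remainder is bookkeeping.

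With invariance in hand the conclusion follows. Fix $n>d$ and let $\vec m\in\{n,n+1\}^m$ be arbitrary. Since $\{\vec A\geq\vec m\}\subseteq\{\vec A\geq(n,\dots,n)\}$, the existential value over the smaller domain is dominated by, and the universal value over the smaller domain dominates, the corresponding value over $\{\vec A\geq(n,\dots,n)\}$; conversely any $\vec A\geq(n,\dots,n)$ can be shifted to $\vec A+\vec 1\geq(n+1,\dots,n+1)\geq\vec m$ with the same $f_\theta$-value, giving the reverse inequality in both cases. Thus $\exists\vec A\geq\vec m.\,f_\theta$ and $\forall\vec A\geq\vec m.\,f_\theta$ have the same values as for $\vec m=(n,\dots,n)$; in particular the truth of $\Phi$ under a trace assignment is independent of the choice of $\vec m\in\{n,n+1\}^m$. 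Consequently $\Phi$ evaluates to a single constant $v$ under every trace assignment available in $\Lang'_n$, so the outer prefix $Q_1x_1\cdots Q_m x_m$ returns $v$ on $\Lang'_n$. On the singleton $\Lang_n$ all variables are forced to $\pi_n$, i.e.\ $\vec m=(n,\dots,n)$, so $\psi$ also evaluates to $v$. Therefore $\Lang_n\models\psi$ iff $\Lang'_n\models\psi$ for every $n>d$, which proves the proposition (and hence, since $\Lang_n$ satisfies $p$-synchronicity while $\Lang'_n$ does not, that $\AHLTL$ cannot express it).
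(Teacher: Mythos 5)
Your proof is correct and follows essentially the same route as the paper's: both rest on the concrete description of stuttering expansions of $p^k\cdot\emptyset^{\omega}$ as exactly the traces $p^{A}\cdot\emptyset^{\omega}$ with $A\geq k$, together with the same key lemma that prepending one more copy of the initial symbol to a constant prefix longer than the next-nesting depth of $\theta$ (your shift $\vec A\mapsto\vec A+\vec 1$, the paper's $\rho_i\mapsto p\cdot\rho_i$) does not change truth at position $0$. The only notable difference is presentational: you treat the $\AT$ case directly and symmetrically via shift-invariance, whereas the paper reduces $\AT$ to $\ET$ by negating the quantifier-free part and dualizing the quantifier prefix; both are sound.
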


 In order to prove Proposition~\ref{prop:AHLTLinexpressibilitypSynchronicity}, we
first establish  the following preliminary result.

\begin{lemma}\label{lemma:pSynchronicityAHLTL} Let $n\geq 1$ and $\psi$ be an \HLTL\  quantifier-free formula over $\{p\}$ with trace variables $x_1,\ldots,x_k$, and whose nesting depth of next modality is at most $n-1$. Then, for all initial pointed-trace assignments
$\Pi$ over $\Lang_n$ and $\Pi'$ over $\Lang'_n$ such that $\Dom(\Pi)=\Dom(\Pi')=\{x_1,\ldots,x_k\}$, it holds that
$\Pi\models \ET\psi$ \text{ iff } $\Pi'\models \ET\psi$.
\end{lemma}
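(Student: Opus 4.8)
The plan is to reduce the statement, via Proposition~\ref{prop:CharacterizationAHLTLSemantics}, to a purely combinatorial fact about stuttering expansions of the traces $\pi_n$ and $\pi'_n$. First I would observe that the stuttering expansions of $\pi_n=p^n\emptyset^\omega$ are exactly the traces $p^m\emptyset^\omega$ with $m\ge n$, and those of $\pi'_n=p^{n+1}\emptyset^\omega$ are exactly the traces $p^m\emptyset^\omega$ with $m\ge n+1$ (each maximal block may only be lengthened, so the single free parameter is the number $m$ of leading $p$'s). Since $\Lang_n=\{\pi_n\}$ forces $\Pi$ to map every $x_j$ to $\pi_n$, Proposition~\ref{prop:CharacterizationAHLTLSemantics} turns $\Pi\models\ET\psi$ into: there exist $m_1,\ldots,m_k\ge n$ such that the lockstep assignment $x_j\leftarrow(p^{m_j}\emptyset^\omega,0)$ satisfies $\psi$ at position $0$; and it turns $\Pi'\models\ET\psi$ into the same statement with the extra constraint $m_j\ge n+1$ for every $j$ with $\Pi'(x_j)=\pi'_n$. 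The direction $\Pi'\models\ET\psi\Rightarrow\Pi\models\ET\psi$ is then immediate, because any witnessing tuple for $\Pi'$ already satisfies $m_j\ge n$ for all $j$; note this direction uses neither the bound on the next-depth.

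The converse is the heart of the argument, and here I would prove a \emph{uniform shift lemma}: if $(p^{m_1}\emptyset^\omega,\ldots,p^{m_k}\emptyset^\omega)$ satisfies $\psi$ in lockstep at position $0$, with all $m_j\ge n$ and the next-depth of $\psi$ at most $n-1$, then $(p^{m_1+1}\emptyset^\omega,\ldots,p^{m_k+1}\emptyset^\omega)$ satisfies $\psi$ as well. Granting this, from a witness $(m_1,\ldots,m_k)$ for $\Pi\models\ET\psi$ one obtains $(m_1+1,\ldots,m_k+1)$, every entry of which is $\ge n+1$; this tuple is thus a legal choice of expansions both of $\pi_n$ and of $\pi'_n$, so it witnesses $\Pi'\models\ET\psi$ no matter which variables $\Pi'$ sends to $\pi'_n$. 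This gives $\Pi\models\ET\psi\Rightarrow\Pi'\models\ET\psi$ and closes the lemma.

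To prove the shift lemma I would set $\sigma_j=p^{m_j}\emptyset^\omega$ and $\tau_j=p^{m_j+1}\emptyset^\omega$, writing $\vec\sigma,\vec\tau$ for the tuples, and record the two facts driving the induction: (i) each $\tau_j$ is $\sigma_j$ with its first letter $p$ duplicated, so the suffixes satisfy the exact identity $(\tau_j)^{i}=(\sigma_j)^{\,i-1}$ for every $i\ge1$, whence the pointed tuples at $(\vec\tau,i)$ and $(\vec\sigma,i-1)$ coincide and agree on \emph{all} formulas; and (ii) both tuples are constantly $p$ on the window $[0,n-1]$, since $m_j\ge n$. I would then prove by induction on a subformula $\phi$: for every position $i$ with $i+\mathit{nd}(\phi)\le n-1$ (where $\mathit{nd}$ is the next-depth), $(\vec\sigma,i)\models\phi$ iff $(\vec\tau,i)\models\phi$; the instance $i=0$, $\phi=\psi$ is the lemma. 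The atomic and Boolean cases are immediate from (ii), and for $\phi=\Next\phi'$ one applies the induction hypothesis to $\phi'$ at position $i+1$, whose budget $(i+1)+\mathit{nd}(\phi')\le n-1$ is exactly what the next-modality consumes.

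The main obstacle is the case $\phi=\phi_1\Until\phi_2$, because a witnessing position for the until may lie far beyond the constant window $[0,n-1]$ where the induction hypothesis is available. Here I would exploit the exact suffix identity (i): given a witness $j$ for $(\vec\sigma,i)\models\phi$ I take $j+1$ as the witness for $(\vec\tau,i)$, using $(\vec\tau,\ell)=(\vec\sigma,\ell-1)$ to transport the truth of $\phi_2$ at the endpoint and of $\phi_1$ on the interior $[i+1,j]$ verbatim, and invoking the induction hypothesis for $\phi_1$ only at the single boundary position $i$ (which lies in the budgeted window); the reverse inclusion is symmetric, with witness $j'-1$. Splitting the until-interval into a ``far'' part handled by the exact shift and one ``near'' boundary point handled by the induction hypothesis is the delicate step, and the only genuine work is getting this off-by-one bookkeeping right.
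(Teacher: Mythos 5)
Your proposal is correct in substance and follows essentially the same route as the paper: both arguments reduce the lemma via Proposition~\ref{prop:CharacterizationAHLTLSemantics} to a statement about stuttering expansions of the form $p^{m}\cdot\emptyset^{\omega}$, both obtain the easy direction from the observation that every stuttering expansion of $\pi'_n$ is a stuttering expansion of $\pi_n$, and both prove the hard direction by prepending a single $p$ to each witness trace (your shift from $m_j$ to $m_j+1$ is exactly the paper's passage from $\rho_j$ to $\rho'_j=p\cdot\rho_j$) followed by an induction with a next-depth budget. Your invariance claim, that $(\vec\sigma,i)$ and $(\vec\tau,i)$ agree on $\phi$ whenever $i+\mathit{nd}(\phi)\le n-1$, is, through the identity $(\vec\tau,i+1)=(\vec\sigma,i)$, the same statement as the paper's claim that positions $i$ and $i+1$ of the shifted tuple agree on all formulas of next-depth at most $n-i-1$.

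One step of your until case would fail as written and needs a patch. When the witness for $(\vec\sigma,i)\models\phi_1\Until\phi_2$ is $j=i$, your recipe---take $j+1$ as the witness for $\vec\tau$ and invoke the induction hypothesis for $\phi_1$ at the boundary position $i$---breaks down: in that degenerate case the until hypothesis says nothing about $\phi_1$ at $(\vec\sigma,i)$, so the induction hypothesis has nothing to transport, yet the shifted witness $i+1$ does require $(\vec\tau,i)\models\phi_1$. The fix stays entirely inside your scheme: when $j=i$, keep the witness $i$ for $\vec\tau$ and apply the induction hypothesis to $\phi_2$ at position $i$ (its budget $i+\mathit{nd}(\phi_2)\le n-1$ is available), and symmetrically for the degenerate case $j'=i$ in the reverse direction. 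With that one extra case split your induction is complete---and indeed more detailed than the paper's, which dismisses the corresponding claim as a ``straightforward induction'' without treating the until modality at all.
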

\begin{proof}
By the definition of the traces $\pi_n$ and $\pi'_n$, each stuttering expansion of $\pi'_n$ is also a stuttering expansion
of $\pi_n$. Recall that $\Lang_n=\{\pi_n\}$ and $\Lang'_n=\{\pi_n,\pi'_n\}$  Hence, by Proposition~\ref{prop:CharacterizationAHLTLSemantics}, we have that
 $\Pi'\models \ET\psi$ entails that $\Pi\models \ET\psi$.

Now assume that $\Pi\models \ET\psi$. By
Proposition~\ref{prop:CharacterizationAHLTLSemantics}, for all $i\in [1,k]$, there is a stuttering
expansion $\rho_i$ of $\pi_n=p^{n}\cdot \emptyset^{\omega}$ such that
\[
 \{x_1 \leftarrow (\rho_1,0),\ldots,x_k \leftarrow (\rho_k,0)\}\models \psi
\]
For all $i\in [1,k]$, let $\rho'_i$ be the trace given by $p\cdot \rho_i$. By definition of $\pi_n$ and $\pi'_n$,
it holds that $\rho'_i$ is a stuttering expansion of $\pi'_n$ for all $i\in [1,k]$.
We prove that
\[
 \{x_1 \leftarrow (\rho'_1,0),\ldots,x_k \leftarrow (\rho'_k,0)\}\models \psi
\]
Hence, by Proposition~\ref{prop:CharacterizationAHLTLSemantics}, we obtain that $\Pi'\models \ET\psi$ and the result follows. We first deduce the following preliminary result which can be proved by a straightforward induction on
$n-i-1$.\vspace{0.1cm}

\noindent \textbf{Claim.} Let $i\in [0,n-1]$ and $\psi'$ be an $\HLTL$ quantifier-free formula over $\{p\}$ with  variables in $\{x_1,\ldots,x_k\}$ and whose nesting depth of next modality is at most
$n-i-1$. Then:
\[
 (\{x_1 \leftarrow (\rho'_1,0),\ldots,x_k \leftarrow (\rho'_k,0)\},i)\models \psi' \text{ iff }
  (\{x_1 \leftarrow (\rho'_1,0),\ldots,x_k \leftarrow (\rho'_k,0)\},i+1)\models \psi'
\]
 By construction, we have that for each $i\in [1,k]$, $(\rho'_i)^{1}= \rho_i$. Thus, by the previous claim, we obtain that for each \HLTL\  quantifier-free formula over $\{p\}$ with trace variables $x_1,\ldots,x_k$, and whose nesting depth of next modality is at most $n-1$,
\[
 \{x_1 \leftarrow (\rho_1,0),\ldots,x_k \leftarrow (\rho_k,0)\}\models \psi \text{ iff }
  \{x_1 \leftarrow (\rho'_1,0),\ldots,x_k \leftarrow (\rho'_k,0)\}\models \psi
\]
and we are done.
\end{proof}

\noindent \textbf{Proof of Proposition~\ref{prop:AHLTLinexpressibilitypSynchronicity}.}  Let $\psi= Q_1 x_1.\ldots Q_k x_k.\psi' $ be a $\AHLTL$ formula where  $\psi'$ is quantifier free and $Q_i\in \{\exists,\forall\}$
  for all $i\in [1,k]$.
  If $\psi'$ is of the form $\ET\psi''$ for some $\HLTL$ quantifier-free formula $\psi''$, then the result
  directly follows from Lemma~\ref{lemma:pSynchronicityAHLTL} by taking $n>|\psi''|$.

  Now, assume that $\psi'$ is of the form $\AT\psi''$ for some $\HLTL$ quantifier-free formula $\psi''$.
  We observe that by the semantics of $\AHLTL$, for each trace set $\Lang$,
   $\Lang\not \models   Q_1 x_1.\ldots Q_k x_k.\AT\psi'' $ if and only if
    $\Lang  \models \overline{Q}_1 x_1.\ldots \overline{Q}_k x_k.\ET\neg \psi'' $, where for each $i\in [1,k]$,
    $\overline{Q}_i= \exists$ if $Q_i=\forall$, and $\overline{Q}_i= \forall$ otherwise. Hence, by Lemma~\ref{lemma:pSynchronicityAHLTL}, by taking  $n>|\psi''|$,   the result holds in this case as well.\qed

\subsection{Construction of the generalized B\"{u}chi $\NAAWA$ in the  proof of Theorem~\ref{theo:FromAHLTLtoAAWA}}\label{APP:FromExistentialAHLTL}

In this section, we provide a detailed proof of the following result.

\begin{proposition}\label{prop:FromExistentialAHLTLtoNAWA}Given an  existential $\AHLTL$ quantifier-free formula $\ET\psi$ with trace variables $x_1,\ldots,x_n$,
one can build in singly exponential time a generalized B\"{u}chi  $\NNAWA$ $\Au_{\ET\psi}$ over $2^{\AP}$   accepting the set of $n$-tuples $(\pi_1,\ldots,\pi_n)$ of traces such  that $(\{x_1\mapsto (\pi_1,0),\ldots,x_1\mapsto (\pi_n,0)\})\models \ET\psi$.
\end{proposition}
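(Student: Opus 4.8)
The plan is to adapt the Vardi--Wolper automata-theoretic construction for $\LTL$~\cite{VardiW94} to the asynchronous multi-tape setting, the only genuine novelty being the way a single \emph{global} step of a trajectory---which advances a whole nonempty set of traces at once---is simulated by a sequence of single-tape moves of the $\NNAWA$. First I would fix the Fischer--Ladner closure $\cl(\psi)$, consisting of the subformulas of $\psi$ and their negations, closed enough that the expansion law for $\Until$ can be stated locally, and define an \emph{atom} as a subset $A\subseteq\cl(\psi)$ that is maximally consistent with respect to the Boolean connectives and obeys the local fixpoint condition: $\theta_1\Until\theta_2\in A$ iff $\theta_2\in A$, or both $\theta_1\in A$ and $\theta_1\Until\theta_2$ is required to hold at the next global instant. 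There are $2^{O(|\psi|)}$ atoms. The state space of $\Au_{\ET\psi}$ splits into \emph{phase-boundary} states, each carrying one atom $A$ (the set of formulas guessed to hold at the current global configuration), and \emph{intermediate} states, which additionally carry a target atom $B$ for the next global instant and the last direction advanced so far; every state also records the direction of the move that produced it.

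Next I would define the transitions so that one phase simulates one global step, advancing a guessed nonempty set of directions in strictly increasing order. Leaving a phase-boundary state with atom $A$, the automaton reads the $n$ current symbols simultaneously, checks that each literal $\Rel{p}{x_h}\in A$ agrees with the symbol read on tape $h$, guesses a next atom $B$ satisfying the next-requirement relation ($\Next\chi\in A$ iff $\chi\in B$), guesses a first direction $d_1$, performs the single-tape move along $d_1$, and either declares the phase finished---landing in the phase-boundary state with atom $B$---or moves to an intermediate state recording $(B,d_1)$. From an intermediate state recording $(B,i)$ it guesses a direction $j>i$, moves along $j$, and again either finishes (landing in phase-boundary state $B$) or continues with $(B,j)$. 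The strictly increasing discipline makes the encoding of each direction set canonical and, crucially, folds the mode switch to the next phase into the \emph{last} tape move, respecting the $\AAWA$ rule that every move advances exactly one tape; consequently the propositional consistency of each atom is verified exactly once, on the transition leaving its phase-boundary state, at the correct head positions. Initial states are the phase-boundary states whose atom contains $\psi$, read at the initial position $(0,\ldots,0)$.

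The acceptance condition is a generalized B\"uchi condition with two families of components. For every subformula $\theta_1\Until\theta_2$ there is a component consisting of the phase-boundary states whose atom $A$ satisfies $\theta_1\Until\theta_2\notin A$ or $\theta_2\in A$; visiting it infinitely often discharges the eventualities exactly as in the $\LTL$ construction. For every direction $d\in[1,n]$ there is a component consisting of the states produced by a move along $d$; visiting it infinitely often forces each tape to advance infinitely often, which simultaneously guarantees that the guessed trajectory is \emph{fair} and that each input word is entirely consumed. Correctness is then argued in the two expected directions: from an accepting run one reads off a fair trajectory $t$ and, from the atom sequence, an assignment of subformulas, and a structural induction (using the next-requirement relation for $\Next$ and until-discharge for $\Until$) shows $(\Pi,t)\models\psi$; conversely, a fair trajectory witnessing $\Pi\models\ET\psi$ yields atoms $A_k=\{\chi\in\cl(\psi)\mid (\Pi,t)+k\models\chi\}$ from which a valid accepting run is assembled phase by phase. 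Since the number of states is $2^{O(|\psi|)}\cdot\mathrm{poly}(n)$ and the transition function is computable within the same bound, the automaton is built in singly exponential time.

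I expect the main obstacle to be exactly this decomposition of a global step into single-tape moves: one must ensure that each atom's propositional consistency is checked precisely once and against the right head positions, that the increasing-order convention neither loses nor duplicates any advanced direction set, and that the fairness components and the until components coexist in the generalized B\"uchi condition without one silently subsuming or masking the other.
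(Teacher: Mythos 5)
Your construction is correct and essentially identical to the paper's: the same phase-based simulation of a global trajectory step by a block of single-tape moves in strictly increasing direction order, the same atom/successor machinery with propositional consistency checked exactly once per phase boundary, and the same generalized B\"{u}chi condition combining until-discharge components with one fairness component per direction. The only differences are immaterial bookkeeping choices (you carry the guessed next atom through the intermediate states of a phase, whereas the paper carries the current atom and guesses its successor on the phase-ending move; your until components contain only phase-boundary states, which suffices since every infinite path visits phase boundaries infinitely often).
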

\begin{proof}
Without loss
of generality, we assume that $\AP$ is the set of atomic propositions occurring in the
$\HLTL$ quantifier-free formula $\psi$. The \emph{closure $\cl(\psi)$} of $\psi$ is the set
consisting (i) of the relativized propositions $p[x_i]$ for each $p\in \AP $ and $i\in [1,n]$ and of their negations,
and (2) of the subformulas $\theta$ of $\psi$  and of their negations $\neg \theta$ (we identify $\neg\neg\theta$ with $\theta$).
An \emph{atom $A$ of $\psi$} is a maximal propositionally consistent subset of $\cl(\psi)$, i.e.~such that
the following holds:
\begin{compactitem}
  \item for each $\theta\in\cl(\psi)$, $\theta\in A$ iff $\neg\theta \notin A$,
  \item for each $\theta_1\wedge\theta_2\in \cl(\psi)$, $\theta_1\wedge\theta_2\in  A$ iff $\{\theta_1,\theta_2\}\subseteq A$.
\end{compactitem}
We denote by $\Atoms(\psi)$ the set of atoms of $\psi$. Note that $|\Atoms(\psi)|= 2^{O(n\cdot|\psi|)}$.
For two $\psi$-atoms $A$ and $A'$, we say that $A'$ is a successor of $A$ if:
\begin{compactitem}
  \item for each $\Next\theta\in\cl(\psi)$, $\Next\theta\in A$ iff $\theta\in A'$,
  \item for each $\theta_1\Until\theta_2\in \cl(\psi)$, $\theta_1\Until\theta_2\in  A$ iff \emph{either} $\theta_2\in A$ \emph{or}
  $\theta_1\in A$ and $\theta_1\Until\theta_2\in A'$.
\end{compactitem}
Intuitively, if an atom $A$ describes the set of subformulas of $\psi$ that hold at the current phase, then the
set of subformulas associated to the next phase must be a successor of $A$.

An atom $A$ is consistent with an $n$-tuple $(\sigma_1,\ldots,\sigma_n)\in (2^{\AP})^{n}$ of input symbols if for each
$p\in \AP$ and $i\in [1,n]$, $p[x_i]\in A$ iff $p\in \sigma_i$.

The generalized  B\"{u}chi $\NNAWA$
 $\Au_{\ET\psi} =\tpl{2^{\AP},Q,Q_0,\rho,\F}$ is defined as follows. The set $Q$ of states is given by
 $(\Atoms(\psi)\times [1,n])    \cup (\Atoms(\psi)\times [1,n] \times \{\Beg\})$. States in $ \Atoms(\psi)\times [1,n] \times \{\Beg\} $ are associated with the beginning of a phase, and the second component $d\in [1,n]$ in a state represents the previous chosen direction.
 The set $Q_0$ of initial states consists of the states of the form $(A,1,\Beg)$ such that $\psi\in A$, and the transition function
 $\rho: Q\times (2^{\AP})^{n}\rightarrow 2^{Q\times [1,n]}$ is defined as follows for each $(\sigma_1,\ldots,\sigma_n)\in (2^{\AP})^{n}$:
 \begin{compactitem}
  \item for each $(A,i,\Beg)\in \Atoms(\psi)\times[1,n]\times \{\Beg\}$, $\rho((A,i,\Beg),(\sigma_1,\ldots,\sigma_n))$ is empty if
  $A$ is not consistent with $(\sigma_1,\ldots,\sigma_n)$. Otherwise,
  $\rho((A,i,\Beg),(\sigma_1,\ldots,\sigma_n))$
   consists  (i) of the target moves of the
  form $((A,d),d)$ for  $d\in [1,n]$, and  (ii) of the target moves $((A',d,\Beg),d)$ such that
  $A'$ is a successor of $A$ and $d\in [1,n]$.
 \item for each $(A,i)\in \Atoms(\psi)\times[1,n]$, $\rho((A,i),(\sigma_1,\ldots,\sigma_n))$ consists (i) of the target moves of the form
 $((A,d),d)$ such that $d>i$, and (ii) of the target moves $((A',d,\Beg),d)$ such that $d>i$ and
  $A'$ is a successor of $A$.
\end{compactitem}
Finally, the family $\F$ of acceptance components is defined as follows:
\begin{compactitem}
  \item for each until formula $\theta_1\Until\theta_2$, $\F$ has a component consisting of the states $(A,d)$ and $(A,d,\Beg)$   such that either
  $\theta_2\in A$ or $\theta_1\Until\theta_2 \notin A$ (\emph{fulfillment of the eventuality modalities}),
  \item for each direction $d\in [1,n]$, $\F$ has a component consisting of states of the form $(A,d)$ or $(A,d,\Beg)$
  (\emph{fulfillment of the fairness condition for trajectories}).
\end{compactitem}\vspace{0.2cm}

Correctness of the construction directly follows from the following claim. \vspace{0.2cm}

\noindent \textbf{Claim.} Let  $(\pi_1,\ldots,\pi_n)$ be an $n$-tuple of traces. Then there is an accepting run
of $\Au_{\ET\psi}$ over $(\pi_1,\ldots,\pi_n)$ \emph{if and only if} $(\{x_1\mapsto (\pi_1,0),\ldots,x_1\mapsto (\pi_n,0)\})\models \ET\psi$.\vspace{0.2cm}

\noindent \textbf{Proof of the claim.} An $n$-pointer is an $n$-tuple $(i_1,\ldots,i_n)$ of natural numbers.
Given two $n$-pointers $\wp=(i_1,\ldots,i_n)$ and $\wp'=(i'_1,\ldots,i'_n)$, we say that
$\wp'$ is a successor of $\wp$, if (i) $i'_d\in\{i_d, i_d+1\}$ for each $d\in [1,n]$, and (ii)
$i'_d = i_d+1$ for some $d\in [1,n]$. An infinite sequence $\nu = \wp_0,\wp_1,\ldots$
of $n$-pointers is \emph{well-formed} if $\wp_0 = (0,\ldots,0)$  and for each $i\geq 0$,
 $\wp_{i+1}$ is a successor of $\wp_i$. Moreover, $\nu$ is \emph{fair} if for each $d\in [1,n]$ and $m>0$, there is
 $\ell\geq 0$ such that the $d^{\text{th}}$-component of $\wp_\ell$ is greater than $m$.
 Evidently, there is a bijection between (fair) well-formed infinite sequences $\nu= \wp_0,\wp_1,\ldots$ of $n$-pointers and the (fair) trajectories over the set $\{x_1,\ldots,x_n\}$ of trace variables. In particular, the trajectory $t(\nu)$ associated with
 $\nu$ has as $k^{\text{th}}$ element the non-empty set of trace variables $x_d$ such that the $d^{\text{th}}$-component of $\wp_{k+1}$ is greater than the  $d^{\text{th}}$-component of $\wp_k$.

Let $\AP_{[x_1,\ldots,x_n]}$ be the finite set consisting of the relativized propositions $p[x_i]$ for each $p\in \AP$
and $i\in [1,n]$. We interpret $\AP_{[x_1,\ldots,x_n]}$ as a set of atomic propositions and $\psi$ as an
$\LTL$ formula over $\AP_{[x_1,\ldots,x_n]}$.  Then, given an
$n$-pointer $\wp=(j_1,\ldots,j_n)$, we denote by $(\pi_1,\ldots,\pi_n)[\wp]$ the subset of $\AP_{[x_1,\ldots,x_n]}$
given by
\[
\displaystyle{\bigcup_{i\in [1,n]}\bigcup_{p\in \pi_i(j_i)}} \{p[x_i]\}
\]
By the semantics of $\AHLTL$, it easily follows that $(\{x_1\mapsto (\pi_1,0),\ldots,x_1\mapsto (\pi_n,0)\})\models \ET\psi$
if and only if there is a fair well-formed infinite sequence $\nu=\wp_0,\wp_1,\ldots$ of $n$-pointers such that the
infinite word $\xi$ over $2^{\AP_{[x_1,\ldots,x_n]}}$ given by
\[
 (\pi_1,\ldots,\pi_n)[\wp_0],(\pi_1,\ldots,\pi_n)[\wp_1],\ldots
\]
is a model of the $\LTL$ formula $\psi$ over $\AP_{[x_1,\ldots,x_n]}$. By the standard automata-theoretic construction for $\LTL$,
it follows that $(\{x_1\mapsto (\pi_1,0),\ldots,x_1\mapsto (\pi_n,0)\})\models \ET\psi$ if and only if
there is a fair well-formed infinite sequence $\nu=\wp_0,\wp_1,\ldots$ of $n$-pointers and
an infinite sequence of $\psi$-atoms $A_0,A_1,\ldots$ such that
 \begin{compactitem}
  \item  $\psi\in A_0$ and $A_{i+1}$ is a successor of $A_i$ for each $i\geq 0$,
  \item for each $i\geq 0$ with $\wp_i=(j_1,\ldots,j_n)$, $A_i$ is consistent with $(\pi_1(j_1),\ldots,\pi_n(j_n))$,
  \item and
  for each until formula $\theta_1\Until\theta_2$, there are infinitely many $i$ such that either
   $\theta_2\in A_i$ or $\theta_1\Until\theta_2 \notin A_i$.
\end{compactitem}
Thus by construction of the automaton $\Au_{\ET\psi}$, the result easily follows. This concludes the proof of the claim and
Proposition~\ref{prop:FromExistentialAHLTLtoNAWA} as well.
\end{proof}

\subsection{Construction of the $\AHLTL$ formula $\varphi_M$ in the proof of Theorem~\ref{theo:UndecidabilitySIngleTraceAHLTL}}\label{APP:UndecidabilitySIngleTraceAHLTL}

In this section, we provide details about the construction of the
$\AHLTL$ formula $\varphi_M$ in the proof of Theorem~\ref{theo:UndecidabilitySIngleTraceAHLTL}.
Fix a counter machine $M = \tpl{Q,\Delta,\delta_\init,\delta_\rec}$ and let  $\AP\DefinedAs \Delta \cup \{c_1,c_2,\#,\Beg,\Pad\}$. Recall that  an $M$-configuration $(\delta,\nu)$ is encoded by the finite words over $2^{\AP}$ (called \emph{segments}) of the form:
\[
\{\Beg,\delta\} P_1\ldots P_m \{\Pad\}^{k}
\]
where $k\geq 1$, $m= \max(\nu(1),\nu(2))$, and for all $i\in [1,m]$, (i) $\emptyset\neq  P_i\subseteq \{\#,c_1,c_2\}$,
(ii) $\#\in P_i$ iff $i$ is odd, and (iii) for all $\ell\in\{1,2\}$, $c_\ell\in P_i$ iff $i\leq \nu(\ell)$. A computation $\rho$ of $M$ is then encoded by the traces obtained by concatenating the codes of the configurations along $\rho$ starting from the first one. In order to define the $\AHLTL$ formula $\varphi_M$, we exploit the following auxiliary formulas for each trace variable $x$:
\[
  \begin{array}{l}
    \theta_{\Beg}(x) \DefinedAs \Beg[x] \wedge \displaystyle{\bigvee_{\delta\in \Delta}(\delta[x]\wedge \bigwedge_{p\in\AP\setminus\{\Beg,\delta\}}\neg p[x])}\\
    \theta_{c}(x) \DefinedAs (c_1[x]\vee c_2[x]) \wedge \displaystyle{
    \bigwedge_{p\in\AP\setminus\{c_1,c_2,\#\}}\neg p[x]}\\
        \theta_{\Pad}(x) \DefinedAs \Pad[x]  \wedge \displaystyle{
    \bigwedge_{p\in\AP\setminus\{\Pad\}}\neg p[x]}\\
\end{array}
\]
Intuitively, $\theta_{\Beg}(x)$ characterizes the first position of a segment, and $\theta_{c}(x)$ and $\theta_{\Pad}(x)$ characterize the counter and the pad positions
of a segment, respectively.
Then, the $\AHLTL$ formula $\varphi_M$ is given by $\varphi_M\DefinedAs \exists x_1\exists x_2.\,\ET\, \psi$, where the quantifier-free $\HLTL$ formula $\psi$ guarantees that for the two stuttering expansions $\pi_1$ and $\pi_2$ of the given trace $\pi$, the following holds:
\begin{itemize}
  \item both $\pi_1$ and $\pi_2$ are infinite concatenations of segments. This is ensured by the following conjunct for each $i=1,2$:
\[
  \begin{array}{l}
    \theta_{\Beg}(x_i) \,\wedge\, \Always\Eventually\, \theta_{\Beg}(x_i) \,\wedge\, \Always \Bigl( \theta_{\Beg}(x_i) \rightarrow \Next[(\theta_{c}(x_i)\wedge \#[x_i])\vee \theta_{\Pad}(x_i)] \,\wedge\, \vspace{0.2cm} \\
    \theta_{\Pad}(x_i) \rightarrow \Next [\theta_{\Pad}(x_i) \vee \theta_{\Beg}(x_i)] \,\wedge \,
    (\theta_{c}(x_i) \wedge \#[x_i]) \rightarrow \Next[(\theta_{c}(x_i)\wedge \neg\#[x_i])\vee \theta_{\Pad}(x_i)] \,\wedge \, \vspace{0.2cm}\\
     \,[\theta_{c}(x_i)\wedge \neg\#[x_i]]  \rightarrow  \Next[(\theta_{c}(x_i)\wedge \#[x_i])\vee \theta_{\Pad}(x_i)] \,\wedge
   \, \displaystyle{\bigwedge_{\ell =1,2}} [\theta_{c}(x_i)\wedge \neg c_\ell[x_i]] \rightarrow \neg\Next\, c_\ell \Bigr)
\end{array}
\]
  \item The first segment of $\pi_1$ encodes the initial configuration $(\delta_\init,\nu_0)$ of $M$ and the second segment of $\pi_1$
  encodes a configuration which is a successor of  $(\delta_\init,\nu_0)$ in $M$. Without loss of generality we assume that
   $\delta_\init$ corresponds to an increment instruction for the first counter. Then, the previous requirement can be expressed
   as follows:
   \[
      \delta_\init[x_1] \,\wedge \Next\Bigl(\Pad[x_1]\,\wedge\,\bigl(\neg \theta_{\Beg}(x_1)\,\Until\,[\theta_{\Beg}(x_1)\,\wedge\,
      \Next (c_1[x_1]\wedge \neg c_2[x_1]\wedge \Next \Pad[x_1]) ]\,\bigr)\Bigr)
   \]
  \item $\delta_\rec$ occurs infinitely often along $\pi_1$:
  \[
    \Always\Eventually \delta_\rec
   \]
  \item For each $i\geq 2$, the $(i+1)^{th}$ segment $s_2$ of $\pi_2$ starts at the same position as the
  $i^{th}$ segment $s_1$ of $\pi_1$. Moreover, $s_1$ and $s_2$ have the same length.
   \[
    \Next\Bigl( (\neg\theta_{\Beg}(x_1)\wedge \neg\theta_{\Beg}(x_2))\,\Until\, \bigl( \neg\theta_{\Beg}(x_1)\wedge \theta_{\Beg}(x_2)
     \wedge\Next\Always (\theta_{\Beg}(x_1) \leftrightarrow \theta_{\Beg}(x_2)) \,\bigr)\Bigr)
   \]
   \item For each segment $s_2$ of $\pi_2$ distinct from the first one whose start position coincides with the start position
   of a segment $s_1$ of $\pi_1$, we have that the configuration encoded by $s_2$ is a successor in $M$ of the configuration encoded by $s_1$.
\[
 \displaystyle{\bigwedge_{\delta=(q,(\instr,\ell),q')\in\Delta}}
 \Next\Always \Bigl( \bigl(\theta_{\Beg}(x_1)\wedge \theta_{\Beg}(x_2) \wedge \delta[x_1]\bigr) \,\longrightarrow\, \Bigl(\displaystyle{\bigvee_{\delta'=(q',(\instr',\ell'),q'')\in\Delta}}\delta'[x_2] \,\wedge\,
\]
\[
   \hspace{2.5cm} \underbrace{\Next\bigl((c_{3-\ell}[x_1] \wedge c_{3-\ell}[x_2])  \Until (\neg c_{3-\ell}[x_1] \wedge \neg c_{3-\ell}[x_2])}_{\text{the value of counter $3-\ell$ does not change}}\bigr)\,\wedge\,
\]
\[
     \hspace{0.5cm} \underbrace{\instr =\inc \rightarrow \Next\bigl((c_{\ell}[x_1] \wedge c_{\ell}[x_2])  \Until (\neg  c_{\ell}[x_1] \wedge c_{\ell}[x_2]\wedge \neg \Next c_{\ell}[x_2])}_{\text{increment of counter $\ell$}}\bigr)\,\wedge\,
\]
\[
     \hspace{0.5cm} \underbrace{\instr =\dec \rightarrow \Next\bigl((c_{\ell}[x_1] \wedge c_{\ell}[x_2])  \Until (  c_{\ell}[x_1] \wedge \neg c_{\ell}[x_2]\wedge \neg \Next c_{\ell}[x_1])}_{\text{decrement of counter $\ell$}}\bigr)\,\wedge\,
\]
\[
     \hspace{2.5cm} \underbrace{\instr =\zero \rightarrow \Next (\neg c_{\ell}[x_1] \wedge\neg  c_{\ell}[x_2]) }_{\text{zero test of counter $\ell$}}\,\,\Bigr)\Bigr)
\]
\end{itemize}
Now, we crucially observe that since $\pi_1$ and $\pi_2$ are stuttering expansions of the same trace $\pi$, the alternation requirement for proposition $\#$ in the encoding of an $M$-configurations ensures that $\pi_1$ and $\pi_2$ encode the same infinite sequence of $M$-configurations. Hence, $\varphi_M$ has a single-trace model \emph{if and only if} $M$ has an infinite computation starting from the initial configuration and visiting the designated transition  $\delta_\rec$ infinitely often.

\section{Proofs from Section~\ref{sec:StutteringHLTL}}\label{APP:StutteringHLTL}

\subsection{Proof of Proposition~\ref{prop:InexpressivityCountingsHLTL}}\label{APP:InexpressivityCountingsHLTL}

In this section, we provide a proof of Proposition~\ref{prop:InexpressivityCountingsHLTL}.

\setcounter{aux}{\value{proposition}}
\setcounter{auxSec}{\value{section}}
\setcounter{section}{\value{sec-InexpressivityCountingsHLTL}}
\setcounter{proposition}{\value{prop-InexpressivityCountingsHLTL}}

\begin{proposition}For each $\SHLTL$ formula $\psi$, there is $n\geq 1$ s.t.
$\Lang_n\models \psi$ iff $\Lang'_n\models \psi$.
\end{proposition}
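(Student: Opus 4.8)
The plan is to reduce to the quantifier-free case by expanding the trace quantifiers over the two-element models. Fix an $\SHLTL$ sentence $\psi = Q_1 x_1\cdots Q_k x_k.\,\psi_0$ with $\psi_0$ quantifier-free. Over a two-trace model each quantifier ranges over the two traces, so $\Lang_n\models\psi$ (resp.\ $\Lang'_n\models\psi$) is one fixed Boolean combination ($\vee$ for $\exists$, $\wedge$ for $\forall$, dictated by the prefix) of the truth values $\Pi\models\psi_0$, where $\Pi$ ranges over the assignments sending each $x_i$ to $(\pi_n,0)$ or to the long trace at $0$. Since the Boolean combination is identical in both cases, it suffices to prove the following \textbf{core claim}: there is $N$, depending only on $\psi_0$, such that for all $n\ge N$ and every assignment, the value of $\psi_0$ is unchanged when the long trace $\rho_n$ is replaced by $\rho'_n$ (the short trace $\pi_n$ being shared). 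As there are only finitely many assignments, one $N$ suffices.

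For the core claim I would run an Ehrenfeucht--Fra\"iss\'e-style argument adapted to the stutter semantics. A first, crucial simplification of the configuration space: in the relativized semantics the offset of $\Until_\Gamma$ (and the single step of $\Next_\Gamma$) is applied \emph{uniformly} to the whole assignment, and $\SUCC_\Gamma$ is deterministic in the trace and position; hence along any evaluation path all variables bound to $\pi_n$ always occupy a common position $s$, and all variables bound to the long trace a common position $\ell$. A configuration is therefore a pair $(s,\ell)$, and each $p[x_i]$ only reads $\pi_n(s)$ or the long trace at $\ell$. I would play a game between the $\rho_n$-configuration $(s,\ell)$ and the $\rho'_n$-configuration $(\tilde s,\tilde\ell)$ in which Duplicator maintains $s=\tilde s$ together with the condition that $\ell,\tilde\ell$ lie in matching regions: both equal to a common small offset, both in the $\emptyset^\omega$-tail at the same distance from the alternating part, or both deep inside the alternating part with the same parity. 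All three traces have the shape $(\emptyset p)^m\emptyset^\omega$, and the future type of a position under any $\LTL$ formula of bounded complexity depends only on its suffix; two such suffixes of equal parity become indistinguishable up to complexity $m$ once more than $m$ alternating blocks remain.

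The heart of the argument is that this invariant is preserved by $\Gamma$-successor steps for every finite $\Gamma$ occurring in $\psi_0$, so Duplicator can answer any $\Until_\Gamma$-move of Spoiler with the \emph{same} offset $i$ (which keeps $s=\tilde s$ for free, since $\pi_n$ is shared). This reduces to a structural analysis of the $\Gamma$-stutter dynamics on $(\emptyset p)^m\emptyset^\omega$: the $\Gamma$-type function is determined by the suffix, hence in the deep alternating region it is either period-$2$ (flipping with parity, so $\Gamma$-successor equals the local successor there) or constant (so $\Gamma$-successor jumps straight to a bounded boundary region near the tail), and it is constant on the tail. Since $\rho_n$ and $\rho'_n$ share an identical start pattern and an identical tail pattern, differing only in the number of interior blocks (which is $\ge m$ in both once $n$ is large), these dynamics are matched under the region-correspondence: jumps land on positions of equal type, and a common offset sends matched $(\ell,\tilde\ell)$ to again-matched positions. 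Choosing $N$ above the counting power of $\psi_0$---essentially its $\Next$-nesting depth plus the maximal temporal complexity of the formulas in its subscripts---ensures that the sole genuine discrepancy, the extra $\emptyset p$-block buried in the deep interior of $\rho'_n$ and the resulting $O(1)$ shift of its tail boundary, can never be localized.

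The main obstacle I anticipate is exactly this coupling: one offset advances the short and the long positions simultaneously, so Duplicator cannot tune them independently, and one must verify that replying with the identical offset $i$ keeps \emph{both} the short component exactly synchronized and the long component inside the matched region, even though the two tail boundaries sit at different absolute positions ($2n$ for $\pi_n$ versus $\approx 4n$ for the long trace). Making this rigorous requires bounding each boundary region by $O(\lvert\psi_0\rvert)$ so that, for $n\ge N$, the start-boundary, the short tail-boundary, and the long tail-boundary are pairwise disjoint and separated by long deep regions governed only by parity; the extra block of $\rho'_n$ then lies strictly inside such a deep region and is invisible to $\psi_0$. This contrasts sharply with $\AHLTL$, whose trajectory quantification permits stuttering \emph{expansions} and thereby the precise block counting exploited by $\varphi_A$, which $\SHLTL$'s contraction to $\Gamma$-stutter traces cannot reproduce.
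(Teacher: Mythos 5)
Your skeleton matches the paper's: expand the quantifiers over the two-trace models into a fixed Boolean combination of quantifier-free judgments, observe that variables bound to the same trace stay forever synchronized (so a configuration is just a pair of positions, one on $\pi_n$ and one on the long trace), and then induct on the quantifier-free formula with an invariant relating the $\rho_n$-side to the $\rho'_n$-side. The gap is in your core claim that the region/parity invariant is preserved and that Duplicator can always answer an $\Until_\Gamma$-move \emph{with the same offset}. This is false. Under same-offset replies the two long-trace positions stay numerically equal whenever the relevant $\Gamma$-successor steps are local $+1$ steps (e.g.\ for all plain modalities, where $\Gamma=\emptyset$), and such steps drive the play into positions $m\in[2n,4n+1]$ where $\rho_n$ has fewer than $n$ alternating blocks left while $\rho'_n$ has one more; there the suffixes $(\rho_n)^{m}$ and $(\rho'_n)^{m}$ are distinguishable by formulas of size independent of $n$, and no clause of your invariant (common small offset, same tail distance, deep with equal parity) applies. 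Concretely, take $\psi_0=\Eventually\bigl(p[x_2]\wedge\Next\Always\neg p[x_2]\bigr)$ with all subscripts empty, i.e.\ ``$x_2$ eventually sits on the last $p$''. With $x_2$ bound to $\rho_n$ the unique witness offset is $4n-1$; at offset $4n-1$ on the other side the body fails, since $\rho'_n(4n+1)$ still carries $p$, and the witness there must be $4n+1$. So the same-offset strategy cannot carry out the inductive step for until. This is exactly the coupling obstacle you flag in your final paragraph, but your proposed resolution (take $N$ large so the boundary regions are far apart) does not remove it: the problematic region is reached by a single unbounded $\Eventually$-offset, not by counting, so no choice of $N$ helps.

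What the paper does instead, and what your sketch is missing, is a mechanism for exiting the synchronized regime. It keeps the two sides' positions numerically equal only inside the window $i,j\in[0,2n-1]$ with $j\leq i$ (there equal long positions are automatically deep in both $\rho_n$ and $\rho'_n$), and handles every exit by one of two separate arguments. Either the exit happens through a genuine $\Gamma$-jump, which lands at some $k$ in $\rho_n$ and at $k+2$ in $\rho'_n$; then the identity $(\rho_n)^{k}=(\rho'_n)^{k+2}$ makes the two configurations literally identical from that point on, so nothing further is needed. Or the short trace has entered its constant tail $\emptyset^{\omega}$; then the formula degenerates into a one-variable stuttering-$\LTL$ formula over the long trace alone, and a separate periodicity lemma (Property~3 of Lemma~\ref{lemma:stutterTraceInexpressivity}: period-$2$ invariance of stuttering $\LTL$ formulas of size $<n$ at positions with at least $n$ blocks remaining), combined again with the shift identity $(\rho_n)^{j+1}=(\rho'_n)^{j+3}$, closes the case -- this is precisely how the mismatched witnesses $4n-1$ versus $4n+1$ in the counterexample above get reconciled, since the shift by $2$ is absorbed by the dead tail of $\pi_n$ rather than by the shared short trace. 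Without the $j\leq i$ bookkeeping, the single-trace reduction, and that periodicity lemma, your induction does not go through.
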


 \setcounter{proposition}{\value{aux}}
 \setcounter{section}{\value{auxSec}}

Recall that  for each $n\geq 1$, $\Lang_n\DefinedAs\{\pi_n,\rho_n\}$  and   $\Lang'_n\DefinedAs\{\pi_n,\rho'_n\}$, where
$\pi_n$, $\rho_n$, and $\rho'_n$ are the traces over $\AP=\{p\}$ defined as follows:
\[
 \pi_n \DefinedAs ( \emptyset\cdot p )^{n}\cdot \emptyset^{\omega} \quad\quad
 \rho_n \DefinedAs  ( \emptyset\cdot p )^{2n}\cdot \emptyset^{\omega} \quad\quad
  \rho'_n \DefinedAs  ( \emptyset\cdot p )^{2n+1}\cdot \emptyset^{\omega}
\]
For a pointed trace $(\pi,i)$ and a finite set of $\LTL$ formulas $\Gamma$, we write
$\SUCC_\Gamma(\pi,i)$ to mean the pointed trace $(\pi,\ell)$ where $\ell$ is the $\Gamma$-successor
of position $i$ in $\pi$.
In order to prove Proposition~\ref{prop:InexpressivityCountingsHLTL}, we need two preliminary technical results.

\begin{lemma}\label{lemma:stutterTraceInexpressivity} The following holds for all $n\geq 1$ and $k\geq 1$;
 \begin{enumerate}
   \item Let us consider the trace $\pi=(\emptyset \cdot p)^{k+n}\cdot\emptyset^{\omega}$. Then for all $\LTL$ formulas $\theta$ such that
   $|\theta|<n$ and $i\in [0,k-1]$, it holds that (i) $(\pi,2i)\models \theta$ iff $(\pi,2i+2)\models \theta$, and (ii)
   $(\pi,2i+1)\models \theta$ iff $(\pi,2i+3)\models \theta$.
   \item  Let $\Gamma$ be a finite set of $\LTL$ formulas $\theta$ such that $|\theta|<n$. Then, there is a trace $\rho$
   coinciding with the $\Gamma$-stutter trace of some  suffix  of $(\emptyset \cdot p)^{n-1}\cdot\emptyset^{\omega}$ such that
    \begin{itemize}
   \item \emph{either}   $\stfr_{\Gamma}(\rho_n)= (\emptyset \cdot p)^{n+1}\cdot \rho$, $\stfr_{\Gamma}(\rho'_n)= (\emptyset \cdot p)^{n+2}\cdot \rho$, and $\stfr_{\Gamma}(\pi_n)= (\emptyset \cdot p)\cdot \rho$,
   \item \emph{or} $\stfr_{\Gamma}(\rho_n)= \stfr_{\Gamma}(\rho'_n)= \stfr_{\Gamma}(\pi_n) = \emptyset\cdot \rho$.
 \end{itemize}
 \item Let us consider the trace $\pi=(\emptyset \cdot p)^{k+n}\cdot\emptyset^{\omega}$. Then for all \emph{stuttering} $\LTL$ formulas $\theta_S$ such that
   $|\theta_S|<n$ and $i\in [0,k-1]$, it holds that (i) $(\pi,2i)\models \theta_S$ iff $(\pi,2i+2)\models \theta_S$, and (2)
   $(\pi,2i+1)\models \theta_S$ iff $(\pi,2i+3)\models \theta_S$.
 \end{enumerate}
\end{lemma}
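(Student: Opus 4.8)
The plan is to prove the three parts in order, with Part~1 as the engine and Parts~2 and~3 building on it. Throughout, call a position $q$ of a trace \emph{$d$-safe} if the $2d$ positions following $q$ all lie in the $(\emptyset\cdot p)$-periodic prefix and continue the alternating pattern; the extra $n$ copies of $\emptyset\cdot p$ in $(\emptyset\cdot p)^{k+n}\cdot\emptyset^{\omega}$ are exactly what guarantee that every position $2i,2i+1,2i+2,2i+3$ with $i\in[0,k-1]$ is $(n-1)$-safe. Observe that in the periodic prefix the atom $p$ holds precisely at odd positions, so its truth depends only on parity.

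For Part~1 I would argue by induction on the structure of $\theta$, proving the stronger statement that any two $|\theta|$-safe positions of equal parity satisfy exactly the same formulas of size at most $|\theta|$. The atomic, Boolean, and $\Next$ cases are immediate (for $\Next$, the successors of two same-parity safe positions are again of equal parity and safe to the remaining depth). The crux is $\theta_1\Until\theta_2$ at two same-parity safe positions $s<t$. By the induction hypothesis applied to $\theta_1$ and $\theta_2$, the truth of each is parity-determined among safe positions, so I split on where $\theta_2$ can witness the until: if $\theta_2$ holds at safe positions of the parity of $s$, then $s$ and $t$ are themselves witnesses; if it holds only at the opposite parity, the until succeeds at $s$ (resp.\ $t$) exactly when $\theta_1$ holds at the parity of $s$, which again transfers; and if $\theta_2$ fails at \emph{all} safe positions, then any witness must lie in the $\emptyset^{\omega}$ tail. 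This last case is the delicate one: since the tail lies beyond every safe position and since a successful until forces $\theta_1$ to hold throughout an interval covering safe positions of \emph{both} parities, the induction hypothesis upgrades this to ``$\theta_1$ holds at all safe positions'', so the same tail witness serves both $s$ and $t$. Specializing to $2i,2i+2$ and to $2i+1,2i+3$ then yields Part~1.

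For Part~2 I would feed Part~1 into an analysis of the $\Gamma$-successor. Each $\theta\in\Gamma$ has $|\theta|<n$, so by Part~1 its truth value is parity-determined on the safe region, giving a clean dichotomy. In \emph{Case~A} some $\theta\in\Gamma$ takes different values at even and at odd safe positions; then at every safe position some formula changes at the very next step, so the $\Gamma$-successor is $+1$ and $\stfr_{\Gamma}$ reproduces the input verbatim down to the start of the final $n-1$ blocks. In \emph{Case~B} every formula of $\Gamma$ is constant across the safe region; then from position $0$ the $\Gamma$-successor jumps directly to the first position where the tail is ``noticed''. In either case the behaviour on the last $n-1$ blocks together with the tail is identical for $\rho_n$, $\rho'_n$, and $\pi_n$, because these three traces share that suffix; this common suffix is the advertised $\rho$, namely $\stfr_{\Gamma}$ of a suffix of $(\emptyset\cdot p)^{n-1}\cdot\emptyset^{\omega}$. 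Counting the verbatim-copied blocks in Case~A (total blocks minus $n-1$) gives $\stfr_{\Gamma}(\rho_n)=(\emptyset\cdot p)^{n+1}\rho$, $\stfr_{\Gamma}(\rho'_n)=(\emptyset\cdot p)^{n+2}\rho$, and $\stfr_{\Gamma}(\pi_n)=(\emptyset\cdot p)\rho$, while in Case~B the single surviving first symbol $\emptyset$ gives $\emptyset\cdot\rho$ for all three. The only real work here is the index bookkeeping that pins down these exact prefixes.

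Finally, Part~3 follows the induction of Part~1 on the structure of the stuttering formula $\theta_S$, with $\Next,\Until$ replaced by $\Next_{\Gamma},\Until_{\Gamma}$ and the ordinary successor replaced by $\SUCC_{\Gamma}$; stating it separately is necessary because $|\theta_S|$ is measured in the stuttering syntax, whose $\LTL$ translation need not have size below $n$. Since every $\Gamma$ occurring in $\theta_S$ consists of formulas of size $<n$, the Part~2 dichotomy applies to the safe region of $\pi$. In Case~A the relevant $\Gamma$-successors are all $+1$, so $\Next_{\Gamma}$ and $\Until_{\Gamma}$ collapse to $\Next$ and $\Until$ and the Part~1 argument transfers. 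In Case~B, from any safe position the $\Gamma$-successor lands on one fixed position $\ell_1$ independent of the starting safe position, so for same-parity safe $s$ and $s+2$ we get $\SUCC_{\Gamma}(s)=\SUCC_{\Gamma}(s+2)$ and hence all higher iterates coincide; this makes $\Next_{\Gamma}\theta'_S$ and $\Until_{\Gamma}$ agree at $s$ and $s+2$ at once, with the $j=0$ term handled by the induction hypothesis on the subformulas. I expect the $\Until$ case of Part~1, and its reuse inside Part~3, to be the main obstacle, precisely because of the tail-witness transfer described above.
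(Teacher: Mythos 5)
Your Part~1 is correct and is in substance the paper's own argument (the paper dispatches it as a ``straightforward double induction on $k$ and the structure of $\theta$''; your strengthened invariant --- truth of formulas of size at most $|\theta|$ is parity-determined on safe positions --- organizes the same induction, and your transfer of tail witnesses in the $\Until$ case, which you rightly flag as the delicate step, is handled correctly). The skeleton of Part~3 (structural induction, reuse of the Part~2 dichotomy) also matches the paper, which is extremely terse there.

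However, the dichotomy you use in Parts~2 and~3 is misdrawn, and your Case~B conclusion is false as stated. Your Case~B covers every $\Gamma$ whose formulas are constant across the safe region, and this includes the sub-case where every $\theta\in\Gamma$ is constant along the \emph{entire} trace (for instance $\Gamma=\emptyset$, $\Gamma=\{\top\}$, or $\Gamma=\{\Always\Eventually p\}$, whose unique element is false at every position of $\rho_n$, $\rho'_n$ and $\pi_n$ since all three end in $\emptyset^{\omega}$). In that sub-case the \emph{first} clause of the definition of the $\Gamma$-successor applies: the successor of every position $i$ is $i+1$; it does not ``jump to the first position where the tail is noticed,'' because no formula ever changes value. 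Hence $\stfr_\Gamma$ is the identity on all three traces, and your Case~B conclusion $\stfr_{\Gamma}(\rho_n)=\stfr_{\Gamma}(\rho'_n)=\stfr_{\Gamma}(\pi_n)=\emptyset\cdot\rho$ is impossible --- it would force $\rho_n=\rho'_n$. What actually holds there is the \emph{first} bullet, with $\rho=(\emptyset\cdot p)^{n-1}\cdot\emptyset^{\omega}$ itself, since $\rho_n=(\emptyset\cdot p)^{n+1}\cdot\rho$, $\rho'_n=(\emptyset\cdot p)^{n+2}\cdot\rho$ and $\pi_n=(\emptyset\cdot p)\cdot\rho$. The correct dichotomy is therefore not ``parity-sensitive versus parity-insensitive on the safe region'' but: (i) the $\Gamma$-successor is $+1$ throughout the periodic prefix --- which happens both when some $\theta\in\Gamma$ alternates with parity \emph{and} when no $\theta\in\Gamma$ ever changes value --- yielding the first bullet via your block counting and suffix-sharing argument; versus (ii) every $\theta\in\Gamma$ is constant on the safe region but some $\theta$ changes value at a later position, in which case the successor of $0$ jumps out of the safe region and the second bullet holds. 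The same repair is needed in your Part~3: in the overlooked sub-case the fixed landing point $\ell_1$ does not exist, and one must instead fall back on your Case~A treatment, where $\Next_\Gamma$ and $\Until_\Gamma$ collapse to $\Next$ and $\Until$. With the dichotomy redrawn this way, the rest of your argument (suffix-dependence of the $\Gamma$-successor and of $\stfr_\Gamma$, the shared suffix $(\emptyset\cdot p)^{n-1}\cdot\emptyset^{\omega}$, the index bookkeeping) goes through.
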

\begin{proof}
Property~(1) of Lemma~\ref{lemma:stutterTraceInexpressivity} can be proved by a straightforward double induction on
$k$ and the structure of the given $\LTL$ formula $\theta$. Now, let us consider Property~(2). Let $\Gamma$ be a finite set of $\LTL$ formulas $\theta$ such that $|\theta|<n$. Recall that
$\pi_n = ( \emptyset\cdot p )^{n}\cdot \emptyset^{\omega}$,  $\rho_n = ( \emptyset\cdot p )^{2n}\cdot \emptyset^{\omega}$,
and $\rho'_n = ( \emptyset\cdot p )^{2n+1}\cdot \emptyset^{\omega}$. Thus, the result easily follows from
  Property~(1)  and the definition of the function $\stfr_\Gamma$.

  Finally, let us consider Property~(3). The proof is by a  double induction on
$n$ and the structure of the given stuttering $\LTL$ formula $\theta_S$ where $|\theta_S|<n$. We focus on the case where $k=1$ and consider the positions $1$ and $3$ of $\pi=(\emptyset \cdot p)^{1+n}\cdot\emptyset^{\omega}$ (the proof for the other cases being similar). We proceed on the structure of the formula $\theta_S$. The base case (where $\theta_S$ is an atomic proposition) and the case of Boolean connectives directly follow  from the induction hypothesis.  It remains to consider the cases where
 either $\theta_S=\Next_\Gamma\theta_1$ or $\theta_S= \theta_1\Until_\Gamma\theta_2$ for some formulas $\theta_1$ and $\theta_2$, and finite set $\Gamma$ of $\LTL$ formulas. Since $|\theta_S|<n$, it holds that $n-1\geq 1$, $|\theta_1|<n-1$,  $|\theta_2|<n-1$, and $\Gamma$ consists of $\LTL$ formulas with size smaller than $n-1$. Note that $\pi$ can be written in the form
  $\pi=(\emptyset \cdot p)^{k+n-1}\cdot\emptyset^{\omega}$ where $k=2$. Thus,
  by the induction hypothesis on the length of the formula, we have that for all formulas $\theta\in \Gamma\cup \{\theta_1,\theta_2\}$, it holds that (i) $(\pi,1)\models \theta$ iff $(\pi,3)\models \theta$, and (ii) $(\pi,0)\models \theta$ iff $(\pi,2)\models \theta$ iff $(\pi,4)\models \theta$. Hence, the result easily follows.
\end{proof}

\begin{lemma}\label{lemma:InexpressivityCountingsHLTL} Let $n\geq 1$ and $\psi$ be a $\SHLTL$ quantifier-free formulas over $\{p\}$ with two trace variables $x_1$ and $x_2$ such that $|\psi|<n$. Then, for all $i,j\in [0,2n-1]$ such that $j\leq i$:
\[
\{x_1\leftarrow (\pi_n,i),x_2\leftarrow (\rho_n,j)\}\models \psi \text{ iff } \{x_1\leftarrow (\pi_n,i),x_2\leftarrow (\rho'_n,j)\}\models \psi
\]
\end{lemma}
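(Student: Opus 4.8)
The plan is to prove the statement by structural induction on $\psi$, with $n$ fixed and the size bound $|\psi| < n$, quantifying uniformly over all admissible configurations $(i,j)$. I abbreviate $\Pi \DefinedAs \{x_1 \leftarrow (\pi_n, i), x_2 \leftarrow (\rho_n, j)\}$ and $\Pi' \DefinedAs \{x_1 \leftarrow (\pi_n, i), x_2 \leftarrow (\rho'_n, j)\}$; the two assignments share the same $x_1$-component (same trace $\pi_n$, same index $i$) and differ only in the trace carried by $x_2$. The facts I would rely on about the traces are that $\pi_n$ is constantly $\emptyset$ from position $2n$, $\rho_n$ from position $4n$, and $\rho'_n$ from position $4n+2$, that $\rho_n$ and $\rho'_n$ first disagree at position $4n+1$, and hence $\rho_n(j) = \rho'_n(j)$ whenever $j \le 2n-1$. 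The base and Boolean cases are then immediate: $\top$ is trivial; $\Rel{p}{x_1}$ reads $\pi_n(i)$, identical on both sides; $\Rel{p}{x_2}$ reads $\rho_n(j)$ versus $\rho'_n(j)$, equal since $j \le 2n-1$; and negation and conjunction follow from the induction hypothesis at the same $(i,j)$ applied to the strictly smaller subformulas.

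The heart of the argument is the temporal case, where one must control the $\Gamma$-successor. Every $\theta \in \Gamma$ is an $\LTL$ formula with $|\theta| < n$, so Lemma~\ref{lemma:stutterTraceInexpressivity}(1) applies on each of $\pi_n, \rho_n, \rho'_n$: throughout the periodic prefix the truth value of $\theta$ is invariant under a one-period shift ($+2$). From this I would extract a dichotomy for the $\Gamma$-successor of $x_2$ from an index $j \le 2n-1$: either some formula of $\Gamma$ already changes value within the shift-invariant region, in which case the $\Gamma$-successors of $j$ in $\rho_n$ and in $\rho'_n$ coincide at a common small index $j'$ with $\rho_n(j') = \rho'_n(j')$; or all formulas of $\Gamma$ are constant throughout the periodic prefix and first change only at the onset of the tail, in which case the $\Gamma$-successor lands at position $4n$ in $\rho_n$ and at $4n+2$ in $\rho'_n$, i.e. exactly at the start of the $\emptyset^\omega$-tail of each trace. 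The same dichotomy governs $x_1$ on $\pi_n$ (tail onset $2n$), and since $x_1$ shares its trace across $\Pi$ and $\Pi'$ its successor is literally the same on both sides.

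This motivates carrying, as a strengthened induction invariant, that each variable's two positions are related in one of two ways: they coincide and lie in a periodic region still having at least $n$ full periods ahead of the tail (so that Lemma~\ref{lemma:stutterTraceInexpressivity}(1) forces all size-$<n$ subformulas to agree on $\rho_n$ and $\rho'_n$), or both have entered the constant tail of their respective traces. In the tail case $\Rel{p}{\cdot}$ is frozen to $\false$, every $\Gamma$-successor degenerates to $+1$, and the absolute index is semantically irrelevant, so the $\rho_n$/$\rho'_n$ discrepancy becomes invisible and the value of the remaining formula depends only on which variables are periodic versus tail, data that agrees in $\Pi$ and $\Pi'$. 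Unfolding $\Next_\Gamma \psi_1$ and $\psi_1 \Until_\Gamma \psi_2$ one $\Gamma$-step at a time and invoking the induction hypothesis on the resulting configurations then closes these cases.

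The main obstacle is the maintenance of this invariant, and it is exactly here that the hypothesis $j \le i$ is used: because $x_2$ carries the smaller index on the longer traces, it is always at least as far from its tail as $x_1$ is from the tail of $\pi_n$, which rules out the forbidden configuration in which $x_2$ alone reaches the position-$(4n+1)$ discrepancy while $x_1$ is still reading genuine periodic data. I would have to verify that both $j \le i$ and the ``coincide-or-both-in-tail'' relation are preserved by the $\Gamma$-successor acting on the two distinct traces $\pi_n$ and $\rho_n/\rho'_n$. The $\Until_\Gamma$ case is the most delicate, since its witnessing index can lie arbitrarily far ahead; there I would combine the tail-collapse with shift-invariance to confine the relevant search to the periodic prefix together with a single tail segment, so that the unbounded eventuality reduces to finitely many configurations already covered by the invariant.
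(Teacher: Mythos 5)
Your high-level plan (structural induction driven by a case analysis of the $\Gamma$-successor, using Lemma~\ref{lemma:stutterTraceInexpressivity}) is the same as the paper's, but two of its load-bearing claims are false, and they sit exactly where the real work is. First, your successor dichotomy is not exhaustive and its second branch is wrong: when every $\theta\in\Gamma$ is constant on the shift-invariant region, the $\Gamma$-successor need \emph{not} land at the tail onsets $4n$ and $4n+2$. Take $\Gamma=\{p\wedge\Next\Always\neg p\}$ (a fixed formula, hence of size $<n$ for all large $n$): it holds exactly at the last $p$-position of each trace, so from any $j\le 2n-1$ the $\Gamma$-successor is $4n-1$ in $\rho_n$ and $4n+1$ in $\rho'_n$ --- two landing positions that neither coincide nor lie in the constant tails. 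Hence your invariant (``coincide with $n$ periods of runway, or both in the tail'') is destroyed by a single $\Gamma$-step and the induction does not close. The correct observation, which is the one the paper makes, is that in this branch the landing positions always differ by exactly $2$ and lie in the region where $(\rho_n)^{q}=(\rho'_n)^{q+2}$; since satisfaction of quantifier-free $\SHLTL$ formulas depends only on the pointed suffixes, that branch terminates immediately, with no invariant to maintain.

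Second, the mixed configuration in which $x_1$ has entered the tail of $\pi_n$ (position $\ge 2n$, which already arises in the $\Next_\Gamma$ case when $i=2n-1$) while $x_2$ still sits at coinciding positions $j'\le 2n$ of the two periodic prefixes is admitted by your invariant, but your justification that it is harmless is circular. There $p[x_1]$ is frozen to false, so $\psi$ collapses to a one-variable stuttering formula $\theta_S$ over $x_2$; one must still prove that $\theta_S$, of size $<n$, cannot distinguish $(\rho_n,j')$ from $(\rho'_n,j')$ --- two \emph{distinct} traces whose disagreement at position $4n+1$ lies ahead. Your parenthetical appeal to Lemma~\ref{lemma:stutterTraceInexpressivity}(1) does not give this: Property~(1) concerns plain $\LTL$ formulas under a $+2$ shift along a \emph{single} trace, whereas cross-trace agreement of stuttering formulas is essentially the statement being proved. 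The paper closes precisely this hole with a separate ingredient, Property~(3) of Lemma~\ref{lemma:stutterTraceInexpressivity} (shift-invariance of \emph{stuttering} $\LTL$ formulas, proved by its own induction): it yields $(\rho'_n,j')\models\theta_S$ iff $(\rho'_n,j'+2)\models\theta_S$, and $(\rho'_n)^{j'+2}=(\rho_n)^{j'}$ then concludes. Your proposal never identifies this ingredient, and the assertion that the formula's value ``depends only on which variables are periodic versus tail'' is false without it. The remainder of your outline (base and Boolean cases, the role of $j\le i$ in preserving the ordering under successors, the step-by-step unfolding of $\Until_\Gamma$) does match the paper's proof.
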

\begin{proof}
Recall that
$\pi_n = ( \emptyset\cdot p )^{n}\cdot \emptyset^{\omega}$,  $\rho_n = ( \emptyset\cdot p )^{2n}\cdot \emptyset^{\omega}$,
and $\rho'_n = ( \emptyset\cdot p )^{2n+1}\cdot \emptyset^{\omega}$.

Fix $i,j\in [0,2n-1]$ such that $j\leq i$. The proof is by a double induction on $2n-1-i$ and the structure of $\psi$.
The case where $\psi=p[x_i]$ for some $i\in \{1,2\}$ directly follows from the fact that $\rho_n(j)=\rho'_n(j)$ for all
$j\in [0,2n-1]$. The cases of Boolean connectives where either $\psi=\neg\psi_1$ or $\psi=\psi_1\wedge\psi_2$ for some formulas $\psi_1$ and $\psi_2$ directly follow  from the induction hypothesis. It remains to consider the cases where
 either $\psi=\Next_\Gamma\psi_1$ or $\psi= \psi_1\Until_\Gamma\psi_2$ for some formulas $\psi_1$ and $\psi_2$, and finite set $\Gamma$ of $\LTL$ formulas. Since $|\psi|<n$, all the \LTL\ formulas $\theta$ in $\Gamma$ satisfy $|\theta|<n$. By Property~(2) of Lemma~\ref{lemma:stutterTraceInexpressivity}, there is a suffix $\rho_S$ of $(\emptyset \cdot p)^{n-1}\cdot\emptyset^{\omega}$ such that one of the following holds:
    \begin{compactitem}
   \item \emph{either}   $\stfr_{\Gamma}(\rho_n)= (\emptyset \cdot p)^{n+1}\cdot \stfr_{\Gamma}(\rho_S)$, $\stfr_{\Gamma}(\rho'_n)= (\emptyset \cdot p)^{n+2}\cdot \stfr_{\Gamma}(\rho_S)$, and $\stfr_{\Gamma}(\pi_n)= (\emptyset \cdot p)\cdot \stfr_{\Gamma}(\rho_S)$,
   \item \emph{or} $\stfr_{\Gamma}(\rho_n)= \stfr_{\Gamma}(\rho'_n)= \stfr_{\Gamma}(\pi_n) = \emptyset\cdot \stfr_{\Gamma}(\rho_S)$.
   Note that in this case, being $\rho_n = ( \emptyset\cdot p )^{2n}\cdot \emptyset^{\omega}$
and $\rho'_n = ( \emptyset\cdot p )\cdot \rho_n$, it holds that for all $k\geq 1$,  $\SUCC^{k}_\Gamma(\rho'_n,j)= \SUCC^{k}_\Gamma(\rho_n,j)+2$.
 \end{compactitem}

 \bigskip

\noindent We first consider the case of next modality where $\psi = \Next_\Gamma \psi_1$. We
 distinguish two cases:
     \begin{itemize}
   \item $\stfr_{\Gamma}(\rho_n)= \stfr_{\Gamma}(\rho'_n)= \stfr_{\Gamma}(\pi_n) = \emptyset\cdot \stfr_{\Gamma}(\rho_S)$.
  Hence, $\SUCC_\Gamma(\rho'_n,j)= \SUCC_\Gamma(\rho_n,j)+2$, i.e.~ there is
    $k>j$ such that  $\SUCC_\Gamma(\rho_n,j)= k$ and $\SUCC_\Gamma(\rho'_n,j)= k+2$.
   Thus, since $(\rho_n)^{k}=(\rho'_n)^{k+2}$, the result trivially follows.
    \item Now assume that  $\stfr_{\Gamma}(\rho_n)= (\emptyset \cdot p)^{n+1}\cdot \stfr_{\Gamma}(\rho_S)$, $\stfr_{\Gamma}(\rho'_n)= (\emptyset \cdot p)^{n+2}\cdot \stfr_{\Gamma}(\rho_S)$, and $\stfr_{\Gamma}(\pi_n)= (\emptyset \cdot p)\cdot \stfr_{\Gamma}(\rho_S)$.  Hence, $\SUCC_\Gamma(\rho_n,j)= j+1$ and $\SUCC_\Gamma(\rho'_n,j)= j+1$. We distinguish  two cases:
     \begin{compactitem}
   \item $i=2n-1$. Since $\pi_n= (\emptyset \cdot p)^{n}\cdot\emptyset^{\omega}$, we have that
   $\SUCC_\Gamma(\pi_n,2n-1)= 2n$ and $\pi_n(k)=\emptyset$ for all $k\geq 2n$. Let
   $\theta_S$ be the stuttering $\LTL$ formula obtained from $\psi_1$ by replacing each occurrence of $p[x_2]$ with $p$
   and each occurrence of $p[x_1]$ with $\neg\top$. Evidently, (i) $\{x_1\leftarrow (\pi_n,2n),x_2\leftarrow (\rho_n,j+1)\}\models \psi_1$ \text{ iff } $(\rho_n,j+1)\models \theta_S$, and (ii) $\{x_1\leftarrow (\pi_n,2n),x_2\leftarrow (\rho'_n,j+1)\}\models \psi_1$ \text{ iff } $(\rho'_n,j+1)\models \theta_S$.
   Since $\rho'_n$ can be written in form $\rho'_n = ( \emptyset\cdot p )^{(n+1)+n}\cdot \emptyset^{\omega}$,
    $|\theta_S|<n$ and $j+1\in [0,2n]$, by applying Property~(3) of Lemma~\ref{lemma:stutterTraceInexpressivity} with $k=n+1$, it holds that $(\rho'_n,j+1)\models \theta_S$ iff $(\rho'_n,j+3)\models \theta_S$. Hence,
    being  $(\rho_n)^{j+1}=(\rho'_n)^{j+3}$, we obtain that
   $(\rho_n,j+1)\models \theta_S$ iff $(\rho'_n,j+1)\models \theta_S$. Thus, since $\SUCC_\Gamma(\rho_n,j)= j+1$ and $\SUCC_\Gamma(\rho'_n,j)= j+1$, the result follows.
    \item $i<2n-1$. If $\SUCC_\Gamma(\pi_n,i)\geq 2n$, we proceed as in the previous case. Otherwise, $\SUCC_\Gamma(\pi_n,i)\leq 2n-1$ and $j+1\leq \SUCC_\Gamma(\pi_n,i)$ (recall that $j\leq i$). Thus, since  $\SUCC_\Gamma(\rho_n,j)= j+1$,   $\SUCC_\Gamma(\rho'_n,j)= j+1$, and $2n-1-\SUCC_\Gamma(\pi_n,i)<2n-1-i$, the result directly follows from the induction hypothesis.
 \end{compactitem}
 \end{itemize}

 \bigskip

  \noindent We now consider the case  where $\psi = \psi_1\Until_\Gamma \psi_2$. We
 distinguish two cases:
     \begin{compactitem}
   \item $\stfr_{\Gamma}(\rho_n)= \stfr_{\Gamma}(\rho'_n)= \stfr_{\Gamma}(\pi_n) = \emptyset \cdot \stfr_{\Gamma}(\rho_S)$.
   It follows that for all $k\geq 1$,  $\SUCC^{k}_\Gamma(\rho'_n,j)= \SUCC^{k}_\Gamma(\rho_n,j)+2$. Moreover, by the induction hypothesis, we can assume that the result holds for the formulas $\psi_1$ and $\psi_2$.
   Thus, since $(\rho_n)^{\ell}=(\rho'_n)^{\ell+2}$ for all $\ell\geq 0$, the result trivially follows.
    \item Now assume that  $\stfr_{\Gamma}(\rho_n)= (\emptyset \cdot p)^{n+1}\cdot \stfr_{\Gamma}(\rho_S)$, $\stfr_{\Gamma}(\rho'_n)= (\emptyset \cdot p)^{n+2}\cdot \stfr_{\Gamma}(\rho_S)$, and $\stfr_{\Gamma}(\pi_n)= (\emptyset \cdot p)\cdot \stfr_{\Gamma}(\rho_S)$.  Then, since $\pi_n(i')=\emptyset$ for all $i'\geq 2n$, by the induction hypothesis and by reasoning as for case of the next modalities, we deduce that:
     \begin{compactitem}
   \item for all  $i',j'\in [0,2n-1]$ with $j'\leq i'$ and $h\in\{1,2\}$,
\[
\{x_1\leftarrow (\pi_n,i'),x_2\leftarrow (\rho_n,j')\}\models \psi_h \text{ iff } \{x_1\leftarrow (\pi_n,i'),x_2\leftarrow (\rho'_n,j')\}\models \psi_h
\]
\item for all $i'\geq 2n$, $j'\in \{2n,2n+2\}$  and    $h\in\{1,2\}$,
\[
\{x_1\leftarrow (\pi_n,i'),x_2\leftarrow (\rho_n,2n)\}\models \psi_h \text{ iff } \{x_1\leftarrow (\pi_n,i'),x_2\leftarrow (\rho'_n,j')\}\models \psi_h
\]
\item for all $i'\geq 2n$, $j'\in \{2n+1,2n+3\}$  and    $h\in\{1,2\}$,
\[
\{x_1\leftarrow (\pi_n,i'),x_2\leftarrow (\rho_n,2n+1)\}\models \psi_h \text{ iff } \{x_1\leftarrow (\pi_n,i'),x_2\leftarrow (\rho'_n,j')\}\models \psi_h
\]
  \end{compactitem}
  Hence, the result easily follows.  This concludes the proof of Lemma~\ref{lemma:InexpressivityCountingsHLTL}.
 \end{compactitem}
\end{proof}

\noindent \textbf{Proof of Proposition~\ref{prop:InexpressivityCountingsHLTL}.} Let $n\geq 1$. Recall that $\Lang_n=\{\pi_n,\rho_n\}$ and $\Lang'_n=\{\pi_n,\rho'_n\}$. Given two initial pointed-trace assignments
$\Pi$ over $\Lang_n$ and $\Pi'$ over $\Lang'_n$, we say that $\Pi$ and $\Pi'$ are \emph{similar} if $\Dom(\Pi)=\Dom(\Pi')$ and for all
$x\in \Dom(\Pi)$, either (i) $\Pi(x)=\Pi'(x)=(\pi_n,0)$, or (ii) $\Pi(x)=(\rho_n,0)$ and $\Pi'(x)=(\rho'_n,0)$. Thus, if $\Pi$ and $\Pi'$ are similar, by  Lemma~\ref{lemma:InexpressivityCountingsHLTL}, for each $\SHLTL$ quantifier-free formula $\psi$ such that $|\psi|<n$ it holds that $\Pi\models \psi$ iff $\Pi'\models \psi$. It follows that for each $\SHLTL$ formula $\varphi$ such that
$|\varphi|<n$, $\Lang_n\models \varphi$ iff $\Lang'_n\models \varphi$. Hence, Proposition~\ref{prop:InexpressivityCountingsHLTL} directly follows.

\section{Proofs from Section~\ref{sec:ContextHyper}}\label{APP:ContextHyper}

\subsection{Proof of the claim in the proof of Theorem~\ref{theo:A_S_HLTLinexpressibilitypSuffix}}\label{APP:A_S_HLTLinexpressibilitypSuffix}

Let $\AP=\{p\}$ and for each $n\geq 1$, let $\pi_n$ and $\pi'_n$ be the traces defined as:
 \[
 \pi_n \DefinedAs (p^{n}\cdot \emptyset )^{\omega} \text{ and } \pi'_n \DefinedAs  p^{n+1}\cdot \emptyset \cdot (p^{n}\cdot \emptyset)^{\omega}
 \]
The claim in the proof of
Theorem~\ref{theo:A_S_HLTLinexpressibilitypSuffix} corresponds to the following proposition.

\begin{proposition}\label{prop:A_S_HLTLinexpressibilitySuffix} For each $\AHLTL$  formula $\psi$, there is $n\geq 1$ such that $\{\pi_n\}\models \psi $ if and only if $\{\pi'_n\}\models \psi $.
\end{proposition}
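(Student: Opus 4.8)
The plan is to mirror the argument used for $p$-synchronicity (Lemma~\ref{lemma:pSynchronicityAHLTL}), exploiting the single structural fact that $\pi'_n$ is obtained from $\pi_n$ by prepending one $p$: since $\pi_n = p^n\emptyset\, p^n\emptyset\cdots$, we have $\pi'_n = p\cdot\pi_n$. First I would reduce to the quantifier-free level. Because both models $\{\pi_n\}$ and $\{\pi'_n\}$ are singletons, every trace quantifier (existential or universal) is forced to bind its variable to the unique trace of the model; hence, writing the formula as $Q_1 x_1\cdots Q_k x_k.\,\psi'$ with $\psi'$ equal to $\ET\theta$ or $\AT\theta$, we get $\{\pi\}\models\psi$ iff $\{x_1\leftarrow(\pi,0),\ldots,x_k\leftarrow(\pi,0)\}\models\psi'$, independently of the quantifier pattern. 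By the duality $\AT\theta\equiv\neg\ET\neg\theta$ for a fixed assignment (note negation preserves the next-nesting depth), it then suffices to treat $\psi'=\ET\theta$ and compare the all-$\pi_n$ assignment against the all-$\pi'_n$ assignment.

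Using Proposition~\ref{prop:CharacterizationAHLTLSemantics} I would describe the stuttering expansions explicitly: those of $\pi_n$ are exactly the traces $p^{a_1}\emptyset^{b_1}p^{a_2}\emptyset^{b_2}\cdots$ with each $a_i\geq n$ and each $b_i\geq 1$, while those of $\pi'_n$ additionally require $a_1\geq n+1$. This yields the easy direction at once: every expansion of $\pi'_n$ is an expansion of $\pi_n$, so $\{\pi'_n\}\models\ET\theta$ implies $\{\pi_n\}\models\ET\theta$ via the same witnessing tuple. For the converse I would take a witnessing tuple $\rho_1,\ldots,\rho_k$ of expansions of $\pi_n$ with $\{x_j\leftarrow(\rho_j,0)\}\models\theta$ and set $\rho'_j \DefinedAs p\cdot\rho_j$; since $a_1\geq n$ forces the first $p$-block of $\rho'_j$ to have length $\geq n+1$, each $\rho'_j$ is a genuine expansion of $\pi'_n$, and $(\rho'_j)^1=\rho_j$.

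The crux is a shift-invariance claim analogous to the one inside Lemma~\ref{lemma:pSynchronicityAHLTL}: for every $i\in[0,n-1]$ and every $\HLTL$ quantifier-free $\psi''$ over $\{x_1,\ldots,x_k\}$ whose next-nesting depth is at most $n-1-i$, the assignment $\Pi'=\{x_j\leftarrow(\rho'_j,0)\}$ satisfies $(\Pi',i)\models\psi''$ iff $(\Pi',i+1)\models\psi''$. Here the periodic tail of the traces is irrelevant: for $i\leq n-1$ both positions $i$ and $i+1$ lie inside the first $p$-block of every $\rho'_j$ (length $\geq n+1$), so every relativized atom $p[x_j]$ agrees at $i$ and $i+1$, and the depth bound prevents any formula from reaching the first $\emptyset$. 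I would prove the claim by induction on the structure of $\psi''$: the atomic and Boolean cases are immediate; the $\Next$ case applies the hypothesis at $i+1$ (the budget drops to $n-1-(i+1)$, and $i\leq n-2$ since the depth is positive); and the $\Until$ case is a short case analysis on whether the witnessing position equals the current one or is strictly later. Instantiating with $i=0$ gives $(\Pi',0)\models\theta$ iff $(\Pi',1)\models\theta$ whenever $\mathrm{depth}(\theta)\leq n-1$, and since $(\rho'_j)^1=\rho_j$, satisfaction of $\theta$ at position $1$ of $\Pi'$ is satisfaction at position $0$ of $\{x_j\leftarrow(\rho_j,0)\}$. Chaining these equivalences yields $\{x_j\leftarrow(\rho_j,0)\}\models\theta$ iff $\{x_j\leftarrow(\rho'_j,0)\}\models\theta$, hence $\{\pi_n\}\models\ET\theta$ implies $\{\pi'_n\}\models\ET\theta$, so the two models agree for all $n>|\theta|$.

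I expect the main obstacle to be the $\Until$ case of the shift-invariance claim, because $\Until$ does not decrease the next-nesting depth yet its semantics quantifies over arbitrarily far future positions; the delicate point is transferring a witness for $\Until$ between positions $i$ and $i+1$ in both directions using only the inductive hypothesis on the (structurally smaller) arguments together with the boundary subcases where the witness is $i$ or $i+1$. A secondary point to get right is the bookkeeping of the depth budget in the $\Next$ case and the clean reduction of the $\AT$ case to the $\ET$ case via negation, which leaves the relevant next-depth unchanged.
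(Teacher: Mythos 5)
Your proposal is correct and follows essentially the same route as the paper: reduce to the quantifier-free level over the singleton model, handle $\AT$ via the duality $\AT\theta\equiv\neg\ET\neg\theta$, use that every stuttering expansion of $\pi'_n$ is one of $\pi_n$ for the easy direction, and for the converse prepend $p$ to the witnessing expansions and prove the shift-invariance claim (positions $i$ versus $i+1$, next-depth at most $n-1-i$) by structural induction. Your elaboration of the $\Until$ case (splitting on whether the witness equals the current position) correctly fills in what the paper dismisses as a "straightforward induction".
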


In order to prove Proposition~\ref{prop:A_S_HLTLinexpressibilitySuffix}, we
need the following preliminary result.

 \begin{lemma}\label{lemma:SuffixPropertyAHLTL} Let $n\geq 1$ and $\psi$ be an \HLTL\  quantifier-free formula over $\{p\}$ with trace variables $x_1,\ldots,x_k$, and whose nesting depth of next modality is at most $n-1$. Then,
\[\{x_1 \leftarrow (\pi_n,0),\ldots,x_k \leftarrow (\pi_n,0)\}\models \ET\psi  \text{ iff }
\{x_1 \leftarrow (\pi'_n,0),\ldots,x_k \leftarrow (\pi'_n,0)\}\models \ET\psi
\]
\end{lemma}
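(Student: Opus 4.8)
The plan is to mirror the argument of Lemma~\ref{lemma:pSynchronicityAHLTL}, exploiting the structural identity $\pi'_n = p\cdot\pi_n$. First I would record two facts about stuttering expansions. Since $\pi_n = (p^{n}\cdot\emptyset)^{\omega}$ begins with the value $p$, prepending one more $p$ only lengthens its first block, so $\pi'_n$ is itself a stuttering expansion of $\pi_n$; moreover stuttering expansion is transitive. By Proposition~\ref{prop:CharacterizationAHLTLSemantics}, $\{x_j\leftarrow(\pi_n,0)\}_j\models\ET\psi$ holds iff each $x_j$ can be bound to some stuttering expansion of $\pi_n$ so that $\psi$ is satisfied, and likewise for $\pi'_n$. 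Hence the direction $\{x_j\leftarrow(\pi'_n,0)\}_j\models\ET\psi \Rightarrow \{x_j\leftarrow(\pi_n,0)\}_j\models\ET\psi$ is immediate: any family of stuttering expansions of $\pi'_n$ witnessing the left-hand side consists, by transitivity, of stuttering expansions of $\pi_n$.

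For the converse I would start from witnessing stuttering expansions $\rho_1,\ldots,\rho_k$ of $\pi_n$ with $\{x_j\leftarrow(\rho_j,0)\}_j\models\psi$ and set $\rho'_j \DefinedAs p\cdot\rho_j$. The first sub-step is to check that each $\rho'_j$ is a stuttering expansion of $\pi'_n$. Writing $\rho_j = p^{a_0}\emptyset^{b_0}p^{a_1}\emptyset^{b_1}\cdots$, where $a_\ell\geq n$ (each of the $n$ leading $p$-positions within a period of $\pi_n$ expands at least once) and $b_\ell\geq 1$, we get $\rho'_j = p^{a_0+1}\emptyset^{b_0}p^{a_1}\cdots$; this matches the block pattern $p^{n+1}\,\emptyset\,p^{n}\,\emptyset\,p^{n}\cdots$ of $\pi'_n$ because $a_0+1\geq n+1$ and $a_\ell\geq n$ for $\ell\geq1$, while each single $\emptyset$ of $\pi'_n$ is covered by one $\emptyset$-block of $\rho'_j$.

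The remaining sub-step transfers satisfaction from $\rho$ to $\rho'$. Since $(\rho'_j)^{1} = \rho_j$, shifting the evaluation point by one gives $(\{x_j\leftarrow(\rho'_j,0)\}_j,1)\models\psi$. It then suffices to prove the analogue of the Claim in the proof of Lemma~\ref{lemma:pSynchronicityAHLTL}: for every $i\in[0,n-1]$ and every $\HLTL$ quantifier-free formula $\psi'$ over $\{p\}$ in the variables $x_1,\ldots,x_k$ of next-depth at most $n-i-1$, $(\{x_j\leftarrow(\rho'_j,0)\}_j,i)\models\psi'$ iff $(\{x_j\leftarrow(\rho'_j,0)\}_j,i+1)\models\psi'$. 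Applying this at $i=0$ to $\psi$ (whose next-depth is $\leq n-1$) yields $\{x_j\leftarrow(\rho'_j,0)\}_j\models\psi$, and Proposition~\ref{prop:CharacterizationAHLTLSemantics} then gives $\{x_j\leftarrow(\pi'_n,0)\}_j\models\ET\psi$, completing the proof.

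I would prove the Claim by structural induction on $\psi'$ with $i$ as a parameter. The atomic case $\Rel{p}{x_j}$ uses that the leading $p$-run of each $\rho'_j$ has length $a_0+1\geq n+1$, so positions $i$ and $i+1$ (both in $[0,n]$ when $i\leq n-1$) carry the value $p$; the Boolean cases are immediate from the induction hypothesis at the same $i$. For $\Next\theta$ the depth bound forces $i\leq n-2$ and $\mathrm{nd}(\theta)\leq n-(i+1)-1$, so the induction hypothesis at $i+1$ equates $\theta$ at $i+1$ and $\theta$ at $i+2$. The $\Until$ case is the main obstacle: for $\theta_1\Until\theta_2$ one argues by a short analysis of the witness position, using the induction hypothesis on $\theta_1$ and $\theta_2$ at $i$ (both have next-depth $\leq n-i-1$) to slide the witness and the first required $\theta_1$-point between $i$ and $i+1$. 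This is exactly where the length-$(n{+}1)$ $p$-prefix is needed, and it is the only point at which the periodic (rather than eventually-constant) shape of $\pi_n$ differs superficially from the $p$-synchronicity setting; the inductive bookkeeping itself is identical.
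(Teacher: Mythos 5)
Your proof is correct and essentially identical to the paper's: the same easy direction via inclusion of stuttering expansions and Proposition~\ref{prop:CharacterizationAHLTLSemantics}, the same construction $\rho'_j = p\cdot\rho_j$ for the converse, and the same shift-by-one claim (satisfaction at $i$ iff at $i+1$ for formulas of next-depth at most $n-i-1$) proved by the same double induction, which the paper leaves as ``straightforward''. One cosmetic quibble: the length-$(n{+}1)$ $p$-prefix is what settles the \emph{atomic} case, while the $\Until$ case needs only the induction hypothesis to slide witnesses between $i$ and $i+1$, so your closing remark misattributes where that prefix is actually used.
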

\begin{proof}
By the definition of the traces $\pi_n$ and $\pi'_n$, each stuttering expansion of $\pi'_n$ is also a stuttering expansion
of $\pi_n$. Hence, by Proposition~\ref{prop:CharacterizationAHLTLSemantics}, we have that
 $\{x_1 \leftarrow (\pi'_n,0),\ldots,x_k \leftarrow (\pi'_n,0)\}\models \ET\psi$ entails that
$\{x_1 \leftarrow (\pi_n,0),\ldots,x_k \leftarrow (\pi_n,0)\}\models \ET\psi$.

Now assume that $\{x_1 \leftarrow (\pi_n,0),\ldots,x_k \leftarrow (\pi_n,0)\}\models \ET\psi$. By
Proposition~\ref{prop:CharacterizationAHLTLSemantics}, for all $i\in [1,k]$, there is a stuttering
expansion $\rho_i$ of $\pi_n$ such that
\[
 \{x_1 \leftarrow (\rho_1,0),\ldots,x_k \leftarrow (\rho_k,0)\}\models \psi
\]
For all $i\in [1,k]$, let $\rho'_i$ be the trace given by $p\cdot \rho_i$. By definition of $\pi_n$ and $\pi'_n$,
it holds that $\rho'_i$ is a stuttering expansion of $\pi'_n$ for all $i\in [1,k]$.
We prove that
\[
 \{x_1 \leftarrow (\rho'_1,0),\ldots,x_k \leftarrow (\rho'_k,0)\}\models \psi
\]
Hence, by Proposition~\ref{prop:CharacterizationAHLTLSemantics} the result follows. We first deduce the following preliminary result which can be proved by a straightforward induction on $n-i-1\in [0,n-1]$.\vspace{0.1cm}

\noindent \textbf{Claim.} Let $i\in [0,n-1]$ and $\psi'$ be an $\HLTL$ quantifier-free formula 
with variables in $\{x_1,\ldots,x_k\}$ and whose nesting depth of next modality is at most
$n-i-1$. Then:
\[
 (\{x_1 \leftarrow (\rho'_1,0),\ldots,x_k \leftarrow (\rho'_k,0)\},i)\models \psi' \text{ iff }
  (\{x_1 \leftarrow (\rho'_1,0),\ldots,x_k \leftarrow (\rho'_k,0)\},i+1)\models \psi'
\]
 By construction, we have that for each $i\in [1,k]$, $(\rho'_i)^{1}= \rho_i$. Thus, by the previous claim, we obtain that for each \HLTL\  quantifier-free formula 
 with trace variables $x_1,\ldots,x_k$, and whose nesting depth of next modality is at most $n-1$,
\[
 \{x_1 \leftarrow (\rho_1,0),\ldots,x_k \leftarrow (\rho_k,0)\}\models \psi \text{ iff }
  \{x_1 \leftarrow (\rho'_1,0),\ldots,x_k \leftarrow (\rho'_k,0)\}\models \psi
\]
and we are done.
\end{proof}

\noindent \textbf{Proof of Proposition~\ref{prop:A_S_HLTLinexpressibilitySuffix}.}
Let $\psi$ be an $\AHLTL$ formula with trace variables $x_1,\ldots,x_k$ and $\psi'$  be the quantifier-free part of $\psi$.
  First, assume that $\psi'$ is of the form $\ET\psi''$ for some $\HLTL$ quantifier-free formula $\psi''$.
  Note that for each trace variable $\pi$,
   $\{\pi\}\models \psi$ iff $\{x_1 \leftarrow (\pi,0),\ldots,x_k \leftarrow (\pi,0)\}\models \ET\psi''$. Thus, by applying
   Lemma~\ref{lemma:SuffixPropertyAHLTL}, it follows that for each $n>|\psi|$, $\{\pi_n\}\models \psi $ iff $\{\pi'_n\}\models \psi $, and the result holds.

  Now, assume that $\psi'$ is of the form $\AT\psi''$ for some $\HLTL$ quantifier-free formula $\psi''$.
  Evidently, for each trace variable $\pi$,    $\{\pi\}\not\models \psi$ iff $\{x_1 \leftarrow (\pi,0),\ldots,x_k \leftarrow (\pi,0)\}\models \ET\neg\psi''$. Hence, by Lemma~\ref{lemma:SuffixPropertyAHLTL}, for each  $n>|\psi|$, $\{\pi_n\}\models \psi $ iff $\{\pi'_n\}\models \psi $, and the result holds in this case as well.

\subsection{Proof of Theorem~\ref{theo:CharacterizationCHLTLFIniteTraces}}\label{APP:CharacterizationCHLTLFIniteTraces}

In this section, we provide a proof of Theorem~\ref{theo:CharacterizationCHLTLFIniteTraces}.

\setcounter{aux}{\value{theorem}}
\setcounter{theorem}{\value{theorem-CharacterizationCHLTLFIniteTraces}}
\setcounter{auxSec}{\value{section}}
\setcounter{section}{\value{sec-CharacterizationCHLTLFIniteTraces}}

\begin{theorem} Given a $\FOPLUS$ sentence $\varphi$ over $\AP$, one can construct in polynomial time a $\CHLTL$ formula $\psi$ over $\AP\cup \{\#\}$ such that $\Lang_f(\psi)$ is the set of models of $\varphi$.
Vice versa, given a $\CHLTL$ formula $\psi$ over $\AP\cup \{\#\}$, one can construct in single exponential time a $\FOPLUS$ sentence $\varphi$ whose set of models is $\Lang_f(\psi)$.
\end{theorem}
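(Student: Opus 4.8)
The plan is to prove the two directions separately, exploiting that over a singleton model $\{enc(w)\}$ the trace quantifiers of $\CHLTL$ are vacuous: every $\exists x$ and $\forall x$ merely binds $x$ to $(enc(w),0)$, so $\{enc(w)\}\models\psi$ reduces to evaluating the quantifier-free matrix of $\psi$ with all trace variables placed at position $0$. The whole correspondence is then between first-order quantification over positions of $w$ in $\FOPLUS$ and the displacement of trace variables along $enc(w)$ effected by the context temporal modalities, with the proposition $\#$ acting as an end-of-word marker: it holds at a position of $enc(w)$ exactly when that position is $\geq|w|$, i.e.\ lies in the padding. Here $\Lang_f(\psi)=\{w:\{enc(w)\}\models\psi\}$.

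For the direction from $\FOPLUS$ to $\CHLTL$ I would fix the sentence $\varphi$ (renaming its first-order variables apart), map each first-order variable $x$ to a trace variable $x$ whose current position along $enc(w)$ encodes the value of $x$, and use two fixed scratch trace variables $u,v$; the final formula prefixes existential trace quantifiers over all of these (vacuous over a singleton) to a matrix $T(\varphi)$. The key gadgets are $EQ(x,y):=\{x,y\}\Eventually(\neg\#[x]\wedge\neg\#[y]\wedge\{x,y\}\Next(\#[x]\wedge\#[y]))$, which holds iff $x,y$ point to the same in-word position (advancing both in lockstep they hit the $\#$-boundary simultaneously iff they started together); $LT(x,y):=\{x\}\Next\{x\}\Eventually\,EQ(x,y)$ for $x<y$; and for $z=y+x$ the gadget $PLUS:=\{v\}\Eventually\bigl(EQ(v,y)\wedge\{u,v\}\Eventually(EQ(u,x)\wedge EQ(v,z))\bigr)$, which slides $v$ onto $y$ and then slides $u$ (from $0$) and $v$ in lockstep by a common offset $k$, forcing $k=p_x$ and $p_y+k=p_z$. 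Quantifiers become single-variable contexts, $\exists x.\varphi'\mapsto\{x\}\Eventually(\neg\#[x]\wedge T(\varphi'))$ and $\forall x.\varphi'\mapsto\{x\}\Always(\neg\#[x]\rightarrow T(\varphi'))$, the guard $\neg\#[x]$ restricting the displacement to in-word positions. The point that makes this compositional is that a context modality displaces variables only within its own subformula, so sibling conjuncts and the matrix outside a gadget always see the intended positions; in particular the scratch pair $u,v$ can be reused across all $+$-atoms since each $PLUS$ logically returns them to where it found them. An induction on $\varphi$ then yields $(enc(w),\text{positions})\models T(\varphi')$ iff $(w,\text{positions})\models\varphi'$, and since every connective and atom expands to a constant-size gadget, $T(\varphi)$ has linear size, giving the polynomial-time bound.

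For the converse I would build, by induction on the subformulas $\psi'$ of the matrix of $\psi$ and for every pair $(S,C)$ consisting of a \emph{padding set} $S\subseteq\{z_1,\dots,z_k\}$ (the trace variables currently in the $\#$-region) and a context $C$, a $\FOPLUS$ formula $\mathrm{Tr}_{S,C}(\psi')$ with one free position variable $y_i$ per trace variable $z_i\notin S$, maintaining the invariant $(w,\{y_i\})\models\mathrm{Tr}_{S,C}(\psi')$ iff $(\Pi,C)\models\psi'$ when $\Pi$ places each $z_i\notin S$ at $y_i$ and each $z_i\in S$ somewhere in the padding. This is well defined because $\CHLTL$ has no atom comparing positions, so two variables in the padding are observationally indistinguishable and only the \emph{set} $S$, not the exact padding offsets, matters; atoms translate by cases ($\#[z_i]$ becomes $\top$ or $\bot$ according to $i\in S$, and $p[z_i]$ becomes $P_p(y_i)$ or $\bot$), Boolean connectives commute, and $\tpl{C'}\psi_1$ just switches the context parameter to $C'$. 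The number of pairs $(S,C)$ is $4^{k}=2^{O(|\psi|)}$, which is the source of the single-exponential blow-up.

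The main obstacle is the translation of the temporal modalities, above all $\Until$. For $\Next$ I would advance every variable of $C\setminus S$ by one, case-splitting on the subset of them sitting on the last position of $w$ (and thus crossing into the padding); this gives a disjunction, indexed by the new padding set $S'\supseteq S$, built from the definable successor and references to $\mathrm{Tr}_{S',C}(\psi_1)$. For $\Until$ the difficulty is that advancing $C$ may push variables arbitrarily far into the infinite padding, so I must first argue that the sequence of configurations reached by repeatedly advancing $C$ stabilizes after at most $|w|$ steps --- once every variable of $C$ lies in the padding, further advancing changes nothing observable --- so the witnessing offset $i$ ranges effectively over $\{0,\dots,|w|\}$, with the all-crossed limit handled as a separate case. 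I would then express $\psi_1\Until\psi_2$ by quantifying over this common offset, represented through the addition predicate since the advanced position of $z_j$ is $y_j+i$, together with a case analysis on the \emph{regime}, i.e.\ on which variables of $C\setminus S$ have already crossed by offset $i$; $\psi_2$ is asserted at offset $i$ in the corresponding padding set, and the obligation that $\psi_1$ hold at every earlier offset is discharged regime by regime, the regime boundaries being the sortable crossing times $|w|-y_j$. Verifying that this finite unfolding is faithful to the fixpoint semantics of $\Until$, and that the inlined recursion stays single-exponential (modal depth times the per-step $2^{k}$ branching gives $2^{O(|\psi|^2)}$), is the technical heart of the argument.
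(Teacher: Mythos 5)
Your proposal is correct and takes essentially the same approach as the paper's proof in both directions. From $\FOPLUS$ to $\CHLTL$, the paper uses the very same boundary-hitting gadget $\{x,y\}\Eventually(\neg\#[x]\wedge\neg\#[y]\wedge\Next(\#[x]\wedge\#[y]))$ for position equality, context-relativized $\Eventually$ for encoding quantifiers, and lockstep shifting against variables anchored at position $0$ for the addition atoms (it merely introduces per-atom synchronization and initialization copies where you reuse two global scratch variables); conversely, the paper's translation is likewise an induction parameterized by the current context together with the set of variables already in the padding (its set $Ex$, your $S$), rests on the same lemma that padding positions are observationally indistinguishable, and handles $\Next$ and $\Until$ by the same crossing-set case split and crossing-time/regime analysis you describe, realized there through explicit auxiliary $\FOPLUS$ formulas.
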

\setcounter{theorem}{\value{aux}}
\setcounter{section}{\value{auxSec}}

In order to prove Theorem~\ref{theo:CharacterizationCHLTLFIniteTraces}, we first give a formal definition of the syntax and semantics of $\FOPLUS$. Formulas $\varphi$ of $\FOPLUS$  over $\AP$ and $\Var$ are defined by the following syntax:
\[
\varphi \DefinedAs \top  \ | \  P_a(x)  \ | \  x=y  \ | \ x<y  \ | \ z=x+y  \ | \ \neg \varphi \ | \ \varphi \wedge \varphi \ | \ \exists x. \varphi
\]
where $a\in \AP$ and $x,y,z\in \Var$. Given a finite trace $w$, a $w$-valuation is a partial mapping $g$ over $\Var$ assigning to each variable in its domain $\Dom(g)$ a position $0\leq i<|w|$. For a $\FOPLUS$ formula $\varphi$ and a $w$-valuation $g$ whose domain contains the free variables (i.e. the variables $x$ which do not occur in the scope of a quantifier $\exists x$) of $\varphi$, the satisfaction relation $(w,g)\models \varphi$ is defined as follows (we omit the semantics of the Boolean connectives which is standard):
\[ \begin{array}{ll}
(w,g) \models   P_a(x)  &  \Leftrightarrow  a\in w(g(x))\\
(w,g) \models   x=y    &  \Leftrightarrow  g(x)=g(y)\\
(w,g) \models   x<y    &  \Leftrightarrow  g(x)<g(y)\\
(w,g) \models   z=x+y    &  \Leftrightarrow  g(z)=g(x)+g(y)\\
(w,g) \models  \exists x. \varphi  &  \Leftrightarrow   (w,g[x \mapsto i])\models \varphi \text{ for some }0\leq i<|w|
\end{array} \]
where $g[x \mapsto i](x)=i$ and $g[x \mapsto i](y)=g(y)$ for each $y\in\Dom(g)\setminus\{x\}$. Note that if $\varphi$ is a sentence (i.e., $\varphi$ has no free variables), the satisfaction relation $(w,g)\models \varphi$ is independent
of $g$. We say that $w$ is a model of a sentence $\varphi$, denoted  $w\models \varphi$, if $(w,g_\emptyset)\models \varphi$, where $g_\emptyset$ is the empty $w$-valuation.

Recall that for a $\CHLTL$ formula $\psi$ over $\AP\cup \{\#\}$, where $\#\notin \AP$, $\Lang_f(\psi)$ is the set of
traces (over $\AP\cup \{\#\}$) of the form $enc(w)=w\cdot \{\#\}^{\omega}$ for some finite trace $w$ (over $\AP$) such that
$\{enc(w)\}$ satisfies $\psi$.
For a finite trace $w$, a \emph{$w$-pointed-trace assignment} is a pointed-trace assignment $\Pi$ associating to each variable
$x\in \Dom(\Pi)$, a pointed trace of the form $(enc(w),i)$. For a quantifier-free $\CHLTL$ formula $\psi$ over $\AP\cup \{\#\}$ and a finite trace $w$, we write $enc(w)\models \psi$ to mean that the initial $w$-pointed-trace assignment, having as domain the set of variables occurring in $\psi$ and assigning to each variable the pointed trace $(enc(w),0)$, satisfies $\psi$. Now, we prove Theorem~\ref{theo:CharacterizationCHLTLFIniteTraces}.
The first part of Theorem~\ref{theo:CharacterizationCHLTLFIniteTraces} directly follows from the following proposition.

\begin{proposition}\label{prop:FromFOPLUSToCHLTL} Let $\varphi$ be a $\FOPLUS$ sentence over $\AP$. Then, one can construct in polynomial time a $\CHLTL$ quantifier-free formula $f(\varphi)$ over $\AP\cup\{\#\}$ such that for each finite trace $w$, it holds that
$w\models \varphi$ \emph{if and only if} $enc(w)\models f(\varphi)$.
\end{proposition}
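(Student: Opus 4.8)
The plan is to define the translation $f$ by structural induction on the $\FOPLUS$ sentence $\varphi$, after renaming its bound variables apart, letting each first-order variable $x$ be represented by a homonymous $\CHLTL$ trace variable; the resulting $f(\varphi)$ uses only context modalities and temporal operators, hence is quantifier-free. The guiding invariant is that when $f(\psi)$ is evaluated on the singleton model $\{enc(w)\}$, the free variables of a subformula $\psi$ sit at exactly the positions assigned to them by a $w$-valuation $g$, while every auxiliary or not-yet-quantified variable still sits at its initial position $0$. Since $enc(w)=w\cdot\{\#\}^{\omega}$, a variable points inside $w$ iff $\#$ currently fails on it, and I use this to keep all positions in range.

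First I would design a position-matching gadget. For trace variables $x,y$ pointing to positions $i,j<|w|$, set
\[
 \mathit{eq}(x,y)\DefinedAs \{x,y\}\Eventually\bigl(\neg\#[x]\wedge\neg\#[y]\wedge\Next(\#[x]\wedge\#[y])\bigr).
\]
Advancing $x$ and $y$ in lockstep and requiring that both reach the last $w$-position simultaneously forces $i=j$, so $\mathit{eq}(x,y)$ holds iff $i=j$. The atoms then translate as $P_a(x)\mapsto a[x]$; $x=y\mapsto \mathit{eq}(x,y)$; and $x<y\mapsto \{x\}\Next\{x\}\Eventually\,\mathit{eq}(x,y)$, which advances $x$ by at least one step before trying to match $y$ and hence holds iff $i<j$. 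Booleans are homomorphic, $f(\neg\psi)=\neg f(\psi)$ and $f(\psi_1\wedge\psi_2)=f(\psi_1)\wedge f(\psi_2)$, and I would render first-order quantification by
\[
 f(\exists x.\psi)\DefinedAs \{x\}\Eventually(\neg\#[x]\wedge f(\psi)),
\]
moving the (initially $0$-positioned) variable $x$ to an arbitrary in-range position.

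The main obstacle is the addition atom $z=x+y$. I would express it by transporting the segment $[0,i_y]$ rightwards until its left endpoint meets $x$ and checking that its right endpoint then meets $z$. Using a fresh auxiliary variable $x_0$ kept at position $0$,
\[
 f(z=x+y)\DefinedAs \{x_0,y\}\Eventually\bigl(\mathit{eq}(x_0,x)\wedge \mathit{eq}(y,z)\bigr).
\]
Advancing $x_0$ (from $0$) and $y$ (from $i_y$) by a common amount $m$, the first conjunct pins $m=i_x$ and the second then demands $i_y+m=i_z$, so the gadget holds iff $i_z=i_x+i_y$. Each atom thus maps to a constant-size gadget, and the Boolean and quantifier cases add only constant overhead, giving $|f(\varphi)|=O(|\varphi|)$ and a polynomial-time construction.

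The delicate point making the induction go through is that the matching gadgets \emph{advance} the positions of the variables they inspect, yet this must not corrupt the positions seen by sibling subformulas. The key observation is that Boolean connectives do not move any position: in $f(\psi_1)\wedge f(\psi_2)$ both conjuncts are evaluated at the same assignment, so the advances performed inside one conjunct's gadget are confined to that conjunct. Likewise, every temporal step introduced by $f$ is immediately guarded by an explicit context $\{\cdot\}$, so the ambient context inherited from an enclosing $\{x\}\Eventually$ is always overridden before a step is taken and never interferes. With this locality lemma in hand, I would prove by induction on $\psi$ that, for every $w$-valuation $g$ on the free variables of $\psi$ and the corresponding $\CHLTL$ assignment $\Pi_g$ (free variables at their $g$-positions, all others at $0$), $(w,g)\models\psi$ iff $\Pi_g\models f(\psi)$; instantiating at the sentence $\varphi$ with the all-$0$ assignment then yields $w\models\varphi$ iff $enc(w)\models f(\varphi)$.
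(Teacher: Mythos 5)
Your construction is correct, and at its core it is the same translation as the paper's: the identical $\#$-based position-matching gadget $\{x,y\}\Eventually\bigl(\neg\#[x]\wedge\neg\#[y]\wedge\Next(\#[x]\wedge\#[y])\bigr)$, the same encoding of $z=x+y$ by advancing an auxiliary variable anchored at position $0$ in lockstep with $y$ and testing coincidence of the resulting endpoints, and the same rendering of $\exists x$ as a context-guarded eventually $\{x\}\Eventually(\neg\#[x]\wedge\cdot)$. The genuine difference is in the bookkeeping. The paper introduces, for every atom $\theta$ and every variable $x$ occurring in it, a fresh synchronization copy $x_{\theta,S}$ (plus initialization copies for addition atoms), lets the quantifier for $x$ advance $x$ together with all its copies via the context $C_x$, and has each atom gadget manipulate only its own private copies, so that no gadget ever moves a variable that a sibling subformula reads. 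You instead let the gadgets move the variables $x,y,z$ themselves, and justify this with a locality observation: since $(\Pi,C)\models\psi_1\wedge\psi_2$ evaluates both conjuncts at the same assignment, and since every temporal operator your translation produces sits directly under an explicit context modality, the advances performed inside one gadget are confined to its own recursive evaluation, and the ambient context is never consulted. This observation is sound---the $\CHLTL$ satisfaction relation is compositional, with no side effects across Boolean connectives---so your leaner translation works: one shared auxiliary variable in place of per-atom copies, a single disjunct for $z=x+y$ where the paper uses two symmetric ones, and a slightly different but equally correct gadget for $x<y$. What the paper's copies buy is that non-interference holds for purely syntactic reasons, with no need to state and invoke the locality lemma; what your version buys is a smaller formula and a cleaner induction hypothesis.
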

\begin{proof}
Fix a $\FOPLUS$ sentence $\varphi$ over $\AP$. Without loss of generality, we assume that each quantifier in $\varphi$ introduces
a different variable. For each atomic subformula $\theta$ of $\varphi$ and for each variable $x$ occurring in $\theta$,
we introduce a fresh copy $x_{\theta,S}$ of $x$ we call \emph{synchronization variable}. Moreover, if $\theta$ is of the form
$z=x+y$, we introduce two fresh copies $x_{\theta,I}$ and $y_{\theta,I}$ of $x$ and $y$, respectively, we call
\emph{initialization variables}. For a variable $x$ occurring in $\varphi$, we denote by $C_x$ the set (context) consisting of the variable $x$ plus the synchronization copies of $x$. Note that $C_x$ does not include the initialization copies of $x$.

Given two variables $x$ and $y$, we exploit in the translation  a $\CHLTL$ quantifier-free formula $\psi_=(x,y)$ over $\AP\cup\{\#\}$ and $\{x,y\}$,
which intuitively asserts that along the encoding $enc(w)$ of a finite trace $w$,  $x$ and $y$ refer to the same position of $w$ (recall that $\#\notin \AP$ and $enc(w)=w\cdot \{\#\}^{\omega}$):
     \[
      \psi_=(x,y) \DefinedAs \{x,y\}\Eventually(\neg \#[x]\wedge \neg \#[y]
       \wedge \Next (\#[x]\wedge  \#[y]))
     \]

Let $\theta$ be a subformula of $\varphi$. We associate to $\theta$ a $\CHLTL$ quantifier-free formula $f(\theta)$ over $\AP\cup\{\#\}$. The mapping $f$
is inductively defined as follows:
\begin{itemize}
  \item $ f(P_a(x)) \DefinedAs a[x]$ for each $a\in\AP$.
  \item $\theta$ is of the form $x=y$: $f(\theta) \DefinedAs \psi_=(x_{\theta,S},y_{\theta,S})$.
  \item $\theta$ is of the form $x<y$: $f(\theta) \DefinedAs \{x_{\theta,S},y_{\theta,S}\}\Eventually(\neg \#[x_{\theta,S}]\wedge  \#[y_{\theta,S}])$.
  \item $\theta$ is an atomic formula of the form $z=x+y$:
     \[
     \begin{array}{ll}
       f(\theta) \DefinedAs & \{x_{\theta,I},x_{\theta,S}\}\Eventually(\psi_=\bigl(x_{\theta,I},y_{\theta,S})\wedge \psi_=(x_{\theta,S},z_{\theta,S})\bigr)\,\vee\,\vspace{0.1cm}\\
      & \{y_{\theta,I},y_{\theta,S}\}\Eventually(\psi_=\bigl(y_{\theta,I},x_{\theta,S})\wedge \psi_=(y_{\theta,S},z_{\theta,S})\bigr)
       \end{array}
     \]
     The first disjunct handles the case where $x\leq y$, while the second disjunct
     considers the case where $x>y$. For the first case (the second case being similar), we exploit the synchronization
     variables $x_{\theta,S}$, $y_{\theta,S}$, and $z_{\theta,S}$, and the initialization variable $x_{\theta,I}$. Intuitively, in $\CHLTL$, the variables $x_{\theta,I}$ and $x_{\theta,S}$  are moved forwardly and synchronously of a non-negative offset by the eventually modality. Moreover, assuming that $x_{\theta,I}$ points to the first position of the word, we check
     (by means of the auxiliary formula $\psi_=$) that   this offset corresponds to $y$ and at the same time $z=x+y$ by checking that after the shift the variables   $x_{\theta,I}$ and $y_{\theta,S}$ (resp., $x_{\theta,S}$ and $z_{\theta,S}$) refer to the same position.
  \item $f(\neg\theta)\DefinedAs \neg f(\theta)$.
  \item $f(\theta_1\wedge\theta_2)\DefinedAs f(\theta_1)\wedge f(\theta_2)$.
  \item $f(\exists x.\theta)\DefinedAs  C_x \Eventually(\neg\#[x]\wedge f(\theta))$.
\end{itemize}\vspace{0.1cm}

We now prove that the construction is correct. Let $\Var$ be the set of variables occurring in $\varphi$ and $\widehat{\Var}$
be the set of variables consisting of the variables in $\Var$ together with their synchronization and initialization copies.

Let $w$ be a finite trace, $\theta$ be a subformula of $\varphi$, $V\subseteq\Var$ the set of variables occurring
in $\theta$, and $g$ a $w$-valuation  with domain $V$. We denote by $\Pi_{w,g}$ the $w$-pointed-trace assignment with domain $\widehat{\Var}$ defined as follows for all $x\in \Var$ and for all synchronized copies $x_{\theta,S}$ and
initialization copies $x_{\theta,I}$ of $x$:
\begin{itemize}
  \item $\Pi_{w,g}(x_{\theta,I})\DefinedAs(enc(w),0)$,
  \item if $x\in V$, then $\Pi_{w,g}(x)\DefinedAs (enc(w),g(x))$ and $\Pi_{w,g}(x_{\theta,S})\DefinedAs(enc(w),g(x))$,
  \item if $x\notin V$, then $\Pi_{w,g}(x)\DefinedAs (enc(w),0)$ and $\Pi_{w,g}(x_{\theta,S})\DefinedAs(enc(w),0)$.
\end{itemize}

By  a straightforward induction on the structure of $\theta$, the following holds.\vspace{0.1cm}

\noindent \textbf{Claim.} $(w,g)\models \theta$ if and only if $\Pi_{w,g}\models f(\theta)$.  \vspace{0.1cm}

Since $\varphi$ is a sentence, by the previous claim, we obtain that $(w,g_\emptyset)\models \theta$ iff $\Pi_{w,g_\emptyset}\models f(\varphi)$, where $g_\emptyset$ is the empty $w$-valuation. Note that $\Pi_{w,g_\emptyset}$
associates to each variable in $\widehat{\Var}$ the pointed-trace $(enc(w),0)$. Hence
$w\models \varphi$ iff $enc(w)\models f(\varphi)$, and we are done.
\end{proof}

Finally, the second part of Theorem~\ref{theo:CharacterizationCHLTLFIniteTraces} directly follows from the following proposition.

\begin{proposition}\label{FromSingleTRaceCHLTLToFOPLUS} Let $\varphi$ be a $\CHLTL$ quantifier-free formula over $\AP\cup\{\#\}$.  Then, one can construct in exponential time a  $\FOPLUS$ sentence $f(\varphi)$ over $\AP$ such that for each finite trace $w$, it holds that $w\models f(\varphi)$ \emph{if and only if} $enc(w)\models  \varphi$.
\end{proposition}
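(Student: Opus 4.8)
The plan is to translate $\varphi$ by induction on its structure, tracking for each trace variable the current position of the pointed trace assigned to it. The central observation is that $\#$ holds in $enc(w)=w\cdot\{\#\}^{\omega}$ \emph{exactly} at the padding positions $i\geq|w|$, and that at such positions the truth of every atomic formula is frozen: no $a\in\AP$ ever holds, while $\#$ always holds. Since the modalities of $\CHLTL$ are future-only and the context operations only \emph{increase} positions, once a variable enters the padding it can never return to a real position, and no future-only formula can distinguish two different padding positions (this would require $\neg\#$ to become true again). Hence the truth of a quantifier-free formula at a configuration $(i_1,\dots,i_n)$ depends only on, for each variable $x_j$, its exact position $i_j$ when $i_j<|w|$, and on the single fact ``$x_j$ is in the padding'' otherwise. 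This lets me collapse all padding positions to one sentinel and work over bounded configurations; the induction below simultaneously proves this indistinguishability.

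Concretely, I would represent a configuration by a set $B\subseteq\{x_1,\dots,x_n\}$ of the variables currently in the padding, together with a first-order variable $t_j$ (ranging over positions of $w$) for each $x_j\notin B$. For every subformula $\psi$, context $C$, and set $B$ I build a $\FOPLUS$ formula $\tau_{C,B}(\psi)$ with free variables $\{t_j : x_j\notin B\}$ so that $(\Pi,C)\models\psi$ iff $(w,g)\models\tau_{C,B}(\psi)$, where $B$ is the set of padding variables of $\Pi$ and $g$ maps each $t_j$ to the real position of $x_j$. The cases are: for an atom $p[x_j]$, use $P_p(t_j)$ when $x_j\notin B$ and $p\in\AP$ (and $\bot\DefinedAs\neg\top$ when $p=\#$), and the constant $\top$ resp.\ $\bot$ when $x_j\in B$ according to whether $p=\#$; Boolean connectives are homomorphic; the context switch is $\tau_{C,B}(\tpl{C'}\psi)\DefinedAs\tau_{C',B}(\psi)$; and $\Next\psi$ advances every $x_j\in C\setminus B$ by one, moving into $B$ exactly those whose $t_j$ is the last position of $w$, which is a finite case split over which real variables of $C$ sit at the last position.

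The main obstacle is the until modality $\psi_1\Until\psi_2$ under a context $C$, which introduces an \emph{unbounded} offset; here the collapse observation is what keeps the encoding finite. As the offset $\delta$ grows, the set of padding variables among $C$ grows monotonically — the variable with the larger $t_j$ enters the padding first, at offset $|w|-t_j$ — so the offset range splits into at most $n+1$ \emph{phases}, each with a fixed padding set, and there is a final phase in which \emph{all} $C$-variables are in the padding and the abstract configuration no longer changes. For the witnessing offset $i$ I distinguish whether it lies in a phase that still keeps some $C$-variable real, in which case $i<|w|$ and $i$ may be quantified as an ordinary first-order variable with advanced positions written as $t_j+i$ via the $+$ predicate, or in the final all-padding phase, handled by one extra disjunct checking $\psi_2$ in the fixed all-padding configuration. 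The universal part ``$\psi_1$ holds at every offset $k<i$'' is translated dually: phase membership of $k$ is expressed by inequalities $k<|w|-t_j$ (definable with $<$ and $+$), the real phases are handled by a first-order $\forall k$, and the constant all-padding tail by a single additional check. Getting the phase boundaries, the transitions into the padding, and the interaction between the existential offset and the universal ``for all $k<i$'' condition exactly right is the delicate part of the argument.

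Finally, the target sentence is $f(\varphi)\DefinedAs\exists t_0.\bigl(\neg\exists z.\,z<t_0\bigr)\wedge\tau_{\Var,\emptyset}(\varphi)[t_j:=t_0]$, forcing every variable to start at the minimal position $0$ under the global context with empty padding set, which matches the initial assignment $(enc(w),0)$ underlying $enc(w)\models\varphi$. Correctness follows by the routine induction establishing the displayed equivalence for $\tau_{C,B}$, invoking the collapse observation at the base and until cases. The size is single exponential: the parameter $B$ ranges over the $2^{n}$ subsets of the (at most $|\varphi|$ many) trace variables, and the phase and case analyses contribute only polynomially on top, so $f(\varphi)$ is computable in exponential time, as claimed.
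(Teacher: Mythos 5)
Your proposal is correct and takes essentially the same approach as the paper's proof: your collapse observation is the paper's Claim~1 (the equivalence $\sim_w$ on $w$-pointed-trace assignments), your padding set $B$ plays exactly the role of the paper's set $Ex$, and your phase decomposition of the until modality---existential witness either in a phase keeping some $C$-variable real or in the final all-padding phase, with the universal ``for all $k<i$'' condition handled phase by phase---is precisely the paper's disjunction over partitions of $C\setminus Ex$ with its auxiliary formulas $g$ and $h$. The differences (first-order variables $t_j$ versus level-indexed copies $y_i^{\ell}$, substitution at the end versus explicit anchoring $y_i^{1}=x_i\wedge x_i=0$) are purely notational.
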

\begin{proof}
Fix a $\CHLTL$ quantifier-free formula $\varphi$ and let $\Var=\{x_1,\ldots,x_n\}$ be the set of variables occurring in $\varphi$. Given a finite trace $w$, we define an equivalence relation $\sim_w$ between $w$-pointed-trace assignments $\Pi_1$ and $\Pi_2$ having domain $\Var$:
$\Pi_1\sim_w \Pi_2$ if for each $x_i\in\Var$ with $\Pi_1(x_i)=(enc(w),h_1)$ and $\Pi_2(x_i)=(enc(w),h_2)$, it holds that (i) $h_1< |w|$ iff $h_2<|w|$, and (ii) $h_1=h_2$ iff $h_1<|w|$. Thus, $\Pi_1$ and $\Pi_2$ are equivalent if  they coincide on all the variables which refer to positions of the finite trace $w$. In the translation from $\CHLTL$ into $\FOPLUS$, we will exploit the following result which can be proved by a straightforward induction on the structure of $\varphi$.\vspace{0.2cm}

\noindent \textbf{Claim 1.} Let $w$ be a finite trace and $\Pi_1$ and $\Pi_2$ be two $w$-pointed-trace assignments with domain $\Var$ such that $\Pi_1 \sim_w \Pi_2$. Then, $\Pi_1\models \varphi$ iff $\Pi_2\models \varphi$.\vspace{0.2cm}

For each variable $x_i\in \Var$ and $\ell\in \N$, let $y_i^{\ell}$ be a fresh copy of $x_i$. For a subformula $\theta$ of $\varphi$,
a context $\emptyset\neq C\subseteq \Var$, $\ell\in\N$, and a subset $Ex\subseteq\Var$, we associate to the tuple $(\theta,C,\ell,Ex)$ a  $\FOPLUS$ formula $f(\theta,C,\ell,Ex)$ over $\AP$ with free variables $y_1^{\ell},\ldots,y_n^{\ell}$. Intuitively, $C$ represents the current context and for the given finite trace $w$, $Ex$ is the set of variables in $\Var$ whose copies refer to positions in $enc(w)=w\cdot \{\#\}^{\omega}$ which go beyond $w$. The $\FOPLUS$ formula $f(\theta,C,\ell,Ex)$ is inductively defined as follows:
\begin{itemize}
  \item $\theta$ is a relativized atomic proposition $p[x_i]$:
  \[
   f(p[x_i],C,\ell,Ex) \DefinedAs
          \left\{\begin{array}{ll}
                  P_p[y_i^{\ell}]             & \text{ if }p\neq \# \text{ and }x_i\notin Ex \\
                  \top             & \text{ if }p= \# \text{ and }x_i\in Ex \\
                 \neg\top            &  \text{ otherwise}
                \end{array}\right.
  \]
  \item $f(\neg\theta,C,\ell,Ex)\DefinedAs \neg f(\theta,C,\ell,Ex)$.
  \item $f(\theta_1\wedge\theta_2,C,\ell,Ex)\DefinedAs  f(\theta_1,C,\ell,Ex)\wedge f(\theta_2,C,\ell,Ex)$.
  \item $f(C'\theta,C,\ell,Ex)\DefinedAs  f(\theta,C',\ell,Ex)$.
  \item $\theta$ is of the form $\theta=\Next\theta_1$.

  \noindent For the variables $x_i\in C\setminus Ex$, we need to distinguish between those whose copies
  $y_i^{\ell}$ refer to the last position of the given finite trace $w$ from those whose copies
  $y_i^{\ell}$ refer to positions along $w$ distinct from the last one. In the first case, when we apply the next modality
  $\Next$ in the context $C$, then the copies $y_i^{\ell}$ of the corresponding variables $x_i$ will refer to positions
  which does not belong to $w$. This is encoded by adding the set $V$ of these variables $x_i$ to the set $Ex$. Thus,
   $f(\Next\theta_1,C,\ell,Ex)$ is defined as follows, where $y_0$ is a fresh variable:
   \[
   \begin{array}{l}
   f(\Next\theta_1,C,\ell,Ex)\DefinedAs  \displaystyle{\bigvee_{V\subseteq C\setminus Ex}}\exists y_1^{\ell+1}\ldots
   \exists y_n^{\ell+1}.\,\Bigl(\displaystyle{\bigwedge_{x_i\in C\setminus Ex}}(x_i\in V \leftrightarrow \forall y_0.\, y_0\leq y_i^{\ell})\,\wedge\,\vspace{0.2cm}\\
      \displaystyle{\bigwedge_{x_i\in (\Var\setminus C)\cup Ex\cup V}}y_i^{\ell+1}= y_i^{\ell}\,\wedge\,
     \displaystyle{\bigwedge_{x_i\in C\setminus (Ex\cup V)}}y_i^{\ell+1}= y_i^{\ell}+1 \,\wedge\,
     f(\theta_1,C,\ell+1,Ex\cup V)
   \end{array}
   \]
   \item $\theta$ is of the form $\theta=\theta_1\Until\theta_2$.

   \noindent This is the case more involved. Intuitively, in the evaluation of $\theta_1\Until\theta_2$
    within the context $C$, the variables $x_i\in C\setminus Ex$ move forwardly and synchronously of a non-negative offset
    $z\geq 0$. After this shift, $\theta_2$ must be true and for all intermediate offsets $z'\leq z$, $\theta_1$ must be true. However, during these $z'$-shifts some variables can move beyond the given finite trace $w$ in $enc(w)=w\cdot \{\#\}^{\omega}$.
    We can handle this situation as follows. For a subset $V\subseteq \Var$, we denote by $Part(V)$ the set of tuples
    $(P_1,\ldots,P_k)$ such that $P_1,\ldots,P_k$ form a partition of $V$ (i.e., $P_1,\ldots,P_k$ are pairwise disjunct and their union is $V$) and $P_1,\ldots,P_{k-1}$ are not empty. Note that $k\leq n$. Then, for the previous offset $z$, there must be a partition $(P_1,\ldots,P_k)\in Part(C\setminus Ex)$ and intermediate offsets $0\leq z_1<\ldots <z_k$ such that
    $z_k=z$ and the following holds:
    \begin{itemize}
      \item for each $i\in [1,k-1]$, $P_i$ is the set of all and only variables in $C\setminus Ex$ that after the $z_i$-shift move exactly at the
      last position of $w$,
      \item $P_k$ is the (possibly empty) set of all and only the variables in $C\setminus Ex$ that after the $z_k$-shift, where $z_k=z$, refer to positions belonging to $w$.
    \end{itemize}
    Then the $\FOPLUS$ formula $f(\theta_1\Until\theta_2,C,\ell,Ex)$ is defined as follows:
   \[
   \begin{array}{l}
   f(\theta_1\Until\theta_2,C,\ell,Ex)\DefinedAs \displaystyle{\bigvee_{(V_1,\ldots,V_k)\in Part(C\setminus Ex)}}\exists z_0\exists z_1\ldots\exists z_k.\, z_1<z_2\ldots<z_k\,\wedge \, \vspace{0.2cm}\\
   z_0=0\,\wedge \, (z_k>0 \rightarrow f(\theta_1,C,\ell,Ex))\,\wedge \, \displaystyle{\bigwedge_{i\in [1,k-1]} g(z_{i-1},z_i,V_i,\theta_1, C, \ell,Ex \cup \bigcup_{j\in [1,i-1]}V_i ) } \,\wedge\, \vspace{0.2cm}\\
  \displaystyle{h(z_{k-1},z_k,\theta_1,\theta_2,C,\ell, Ex\cup \bigcup_{j\in [1,k-1]}V_j )}
   \end{array}
   \]
   where for two  variables $y$ and $z$, a subset  $V\subseteq C$, two subformulas
   $\psi_1$ and $\psi_2$ of $\varphi$, $\ell\in\N$, and a subset $Ex\subseteq \Var$, the $\FOPLUS$ formulas
   $g(y,z,V,\psi_1, \ell, C, Ex)$ and $h(y,z, \psi_1,\psi_2,C, \ell,  Ex)$ are defined as follows:
\[
  \begin{array}{l}
  g(y,z,V,\psi_1, C,\ell,Ex)\DefinedAs \exists y_1^{\ell+1}.\,\ldots \exists y_n^{\ell+1}.
  \Bigl( \displaystyle{\bigwedge_{x_i\in (\Var\setminus C)\cup Ex}}  y_i^{\ell+1}= y_i^{\ell}\,\, \wedge \vspace{0.2cm}\\
  \displaystyle{ \bigwedge_{x_i\in C\setminus Ex}}(y_i^{\ell+1}= y_i^{\ell}+z \wedge (x_i\in V \leftrightarrow \forall z_0.\, z_0\leq y_i^{\ell+1}))\, \wedge\,  \forall z_0\forall y_1^{\ell+2}.\,\ldots \forall y_n^{\ell+2}.\,\vspace{0.2cm}\\
  \bigl[y<z_0\leq z\,\wedge\,\displaystyle{\bigwedge_{x_i\in (\Var\setminus C)\cup Ex}} y_i^{\ell+2}= y_i^{\ell}  \, \wedge\,   \displaystyle{\bigwedge_{x_i\in C\setminus Ex}}  y_i^{\ell+2}= y_i^{\ell}+z_0  \bigr] \longrightarrow
 f(\psi_1,C,\ell+2,Ex) \Bigr)
\end{array}
\]
\[
  \begin{array}{l}
  h(y,z,\psi_1,\psi_2, C,\ell,Ex)\DefinedAs \exists y_1^{\ell+1}.\,\ldots \exists y_n^{\ell+1}.
  \Bigl( \displaystyle{\bigwedge_{x_i\in (\Var\setminus C)\cup Ex}}  y_i^{\ell+1}= y_i^{\ell} \,\wedge
   \vspace{0.2cm}\\
  \displaystyle{\bigwedge_{x_i\in C\setminus Ex}}y_i^{\ell+1}= y_i^{\ell}+z \,\wedge\,f(\psi_2,C,\ell+1,Ex) \,\wedge\,\forall z_0\forall y_1^{\ell+2}.\,\ldots \forall y_n^{\ell+2}.\,\vspace{0.2cm}\\
  \bigl[y<z_0<z\,\wedge\,\displaystyle{\bigwedge_{x_i\in (\Var\setminus C)\cup Ex}} y_i^{\ell+2}= y_i^{\ell}  \, \wedge\,   \displaystyle{\bigwedge_{x_i\in C\setminus Ex}}  y_i^{\ell+2}= y_i^{\ell}+z_0  \bigr] \longrightarrow
 f(\psi_1,C,\ell+2,Ex) \Bigr)
\end{array}
\]
\end{itemize}\vspace{0.2cm}

Now, we prove that the construction is correct. Let $w$ be a finite trace. Given a $w$-pointed trace assignment
$\Pi$ with domain $\Var$ and $\ell\in\N$, we associate to $\Pi$ and $\ell$ a $w$-valuation  $g(\Pi,\ell)$ with domain $\{y_1^{\ell},\ldots,y_n^{\ell}\}$ defined as follows for each variable $y_i^{\ell}$, where $\Pi(x_i)=(enc(w),h)$:
$g(\Pi,\ell)(y_i^{\ell})\DefinedAs h$ if $h<|w|$, and $g(\Pi,\ell)(y_i^{\ell})\DefinedAs |w|-1$ otherwise.

A  $w$-pointed trace assignment $\Pi$ is \emph{consistent with a subset $Ex\subseteq \Var$} if for each $x_i\in\Var$
with $\Pi(x_i)=(enc(w),h)$, it holds that $h<|w|$ iff $x_i\notin Ex$. By construction, Claim~1,  and a straightforward induction
on the structure of the given subformula $\theta$ of $\varphi$, we obtain the following result.
\vspace{0.2cm}

\noindent \textbf{Claim 2.} Let $w$ be a finite trace, $\theta$ a subformula of $\varphi$, $\emptyset\neq C\subseteq \Var$ a context, $\ell\in \N$, and $Ex\subseteq \Var$. Then, for each   $w$-pointed-trace assignment $\Pi$ which is consistent with
$Ex$, it holds that
\[
\Pi,C\models \theta\,\, \text{ iff }\,\, w,g(\Pi,\ell)\models f(\theta,C,\ell,Ex)
\]
Let $f(\varphi)$ be the $\FOPLUS$ sentence defined as follows:
\[
f(\varphi)\DefinedAs \exists x_1\ldots \exists x_n.\,(f(\varphi,\Var,1,\emptyset)\wedge \displaystyle{\bigwedge_{i\in [1,n]}}
(y_i^{1}=x_i \wedge x_i=0))
\]
By Claim~1, it follows that for each finite trace, $enc(w)\models \varphi$ if and only if $w\models f(\varphi)$.
This concludes the proof of Proposition~\ref{FromSingleTRaceCHLTLToFOPLUS}.
 \end{proof}

\subsection{Syntax and semantics of past $\CHLTL$ and proof of Theorem~\ref{theo:CharacterizationCHLTLInFIniteTraces}}\label{APP:CharacterizationCHLTLInFIniteTraces}

In this section, we first define the syntax and semantics of past $\CHLTL$, the extension of $\CHLTL$ with past temporal modalities. Then, we provide a proof of
Theorem~\ref{theo:CharacterizationCHLTLInFIniteTraces}. 

 The syntax of past $\CHLTL$ quantifier-free
  formulas $\psi$ over $\AP$ and $\Var$ is as follows:
\[
   \psi ::=  \top  \ | \   \Rel{p}{x}  \ | \ \neg \psi \ | \ \psi \wedge \psi \ | \ \Next\psi \ | \ \PNext\psi \ | \   \psi \Until \psi 
   \ | \   \psi \PUntil \psi\ | \ \tpl{C} \psi
\]
where $\PNext$ (``previous") and $\PUntil$ (``since") are the standard past counterparts of the temporal modalities
$\Next$ and $\Until$, respectively.
 Let $C$ be a context, $\Pi$  a  pointed-trace assignment, and $i\geq 0$ be an offset.
 The \emph{$(C,i)$-predecessor of $\Pi$}, denoted by $\Pi -_C i$, is defined as follows:
 \begin{itemize}
   \item if there is $x\in \Dom(\Pi)\cap C$ such that $\Pi(x)=(\pi,h)$ and $h<i$, then 
   $\Pi -_C i= \und$ (intuitively, $\und$ denotes the \emph{undefined} value);
   \item otherwise, $\Pi -_C i$ is the pointed-trace assignment with domain $\Dom(\Pi)$ defined as
follows. For each $x\in \Dom(\Pi)$, where $\Pi(x)=(\pi,h)$:
$[\Pi -_C i](x)=(\pi,h-i)$ if $x\in C$, and $[\Pi -_C i](x)=\Pi(x)$
otherwise.
 \end{itemize}
 
Intuitively, if the pointed-trace assignment $\Pi -_C i$ is defined, then the positions of the pointed traces associated with the
variables in $C$ move back of  $i$ steps, while the positions of the
other pointed traces remain unchanged. The semantics of the past temporal modalities is as follows:
  \[ \begin{array}{ll}
  (\Pi,C) \models  \PNext\psi & \Leftrightarrow  \Pi -_C 1 \neq \und \text{ and }  (\Pi -_C 1,C)\models  \psi\\
  (\Pi,C) \models  \psi_1\PUntil \psi_2 & \Leftrightarrow  \text{for some }i\geq 0 \text{ such that }\Pi -_C i \neq \und:\,   (\Pi -_C i,C)\models  \psi_2   \text{ and }\\
  & \,\, (\Pi -_C k,C) \models  \psi_1 \text{ for all } 0\leq k<i
\end{array} \]
 We write $\Pi\models \psi$ to mean that $(\Pi,\Var)\models\psi$. Moreover, for a trace $w$,
 we write $w\models \psi$ to mean that $\Pi_w\models \psi$, where $\Pi_w$ is the initial pointed-trace assignment
 associating to each trace variable occurring in $\psi$ the pointed trace $(w,0)$.
 
 Theorem~\ref{theo:CharacterizationCHLTLInFIniteTraces} directly follows from the following two propositions.
 
 \begin{proposition} Let $\varphi$ be a $\FOINPLUS$ sentence over $\AP$. Then, one can construct in polynomial time a $\CHLTL$ quantifier-free formula $f(\varphi)$ over $\AP$ such that for each trace $w$, it holds that
$w\models \varphi$ \emph{if and only if} $w\models f(\varphi)$.
\end{proposition}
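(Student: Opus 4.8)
The plan is to mirror the finite-trace construction of Proposition~\ref{prop:FromFOPLUSToCHLTL}, reusing its bookkeeping of copies and contexts, and to replace the one genuinely $\#$-dependent gadget---the equality test $\psi_=$---by a past-based one. Concretely, for each first-order variable $x$ occurring in $\varphi$ I would keep a past $\CHLTL$ trace variable $x$, a synchronization copy $x_{\theta,S}$ for every atomic subformula $\theta$ in which $x$ occurs, and an initialization copy $x_{\theta,I}$ for every addition subformula; the context $C_x$ consists of $x$ together with its synchronization copies, but not its initialization copies. Propositions translate as $f(P_a(x))\DefinedAs a[x]$, the Boolean connectives commute with $f$, and the existential quantifier becomes $f(\exists x.\theta)\DefinedAs \tpl{C_x}\Eventually f(\theta)$. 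Since the top-level sentence starts every copy at position $0$, moving the context $C_x$ forward lets the master variable and all its synchronization copies slide in lockstep to an arbitrary common position of the infinite trace, while each initialization copy remains pinned at $0$ until an addition gadget explicitly advances it.

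The heart of the argument, and the only place where the infinite setting really differs, is the equality gadget. Over a finite trace one detects equal positions using the unique $\#$-boundary, but over an arbitrary trace no such landmark exists and purely forward modalities cannot align two pointers. Here the past modalities are essential: position $0$ is the unique position with no predecessor, so ``$x$ sits at position $0$'' is expressible as $\neg\tpl{\{x\}}\PNext\top$, and I would define
\[
  \psi_=(x,y)\DefinedAs \tpl{\{x,y\}}\bigl(\top\,\PUntil\,(\neg\tpl{\{x\}}\PNext\top\wedge\neg\tpl{\{y\}}\PNext\top)\bigr).
\]
Rewinding the context $\{x,y\}$ is defined ($\neq\und$) only while both pointers stay non-negative, so the since-modality can retract them by at most $\min(i_x,i_y)$ steps; reaching a configuration with \emph{both} pointers at $0$ is therefore possible iff $i_x=i_y$, which is exactly equality of the two positions. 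From $\psi_=$ the remaining atoms are built as in the finite case: $x<y$ advances a synchronization copy of $x$ by at least one step and then tests $\psi_=$ against $y$, and $z=x+y$ slides an initialization copy (starting at $0$) together with a synchronization copy of $x$ until the former meets $y$, checking meanwhile that the latter meets $z$. Because the trace is infinite, no shift can fall off the word, so the boundary-driven two-disjunct case split of the finite proof can be dispensed with and the translation stays polynomial.

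Correctness would then follow by structural induction on $\theta$, via the analog of the Claim in Proposition~\ref{prop:FromFOPLUSToCHLTL}: one fixes a trace $w$ and a $w$-valuation $g$, lets $\Pi_{w,g}$ place every copy of $x$ at $g(x)$ and every initialization copy at $0$, and shows $(w,g)\models\theta$ iff $\Pi_{w,g}\models f(\theta)$, with the ordering/equality/addition cases resting on the analysis of $\psi_=$ above and the propositional and quantifier cases being routine; the top-level sentence is $f(\varphi)$ evaluated with all copies at position $0$. I expect the main obstacle to be precisely this position-$0$ and equality detection over unbounded traces: one must verify carefully that the $\und$-semantics of backward shifts makes $\psi_=$ capture equality and nothing else, and that under nested quantification the contexts $C_x$ keep every synchronization copy faithfully aligned with its master variable while the local shifts inside the atomic translations leave all other variables untouched.
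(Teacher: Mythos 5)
Your proposal is correct and follows essentially the same route as the paper: it reuses the finite-trace translation of Proposition~\ref{prop:FromFOPLUSToCHLTL} and replaces the $\#$-based equality gadget by a past-based one that identifies position $0$ as the unique position with no predecessor and tests equality of two pointers by rewinding them simultaneously with $\PUntil$. Your gadget $\tpl{\{x,y\}}\bigl(\top\,\PUntil\,(\neg\tpl{\{x\}}\PNext\top\wedge\neg\tpl{\{y\}}\PNext\top)\bigr)$ is equivalent to the paper's $\psi_=(x,y)\DefinedAs(\init(x)\wedge \init(y))\vee \tpl{\{x,y\}}\PNext (\top \,\PUntil\, (\init(x)\wedge \init(y)))$, so the remaining atomic cases and the structural induction go through exactly as you describe.
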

\begin{proof} The proof is similar to that of Proposition~\ref{prop:FromFOPLUSToCHLTL} but we use the past temporal modalities for 
encoding the atomic formulas of the given $\FOINPLUS$ sentence $\varphi$. In particular, given two variables $x$ and $y$, we exploit the following past $\CHLTL$ quantifier-free formula $\psi_=(x,y)$ for expressing that $x$ and $y$ refer to the same position:
  \[ \begin{array}{ll}
  \psi_=(x,y) & \DefinedAs   (\init(x)\wedge \init(y))\vee \{x,y\}\PNext (\,\top \,\PUntil\, (\init(x)\wedge \init(y)))\vspace{0.2cm}\\
  \init(z) & \DefinedAs   \{z\}\,\neg \PNext (\displaystyle{\bigvee_{p\in \AP}}(p[z]\vee \neg p[z]))\quad \text{ for each }z\in\Var 
\end{array} \]
\end{proof}
 
 The proof of the following proposition is a simplified version  of that of Proposition~\ref{FromSingleTRaceCHLTLToFOPLUS} and we omit the details here.

 \begin{proposition} Let $\varphi$ be a past $\CHLTL$ quantifier-free formula over $\AP$.  Then, one can construct in polynomial time a  $\FOINPLUS$ sentence $f(\varphi)$ over $\AP$ such that for each  trace $w$, it holds that $w\models f(\varphi)$ \emph{if and only if} $w\models  \varphi$.
\end{proposition}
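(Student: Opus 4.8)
The plan is to mimic the finite-trace construction of Proposition~\ref{FromSingleTRaceCHLTLToFOPLUS}, translating a past $\CHLTL$ quantifier-free formula $\varphi$ over $\Var=\{x_1,\dots,x_n\}$ into a $\FOINPLUS$ sentence by structural induction. The crucial simplification over the finite case is that a trace $w$ is \emph{infinite}, so no shift can ever move a variable off the end of the word. Consequently the entire $Ex$-machinery (the set of variables that have run into the $\{\#\}^{\omega}$ padding) disappears, and with it the disjunction over partitions $Part(C\setminus Ex)$ that caused the exponential blow-up in the finite setting; this is precisely why the translation here is polynomial rather than exponential. As in the finite proof, for each $x_i\in\Var$ and each level $\ell\in\N$ I would introduce a fresh copy $y_i^{\ell}$ of $x_i$ and define inductively a $\FOINPLUS$ formula $f(\theta,C,\ell)$ with free variables $y_1^{\ell},\dots,y_n^{\ell}$, parameterized by the current context $\emptyset\neq C\subseteq\Var$ and the level $\ell$; the values of $y_1^{\ell},\dots,y_n^{\ell}$ record the positions to which $x_1,\dots,x_n$ currently point.

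The atomic, Boolean and context cases are essentially those of the finite construction, only cleaner: $f(p[x_i],C,\ell)\DefinedAs P_p(y_i^{\ell})$ (there is no longer a special proposition $\#$ nor any $Ex$-dependent case split), $f(\neg\theta,C,\ell)\DefinedAs\neg f(\theta,C,\ell)$, $f(\theta_1\wedge\theta_2,C,\ell)\DefinedAs f(\theta_1,C,\ell)\wedge f(\theta_2,C,\ell)$, and $f(\tpl{C'}\theta,C,\ell)\DefinedAs f(\theta,C',\ell)$. For the temporal operators I would advance or rewind, synchronously, exactly the copies of the variables in $C$, encoding a shift of amount $z$ via the addition predicate. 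For the future next modality:
\[
\begin{array}{l}
f(\Next\theta_1,C,\ell)\DefinedAs \exists y_1^{\ell+1}\dots\exists y_n^{\ell+1}.\,\Bigl(
\displaystyle{\bigwedge_{x_i\notin C}} y_i^{\ell+1}=y_i^{\ell}\,\wedge\,\\
\qquad\qquad
\displaystyle{\bigwedge_{x_i\in C}} y_i^{\ell+1}=y_i^{\ell}+1\,\wedge\, f(\theta_1,C,\ell+1)\Bigr),
\end{array}
\]
and $f(\theta_1\Until\theta_2,C,\ell)$ would existentially guess an offset $z$, set $y_i^{\ell+1}=y_i^{\ell}+z$ for $x_i\in C$ (and $y_i^{\ell+1}=y_i^{\ell}$ otherwise), require $f(\theta_2,C,\ell+1)$, and universally impose $f(\theta_1,C,\ell+2)$ at every intermediate offset $z'<z$. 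The point worth stressing is that the past operators need no special boundary handling beyond what addition already provides: for $\PNext$ I would use $y_i^{\ell+1}+1=y_i^{\ell}$ for $x_i\in C$, and for $\theta_1\PUntil\theta_2$ the constraint $y_i^{\ell+1}+z=y_i^{\ell}$. Since these equations are unsatisfiable precisely when some $y_i^{\ell}$ with $x_i\in C$ is smaller than the requested shift, the existential quantifier is automatically false exactly in the cases where the $\CHLTL$ semantics yields $\und$.

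Correctness would follow from the natural analogue of Claim~2 of Proposition~\ref{FromSingleTRaceCHLTLToFOPLUS}: associate to a $w$-pointed-trace assignment $\Pi$ (with $\Pi(x_i)=(w,h_i)$) the valuation $g(\Pi,\ell)$ sending $y_i^{\ell}$ to $h_i$, and prove by induction on $\theta$ that $(\Pi,C)\models\theta$ iff $w,g(\Pi,\ell)\models f(\theta,C,\ell)$. Since every position of an infinite trace is a genuine position, the equivalence relation $\sim_w$ and Claim~1 of the finite proof are no longer needed. The final sentence is then
\[
f(\varphi)\DefinedAs \exists y_1^{1}\dots\exists y_n^{1}.\,\Bigl(f(\varphi,\Var,1)\,\wedge\,\displaystyle{\bigwedge_{i\in[1,n]}}\forall z.\,y_i^{1}\le z\Bigr),
\]
the last conjunct pinning each level-$1$ copy to the initial position $0$ so as to match the initial assignment $\Pi_w$.

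The only genuine obstacle is the $\Until$/$\PUntil$ clause: one must verify that the universal guard over intermediate offsets $z'$ ranges over exactly the positions strictly between the current one and the witnessing one, in both the forward and backward directions, and that the partiality of backward shifts is faithfully captured (every intermediate backward shift is defined once the extremal one is). Once this single inductive step is checked, the remaining cases are routine, and the size bound is immediate: each occurrence of a subformula of $\varphi$ is translated once, at a context and level determined by its position, into a gadget of size $O(n)$ using $O(n\cdot|\varphi|)$ fresh variables, so $f(\varphi)$ has polynomial size and is computable in polynomial time.
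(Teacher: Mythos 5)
Your proposal is correct and follows essentially the same route as the paper, which itself gives no details here beyond stating that the proof is a simplified version of that of Proposition~\ref{FromSingleTRaceCHLTLToFOPLUS}: your construction is exactly that simplification, and you correctly identify the two points the paper leaves implicit, namely that dropping the $Ex$/partition machinery (possible because no shift can run off an infinite trace) is what makes the translation polynomial rather than exponential, and that the partiality of backward shifts ($\und$) is faithfully captured by the unsatisfiability of the addition constraints over $\N$.
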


 \subsection{Proof of Theorem~\ref{theo:UndecidabilityMC_CHLTL}}\label{APP:UndecidabilityMC_CHLTL}

 In this section, we provide a proof of Theorem~\ref{theo:UndecidabilityMC_CHLTL}.

\setcounter{aux}{\value{theorem}}
\setcounter{theorem}{\value{theorem-UndecidabilityMC_CHLTL}}
\setcounter{auxSec}{\value{section}}
\setcounter{section}{\value{sec-UndecidabilityMC_CHLTL}}

\begin{theorem}  Model-checking against the fragment $\U$ of $\CHLTL$ is $\Sigma_0^{1}$-hard.
\end{theorem}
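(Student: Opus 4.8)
The plan is to reduce the halting problem for two-counter (Minsky) machines, which is $\Sigma_0^{1}$-complete~\cite{Harel86}, to model checking the fragment $\U$, restricting the $\CHLTL$ undecidability argument of~\cite{BozzelliPS21} into the one-context shape of $\U$. Given a machine $M=\tpl{Q,\Delta,\delta_\init,\delta_\halt}$ with a distinguished halting transition $\delta_\halt$, I would encode a halting computation as a single trace that is a concatenation of \emph{fixed-width configuration blocks}, each of the form $\{\Beg,\delta\}\cdot B_1\cdot B_2\cdot \{\Pad\}^{*}$, where the two counters are represented not in unary but by the \emph{position of a marker} inside a field of some width $N$: an $\inc$ (resp.\ $\dec$) moves the corresponding marker one cell to the right (resp.\ left), so that \emph{every} block has the same length $N$ while the reachable marker positions range over $[0,N]$. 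I would then build in logarithmic space a Kripke structure $\Ku_M$ over $\AP\DefinedAs\Delta\cup\{c_1,c_2,\#,\Beg,\Pad,\mathsf{snd}\}$ whose traces are exactly the syntactically well-formed infinite sequences of equal-width blocks such that the first block encodes $(\delta_\init,\nu_0)$, consecutive block \emph{headers} respect the control flow of $\Delta$, the proposition $\mathsf{snd}$ is a one-shot marker holding precisely on the $\Beg$ of the second block, and $\delta_\halt$ eventually occurs. All of these are $\omega$-regular (indeed LTL-definable) and can be baked into $\Ku_M$; crucially, the width $N$ is \emph{not} fixed by $\Ku_M$ but chosen by each trace, so $\Lang(\Ku_M)$ contains encodings of every width.

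The one remaining, non-regular requirement is \emph{data consistency} of the markers across consecutive blocks, and this is exactly what the single context shift of $\U$ verifies. I would take
\[
\varphi_M \DefinedAs \exists x_1.\,\exists x_2.\;\psi_0 \,\wedge\, \{x_2\}\Eventually\{x_1,x_2\}\psi,
\]
where $\psi_0$ is the quantifier-free $\HLTL$ formula $\bigwedge_{p\in\AP}\Always(p[x_1]\leftrightarrow p[x_2])$, forcing $x_1$ and $x_2$ to be bound to the \emph{same} trace $\pi$. In $\{x_2\}\Eventually$ the head of $x_2$ is advanced by a guessed offset while the head of $x_1$ stays at position $0$; inside the synchronous context $\{x_1,x_2\}$ the formula $\psi$ is an $\HLTL$ (hence LTL-over-the-synchronized-pair) formula whose first conjunct $\Beg[x_1]\wedge\Beg[x_2]\wedge\mathsf{snd}[x_2]$ pins the offset to exactly one block length, so that $x_1$ reads block $k$ while $x_2$ reads block $k+1$ for every $k$. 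A global $\Always$ in $\psi$ then checks, \emph{position by position} inside each aligned pair of blocks, that the marker of the counter not acted upon keeps the same offset (equality pinned by the parity bit $\#$) and that the marker of the acted counter moves correctly according to the header instruction ($\inc$, $\dec$, or $\zero$); an attempted $\inc$ at the last cell of a width-$N$ field has no valid continuation, which is how a counter exceeding $N$ is rejected. Since all blocks share the single width $N$, one rigid shift simultaneously verifies \emph{all} transitions.

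For correctness, if $M$ halts then its finite halting run has some maximal counter value $V$; encoding it with any width $N\ge V$ and then repeating the final configuration yields a trace in $\Lang(\Ku_M)$ satisfying $\varphi_M$. Conversely, any model of $\varphi_M$ fixes a width $N$ and exhibits an infinite sequence of width-$N$ blocks that starts from the initial configuration, respects $\Delta$ on both control and data (counters staying within $[0,N]$), and reaches $\delta_\halt$; its prefix up to $\delta_\halt$ is therefore a genuine halting run of $M$. Hence $\Lang(\Ku_M)\models\varphi_M$ iff $M$ halts, giving $\Sigma_0^{1}$-hardness. The main obstacle I anticipate is precisely the tension between a \emph{single} rigid $\Eventually$-shift and the \emph{unboundedness} of the counters: this is what forces the fixed-width marker encoding (so that one offset aligns every consecutive pair at once) together with the existential choice of $N$ by the trace, and what makes the one-shot $\mathsf{snd}$ marker necessary to nail the offset to a single block. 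A secondary point is to check that every clause of $\psi_0$ and $\psi$ genuinely lies in quantifier-free $\HLTL$, i.e.\ expresses only synchronous, position-wise relations between the two heads, which is exactly the shape demanded by $\U$ and suffices here.
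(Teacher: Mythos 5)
Your high-level strategy is the same as the paper's: reduce from the halting problem for nondeterministic Minsky $2$-counter machines, bind $x_1$ and $x_2$ to the same trace via $\psi_0$, shift $x_2$ by exactly one block with $\{x_2\}\Eventually$, and then check configuration succession in the synchronous context $\{x_1,x_2\}$, exploiting equal-length blocks (with padding / existentially chosen width) so that a single rigid shift verifies all transitions at once. The gap lies in what you ask the Kripke structure to do. You require $\Lang(\Ku_M)$ to consist \emph{exactly} of (i) sequences of equal-width blocks in which (ii) $\delta_\halt$ eventually occurs. Neither requirement can be realized by a finite Kripke structure: (i) equality of the lengths of unboundedly long consecutive blocks is a non-regular property (essentially $a^n b^n$), and (ii) the trace language of a Kripke structure is limit-closed (a safety language), so it cannot enforce the liveness property ``eventually $\delta_\halt$'' while still generating arbitrarily long pre-halting prefixes --- which it must, since halting runs are unboundedly long, forcing cycles in the pre-halt part of $\Ku_M$.

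Both failures break your correctness argument as stated. For (i), the claim that ``any model of $\varphi_M$ fixes a width $N$ and exhibits an infinite sequence of width-$N$ blocks'' has no justification once $\Lang(\Ku_M)$ inevitably contains ill-aligned block sequences: your $\psi$ only checks data consistency ``inside each aligned pair of blocks'' and never verifies that block boundaries remain aligned beyond the first pair, which the initial pinning alone does not give you. For (ii), any machine with a non-halting run whose counters stay bounded yields a false positive: that run can be encoded with a sufficiently large width, every check in $\psi_0$ and $\psi$ passes, yet $M$ does not halt. Both defects are repaired exactly the way the paper does it, namely by moving these constraints into the quantifier-free formula evaluated after the shift: add to $\psi$ the conjunct $\Always(\Beg[x_1]\leftrightarrow \Beg[x_2])$, which --- given that the shift aligns the start of block $1$ on $x_1$ with the start of block $2$ on $x_2$ --- inductively forces all blocks to have the same length, and add a liveness conjunct such as $\Eventually\,\bot[x_2]$ (the paper marks the end of a well-formed, halting encoding by $\bot$ and lets the Kripke structure guarantee only that visiting $\bot$ characterizes well-formedness). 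With these two conjuncts your marker-based counter encoding would go through; without them the reduction is unsound.
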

\setcounter{theorem}{\value{aux}}
\setcounter{section}{\value{auxSec}}
\begin{proof}
The result is obtained by a polynomial time reduction from the halting problem of non-deterministic  Minsky $2$-counter machines~\cite{Harel86}, a well-known $\Sigma_0^{1}$-complete problem~\cite{Harel86}.

Recall that the fragment $\U$ consists of two-variable quantifier alternation-free formulas of the form $\exists x_1.\exists x_2.\, \psi_0\wedge \{x_2\}\Eventually \{x_1,x_2\}\psi$, where $\psi_0$ and $\psi$ are quantifier-free $\HLTL$ formulas.  Fix a Minsky $2$-counter machine $M = \tpl{Q,\Delta,\delta_\init,\delta_\halt}$. Note that $M$  is defined as at the
 end of Section~\ref{sec:AsynchronousHLTL} but the designated transition $\delta_\rec$ is replaced with the halting transition
 $\delta_\halt$. An halting computation of $M$ is a finite computation of $M$ starting at the initial configuration and leading to a
  configuration of the form $(\delta_\halt,\nu)$ for some counter valuation $\nu:\{1,2\}\rightarrow \N$. The halting problem consists in checking the existence of an halting computation for the given machine $M$.
   Without loss of generality, we assume that $\delta_\init\neq \delta_\halt$.
   We construct a  Kripke structure $\Ku_M$ and a $\CHLTL$ formula $\varphi_M$ in the fragment $\U$ such that there is an halting computation of $M$ iff  $\Lang(\Ku_M)\models \varphi_M$ (i.e., $\Ku_M$ is a model of $\varphi_M$).

We first provide a description of the reduction.   Codes of $M$-configurations (\emph{segments}) are similar to the ones described in the proof of Theorem~\ref{theo:UndecidabilitySIngleTraceAHLTL}, where the padding proposition allows to obtain codes of the same $M$-configuration having arbitrary length.
The finite Kripke structure $\Ku_M$ ensures that its set of traces contains the \emph{well-formed traces} $\pi$ of the
form $s_1 \ldots s_n \cdot \{\bot\}^{\omega}$, where   $s_1,\ldots,s_n$ are segments, segment $s_1$ encodes the initial configuration,
segment $s_n$ encodes some halting configuration, and the second  segment $s_2$ is marked by a special proposition. Then the  $\U$-formula
ensures that both the two variables, say $x_1$ and $x_2$, are bound to a well-formed trace $\pi=s_1 \ldots s_n \cdot \{\bot\}^{\omega}$ and $\pi$ is faithful to the evolution of $M$. This last requirement is ensured by moving in the context $\{x_2\}$ the position for the
$x_2$-trace to the beginning of the second segment $s_2$, and by requiring, by means of a lockstepwise traversal of the trace $\pi$ and the suffix of $\pi$ starting from the beginning of the second segment $s_2$, that all the segments have the same length. In this way, in this traversal, the $(i+1)^{th}$-segment of $\pi$ along the $x_2$-trace matches the $i^{th}$-segment along the $x_1$-trace, and one can ensure that the configuration encoded by the $(i+1)^{th}$-segment is a $M$-successor of the configuration encoded by the $i^{th}$-segment. Note that
the existence of an halting computation in $M$ entails the existence of a uniform upper bound  on the values of the counters along the computation. Thus, by exploiting the padding proposition, one can ensure that in the encoding of  such a computation all the segments have the same length.

We now proceed with the technical details of the reduction. We exploit the set $\AP$ of atomic propositions defined as   $\AP\DefinedAs \Delta \cup \{c_1,c_2,\#,\Beg,\Pad,\bot\}$. An  $M$-configuration $(\delta,\nu)$ is encoded by the finite words over $2^{\AP}$ (called \emph{segments}) of the form:
\[
\{\Beg,\delta\} P_1\ldots P_m \{\Pad\}^{k}
\]
where $k\geq 1$, $m= \max(\nu(1),\nu(2))$, and for all $i\in [1,m]$, (i) $\emptyset\neq  P_i\subseteq \{c_ 1,c_2\}$ and (ii)  for all $\ell\in\{1,2\}$, $c_\ell\in P_i$ iff $i\leq \nu(\ell)$. A \emph{marked segment} is defined as a segment but the first position is additionally labeled by the proposition $\#$. A \emph{well-formed} trace of $M$ is a trace of the
form $s_1\ldots s_n\cdot \{\bot\}^{\omega}$ such that (i) $n>1$ and $s_1,\ldots,s_n$ are $n$ segments, (ii)
$s_1$ encodes the initial configuration, (iii) $s_n$ encodes some halting configuration, and (iv) $s_2$ is a marked segment.
Thus, halting computations of $M$ are encoded by well-formed traces which, additionally, are faithful to the evolution of $M$.
The following result is straightforward.\vspace{0.2cm}

\noindent \textbf{Claim.} One can construct in polynomial time a  Kripke structure $\Ku_M$ such that
\begin{compactitem}
  \item $\Lang(\Ku_M)$ contains all the well-formed traces.
  \item For each $\pi\in \Lang(\Ku_M)$, $\pi$ is well-formed iff $\pi$ visits a position labeled by $\bot$.
\end{compactitem}\vspace{0.2cm}

We construct a $\U$-formula $\varphi_M$ such that $\Lang(\Ku_M)\models \varphi_M$ iff there is a well-formed trace
which is faithful to the evolution of $M$. Hence, the result follows. The $\U$-formula $\varphi_M$ is defined as follows:
\[
\varphi_M\DefinedAs \exists x_1.\exists x_2. \displaystyle{\bigwedge_{p\in \AP}}\Always
(p[x_1] \leftrightarrow p[x_2]) \, \wedge \, \{x_2\}\Eventually \{x_1,x_2\}\bigl( \#[x_2] \,\wedge\, \Eventually \bot[x_2] \,\wedge\, \psi_f \,\bigr)
\]
where $\psi_f$ is a quantifier-free $\HLTL$ formula defined in the following. Note that the first conjunct ensures that variables $x_1$ and $x_2$ are bound to the same trace $\pi$ of $\Ku_M$, while by the previous claim,  the subformula $\Eventually \bot[x_2]$ in the  second conjunct ensures that the trace $\pi= s_1\ldots s_n\cdot \{\bot\}^{\omega}$ is well-formed. Moreover,
the position for the $x_2$-trace is moved to the  beginning of the second segment of $\pi$ in the context $\{x_2\}$ and $\psi_f$
ensures by a lockstepwise traversal of the trace $\pi$ and the suffix of $\pi$ starting
from the beginning of the second segment that $\pi$ is faithful  to the evolution of $M$.
The  quantifier-free $\HLTL$ formula $\psi_f$ consists of  two conjuncts enforcing the following requirements:

\begin{itemize}
   \item The segments $s_1,\ldots,s_n$ have the same length. This ensures that in the lockstepwise
   traversal of $\pi$ (bound to variable $x_1$) and $s_2\ldots s_n \cdot \{\bot\}^{\omega}$ (bound to variable $x_2$), the
    $(i+1)^{th}$-segment for the $x_2$-trace matches in the traversal the  $i^{th}$-segment for the $x_1$-trace for all $1\leq i<n$.
   \[
   \Always(\Beg[x_2] \leftrightarrow \Beg[x_1])
   \]
   \item The configuration encoded by the $(i+1)^{th}$-segment for the $x_2$-trace is a $M$-successor
   of the configuration encoded by the  $i^{th}$-segment for the $x_1$-trace for all $1\leq i<n$.
\[
 \displaystyle{\bigwedge_{\delta=(q,(\instr,\ell),q')\in\Delta}}
 \Always \Bigl( \bigl(\Beg[x_1]\wedge \Beg[x_2] \wedge \delta[x_1]\bigr) \,\longrightarrow\, \Bigl(\displaystyle{\bigvee_{\delta'=(q',(\instr',\ell'),q'')\in\Delta}}\delta'[x_2] \,\wedge\,
\]
\[
   \hspace{2.5cm} \underbrace{\Next\bigl((c_{3-\ell}[x_1] \wedge c_{3-\ell}[x_2])  \Until (\neg c_{3-\ell}[x_1] \wedge \neg c_{3-\ell}[x_2])}_{\text{the value of counter $3-\ell$ does not change}}\bigr)\,\wedge\,
\]
\[
     \hspace{0.5cm} \underbrace{\instr =\inc \rightarrow \Next\bigl((c_{\ell}[x_1] \wedge c_{\ell}[x_2])  \Until (\neg  c_{\ell}[x_1] \wedge c_{\ell}[x_2]\wedge \neg \Next c_{\ell}[x_2])}_{\text{increment of counter $\ell$}}\bigr)\,\wedge\,
\]
\[
     \hspace{0.5cm} \underbrace{\instr =\dec \rightarrow \Next\bigl((c_{\ell}[x_1] \wedge c_{\ell}[x_2])  \Until (  c_{\ell}[x_1] \wedge \neg c_{\ell}[x_2]\wedge \neg \Next c_{\ell}[x_1])}_{\text{decrement of counter $\ell$}}\bigr)\,\wedge\,
\]
\[
     \hspace{2.5cm} \underbrace{\instr =\zero \rightarrow \Next (\neg c_{\ell}[x_1] \wedge\neg  c_{\ell}[x_2]) }_{\text{zero test of counter $\ell$}}\,\,\Bigr)\Bigr)
\]
\end{itemize}

By construction, if $\Lang(\Ku_M)\models \varphi_M$, then there exists a well-formed trace which is faithful to the evolution of
$M$, hence, there is an halting computation of $M$. Vice versa if there is an halting computation $\rho$ of $M$, then there is a uniform bound  on the counter values along $\rho$. Hence, by construction, there exists a well-formed trace $\pi$ whose segments have the same length which satisfies the quantifier-free part of $\varphi_M$ with $x_1$ and $x_2$ bound to $\pi$. By the previous claim, it follows that $\Lang(\Ku_M)\models \varphi_M$, and we are done.
 \end{proof}

\subsection{Proof of Theorem~\ref{theo:DecidabilityMCSimpleCHLTL}}\label{APP:DecidabilityMCSimpleCHLTL}

In this section, we provide  a proof of Theorem~\ref{theo:DecidabilityMCSimpleCHLTL}.
We show that a  simple $\CHLTL$ formula can be translated in polynomial time into an equivalent $\FOE$ formula (see Appendix~\ref{APP:LogicsWIthEqualLevelPredicate} for the definition of the syntax and semantics of $\FOE$). Thus, since model checking of $\FOE$ is decidable~\cite{CoenenFHH19}, Theorem~\ref{theo:DecidabilityMCSimpleCHLTL} directly follows.

\begin{proposition}\label{prop:TranslationSimpleCHLTLIntoHLTL} Given a simple $\CHLTL$ formula, one can construct in polynomial time an equivalent $\FOE$ formula.
\end{proposition}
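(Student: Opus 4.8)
The plan is to give a compositional, context-indexed translation from simple $\CHLTL$ into $\FOE$ and to prove its correctness by structural induction, using the equal-level predicate $E$ to encode the one genuinely asynchronous feature the logic exhibits. For a quantifier-free subformula $\psi$ evaluated in a context $C$ I would introduce, for every trace variable $x_k$ occurring in $\varphi$, a first-order variable $t_k$ ranging over pointed traces and standing for the current position $\Pi(x_k)$; since the $\FOE$ order $<$ only relates pointed traces lying on the same trace, $t_k$ automatically stays on the trace to which $x_k$ is bound. Atoms and Booleans are translated homomorphically ($\Rel{p}{x_k}\mapsto P_p(t_k)$, etc.), a context switch $\tpl{C'}\psi$ simply re-indexes the translation to $C'$ without touching $\vec t$, and the initial trace quantifiers $\exists x/\forall x$ become quantifiers over pointed traces guarded by $\mathsf{zero}(t):=\neg\exists u\,(u<t)$, which pins $t$ to position $0$ of its trace.

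First I would handle advancement. For a singleton context $C=\{x_k\}$ only $t_k$ moves, so $\Next$ is encoded by the $\FOE$-definable successor $\mathsf{succ}(t_k,t'_k)$ and $\Until$ by an existential over a later position $s_k$ on the same trace together with a universal over the intermediate positions, exactly as in Kamp's translation of $\LTL$ into \FO. For a non-singleton context $C$, $\Next$ advances every $t_k$ with $x_k\in C$ to its successor, and the delicate case is $\psi_1\Until\psi_2$, where all variables in $C$ must advance by the \emph{same} offset $i$. The crucial invariant, proved in the next step, is that whenever the translation is invoked on a subformula whose context $C$ is non-singleton, the positions $\{\Pi(x_k)\mid x_k\in C\}$ all lie at a common timestamp. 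Under this invariant I can encode the simultaneous $i$-step move by existentially quantifying fresh positions $(s_k)_{x_k\in C}$ with $t_k\leq s_k$ and $\bigwedge_{x_j,x_k\in C}E(s_j,s_k)$: the $E$-conjunction forces all targets onto a single timestamp, which together with the hypothesis that the sources share a timestamp forces a common offset $i$, while the matching universal over intermediate, $E$-locked positions captures the ``$\psi_1$ holds strictly before'' requirement. This is the only place asynchrony is encoded, and it is faithful precisely because the sources are synchronized.

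The second step, and the main obstacle, is to establish the synchronization invariant, which is exactly where the \emph{simplicity} restriction is used. I would argue that a non-singleton context modality $\tpl{C}$ can, by definition of simple $\CHLTL$, only occur within the implicit global context $\Var$ or within $\tpl{C}$ itself. Consequently, on the path from the root of $\psi$ to any occurrence whose innermost context $C$ is non-singleton, the only context modalities met are $\tpl{C}$ (preceded by a region in the global context $\Var$ where all variables are still mutually synchronized from their common initial position $0$); since context modalities leave $\Pi$ unchanged and a temporal step in a fixed context advances exactly the variables of that context by the same amount, the variables in $C$ remain at a common timestamp. The singleton sub-contexts that may appear inside the $C$-region act only within their own scopes and, as the semantics of $\tpl{C'}\psi$ shows, never alter the shared assignment that the surrounding $C$-modalities continue from. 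I would phrase this as a lemma guaranteeing the invariant and then prove the correctness statement ``$(\Pi,C)\models\psi$ iff $T_C(\psi)$ holds under the valuation $t_k=\Pi(x_k)$'' by induction on $\psi$, carrying synchronization of $C$ as a hypothesis that the advancement clauses visibly preserve.

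Finally I would bound the size. Each syntactic occurrence of a subformula is translated once under its uniquely determined context, every temporal operator introduces at most $n$ fresh position variables together with $O(n^2)$ equal-level and order atoms, and neither the $\Next$ nor the $\Until$ clause duplicates a subformula; hence $T_\Var(\psi,\vec t)$, and therefore the full sentence obtained by prefixing the guarded pointed-trace quantifiers for $Q_1x_1\ldots Q_nx_n$, has size polynomial in $|\varphi|$ and is computable in polynomial time. Since simple $\CHLTL$ subsumes $\HLTL$ and model checking of $\FOE$ is decidable~\cite{CoenenFHH19}, Theorem~\ref{theo:DecidabilityMCSimpleCHLTL} follows.
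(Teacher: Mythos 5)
Your proposal is correct and takes essentially the same approach as the paper: the paper likewise gives a compositional, context-indexed translation of simple $\CHLTL$ into $\FOE$ using per-trace-variable position variables, encodes simultaneous advancement under a non-singleton context by existentially quantified target positions locked together with the equal-level predicate $E$, and proves correctness by structural induction under the invariant that the variables of the current context share a timestamp (there called consistency with $C$), with initial assignments trivially satisfying it. The only packaging difference is that the paper first rewrites a simple formula as a Boolean combination of formulas $\tpl{C}\psi$ whose inner contexts are all singletons and defines the translation on that normal form, whereas you carry the synchronization invariant directly through the induction on arbitrary simple formulas.
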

\begin{proof} Let $\Var=\{x_1,\ldots,x_n\}$ and $\F$ be the set of quantifier-free $\CHLTL$ formulas over $\Var$ of the form $C\psi$ such that $C$ is a context and $\psi$ is a quantifier-free $\CHLTL$ formula whose contexts are all singletons, i.e.~consisting of a single trace variable. Evidently, each simple $\CHLTL$ quantifier-free formula $\varphi$ can be equivalently rewritten as a Boolean combination of quantifier-free formulas
in the fragment $\F$. Thus, it suffices to show that for each formula $C\psi$ in the fragment $\F$, there exists a  $\FOE$ formula $\psi'$ with free variables $x_1,\ldots,x_n$    such that for all sets $\Lang$ of traces and \emph{initial} pointed-trace assignments $\Pi$ over $\Var$ and $\Lang$, it holds that $\Pi \models C\psi$ iff $\Pi\models_\Lang \psi'$.

For each variable $x_i\in \Var$ and $\ell\in \N$, let $y_i^{\ell}$ be a fresh copy of $x_i$. For a quantifier-free $\CHLTL$ formula
$\psi$ whose contexts are all singletons, a context $\emptyset\neq C\subseteq \Var$, and $\ell\in \N$, we associate to the tuple
$(\psi,C,\ell)$, a $\FOE$ formula $f(\psi,C,\ell)$ with free variables
$y_1^{\ell},\ldots,y_n^{\ell}$. The mapping $f$ is inductively defined as follows (recall that for each $p\in \AP$, $\FOE$ provides the monadic predicate $P_p$):

\begin{itemize}
  \item $f(p[x_i],C,\ell)\DefinedAs P_{p}(y_i^{\ell})$.
  \item $f(\neg\psi,C,\ell)\DefinedAs \neg f(\psi,C,\ell)$.
  \item $f(\psi_1\wedge \psi_2,C,\ell)\DefinedAs f(\psi_1,C,\ell)\wedge
  f(\psi_2,C,\ell)$.
  \item
\[
  \begin{array}{l}
  f(\Next\psi,C,\ell)\DefinedAs   \exists y_1^{\ell+1}.\,\ldots \exists y_n^{\ell+1}.\,
  \bigl(\displaystyle{\bigwedge_{x_i\in C}}y_i^{\ell+1}= y_i^{\ell}+1\, \wedge\, \displaystyle{\bigwedge_{x_i\notin C}}  y_i^{\ell+1}= y_i^{\ell}\,\wedge\, f(\psi,C,\ell+1)\bigr)
\end{array}
\]
  Note that for variables $x$ and $y$, the predicate $y=x+1$ can be expressed in $\FOE$ as follows:
  $x<y\wedge \neg \exists z.\,x<z \wedge z<y$.
   \item
\[
  \begin{array}{ll}
  f(\psi_1\Until\psi_2,C,\ell)\DefinedAs & \exists y_1^{\ell+1}.\,\ldots \exists y_n^{\ell+1}.\,
  \Bigl(\displaystyle{\bigwedge_{x_i\in C}}y_i^{\ell}\leq y_i^{\ell+1}\, \wedge\, \displaystyle{\bigwedge_{x_i\notin C}}  y_i^{\ell+1}= y_i^{\ell}\,\wedge\, \theta_E(C,\ell+1)\,\wedge\, \vspace{0.1cm}\\
 &  f(\psi_2,C,\ell+1)\,\wedge\,\forall y_1^{\ell+2}.\,\ldots \forall y_n^{\ell+2}.\,\bigl[\displaystyle{\bigwedge_{x_i\in C}}
 (y_i^{\ell}\leq y_i^{\ell+2} \,\wedge\,y_i^{\ell+2}< y_i^{\ell+1}) \, \wedge\, \vspace{0.1cm}\\
 & \displaystyle{\bigwedge_{x_i\notin C}}  y_i^{\ell+2}= y_i^{\ell}\,\wedge\, \theta_E(C,\ell+2)  \bigr] \longrightarrow
 f(\psi_1,C,\ell+2) \Bigr)\vspace{0.1cm}\\
 \theta_E(C,\ell)\DefinedAs &  \displaystyle{\bigwedge_{x_i,x_j\in C \mid x_i\neq x_j}} E(y_i^{\ell},y_j^{\ell})
\end{array}
\]
\noindent Note that if $C$ is a singleton, then $\theta_E(C,\ell)$ is $\top$ (the empty conjunction is $\top$).
\item $f(\{x_i\}\psi,C,\ell)\DefinedAs  f(\psi,\{x_i\},\ell)$.
\end{itemize}

By construction, we obtain the following result, where a pointed-trace assignment $\Pi$ with domain $\Var$ is \emph{consistent with a context $C$} if there is a unique position $m$ such that  for all $x_i\in C$, it holds that $\Pi(x_i)$ is of the form $(\pi,m)$
for some trace $\pi$.\vspace{0.2cm}

\noindent \textbf{Claim.} Let $\psi$ be a quantifier-free $\CHLTL$ formula over $\Var$
whose contexts are all singletons, $C\subseteq \Var$ a context, and $\ell\in \N$. Then, for each set $\Lang$ of traces and each pointed-trace assignment $\Pi$ over $\Var$ and $\Lang$  which is consistent with the context $C$:
\[
\Pi\models \psi \text{ if and only if } \Pi_\ell \models_\Lang f(\psi,C,\ell)
\]
where $\Pi_\ell$ is the pointed-trace assignment over $\{y_1^{\ell},\ldots,y_n^{\ell}\}$ defined as:
$\Pi_\ell(y_i^{\ell})\DefinedAs \Pi(x_i)$.\vspace{0.1cm}

\noindent \textbf{Proof of the claim.} By a straightforward induction on the structure of the formula
$\psi$, (i) we first prove that the result holds when $C$ is a singleton, and then (ii) we generalize the result to the case where $C$ is not a singleton.  \qed \vspace{0.2cm}

For a formula $C\psi$ in the fragment $\F$, let $\psi'$ be the $\FOE$ formula with free variables $x_1\ldots,x_n$ given by
$\exists y_1^{1}.\,\ldots \exists y_n^{1}.\,(f(\psi,C,1)\wedge\displaystyle{\bigwedge_{x_i\in \Var}}y_i^{1}=x_i)$. By the previous claim, we obtain that
for all sets $\Lang$ of traces and \emph{initial} pointed-trace assignments $\Pi$ over $\Var$ and $\Lang$, it holds that $\Pi \models \psi$ iff $\Pi\models_\Lang \psi'$, and we are done.
\end{proof}

\section{Proofs from Section~\ref{sec:OverallExpressiveness}}\label{APP:OverallExpressiveness}

\subsection{Syntax and semantics of $\FOE$ and $\MSOE$}\label{APP:LogicsWIthEqualLevelPredicate}

We recall syntax and semantics of the logics $\FOE$~\cite{Finkbeiner017} and $\MSOE$~\cite{CoenenFHH19}.
$\FOE$~\cite{Finkbeiner017} extends standard first-order logic $\FO$  over infinite words or traces  with the equal-level binary predicate $E$. It is a first-order logic with equality over the signature
$\{<,E\}\cup \{P_a \mid a\in \AP\}$ for a given finite set $\AP$ of atomic propositions. $\MSOE$ extends $\FOE$ by second-order quantification over monadic predicates.

Given a finite set $\AP$ of atomic propositions, a finite set $V_1=\{x,y,\ldots\}$ of first-order variables, and
a finite set $V_2=\{X,Y,\ldots\}$ of second-order variables, the syntax of   $\MSOE$ formulas
over $\AP$, $V_1$, and $V_2$ is as follows:
\[
\varphi ::=    P_a(x)  \ | \ x\in X \ | \ x<y  \ | \ x=y  \ | \ E(x,y)  \ | \ \neg \varphi \ | \ \varphi \wedge \varphi \ | \ \exists x. \varphi   | \ \exists X. \varphi
\]
where $a\in \AP$, $x,y\in V_1$, and $X\in V_2$. A $\MSOE$ sentence is a $\MSOE$ formula where each (first-order or second-order) variable
occurs in the scope of a quantifier. 
$\FOE$ corresponds to the fragment of $\MSOE$ where the predicate $x\in X$ and the second-order quantifiers are disallowed.
 Note that standard first-order logic $\FO$ over traces and its monadic second-order extension $\MSO$ syntactically correspond to the fragments of $\FOE$ and $\MSOE$, respectively, where the equal-level predicate $E$ is disallowed.

 While $\MSO$ is interpreted over traces, $\MSOE$ is interpreted over  sets of traces.
 A set $\Lang$ of traces induces the relational structure with domain $\Lang\times \N$ (i.e., the set of pointed traces associated with $\Lang$), where (i) the
 binary predicate $<$ is interpreted as the set of pairs of pointed traces  in $\Lang\times \N$ of the form  $((\pi,i_1),(\pi,i_2))$ such that  $i_1<i_2$, and (ii) the equal-level predicate $E$ is interpreted as the set of pairs of pointed traces  in $\Lang\times \N$ of the form  $((\pi_1,i),(\pi_2,i))$. Hence, the predicate $<$ allows to compare distinct timestamps along the same trace, while
 the equal-level predicate  allows to compare distinct traces at the same timestamp.
 The semantics of $\MSOE$ is formally defined as follows.  A \emph{first-order valuation over $\Lang$} is a pointed-trace assignment $\Pi$ over $V_1$  and $\Lang$, and
 a \emph{second-order valuation over $\Lang$} is a mapping $\V$ assigning to each second-order variable in $V_2$ a set of pointed traces in $\Lang\times \N$.
  Given a $\MSOE$ formula $\varphi$, the satisfaction relation
 $\Pi,\V\models_\Lang \varphi$ is inductively defined as follows (we omit the semantics for the Boolean connectives which is standard):
\[ \begin{array}{ll}
\Pi,\V \models_\Lang  P_a(x)  &  \Leftrightarrow  \Pi(x)=(\pi,i) \text{ and }a\in \pi(i)\\
\Pi,\V \models_\Lang  x\in X  &  \Leftrightarrow  \Pi(x)\in \V(X)\\
\Pi,\V \models_\Lang  x<y  &  \Leftrightarrow  \Pi(x)=(\pi_1,i_1),\,\Pi(y)=(\pi_2,i_2),\,\pi_1=\pi_2,\, \text{ and }i_1<i_2\\
\Pi,\V \models_\Lang  x=y  &  \Leftrightarrow  \Pi(x)=\Pi(y) \\
\Pi,\V \models_\Lang  E(x,y)  &  \Leftrightarrow  \Pi(x)=(\pi_1,i_1),\,\Pi(y)=(\pi_2,i_2), \text{ and }i_1=i_2\\
\Pi,\V \models_\Lang  \exists x. \varphi  &  \Leftrightarrow  \Pi[x \mapsto (\pi,i)]\models_\Lang \varphi \text{ for some }(\pi,i)\in\Lang\times N\\
\Pi,\V \models_\Lang  \exists X. \varphi  &  \Leftrightarrow  \V[X \mapsto A]\models_\Lang \varphi \text{ for some }A\subseteq\Lang\times N
\end{array} \]
where  $\V[X \mapsto A](X)=A$ and $\V[X \mapsto A](Y)=\V(Y)$ for $Y\neq X$.
Note that if $\varphi$ is a sentence, then the satisfaction relation $\Pi,\V\models_\Lang \varphi$ is independent of $\Pi$ and $\V$.
 We say that $\Lang$ is a model of $\varphi$, written
$\Lang\models \varphi$ if $\Pi,\V\models_\Lang \varphi$ for some first-order  valuation $\Pi$ and second-order valuation $\V$ over $\Lang$.

\subsection{Proof of Lemma~\ref{lemma:MSOEkregular}}\label{APP:MSOEkregular}

In this section, we provide a proof of Lemma~\ref{lemma:MSOEkregular} (see Appendix~\ref{APP:LogicsWIthEqualLevelPredicate} for the definition of the syntax and semantics of $\MSOE$ and $\MSO$).
\setcounter{aux}{\value{lemma}}
\setcounter{lemma}{\value{lemma-MSOEkregular}}
\setcounter{auxSec}{\value{section}}
\setcounter{section}{\value{sec-MSOEkregular}}

\begin{lemma}Let $k\geq 1$ and  $\varphi$ be a $\MSOE$
  sentence over $\AP$. Then, one can construct a  $\MSO$ sentence $\varphi'$
  over $\AP\times [1,k]$ whose set of models is the $k$-language of
  $\varphi$.
  \end{lemma}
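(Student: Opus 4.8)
The plan is to exhibit a \emph{syntactic interpretation} of the $\MSOE$-structure induced by a $k$-model inside the single word that encodes it. Recall that a well-formed $\nu$ over $\AP\times[1,k]$ encodes $\Lang(\nu)=\{\pi_1,\ldots,\pi_k\}$, where $\pi_\ell$ is the projection of $\nu$ onto $\AP\times\{\ell\}$, and that well-formedness (which forces these $k$ traces to be pairwise distinct) is itself expressible by the $\MSO$ sentence
\[
\varphi_{\textit{wf}} \DefinedAs \bigwedge_{\ell\neq\ell'\in[1,k]} \exists x.\bigvee_{a\in\AP}\bigl(P_{(a,\ell)}(x)\leftrightarrow\neg P_{(a,\ell')}(x)\bigr).
\]
The key observation I would exploit is that, once $|\Lang|=k$ is fixed, the domain $\Lang\times\N$ of the structure induced by $\Lang(\nu)$ is just $k$ disjoint copies of $\N$ indexed by $[1,k]$, all realised inside $\nu$. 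Hence the equal-level predicate $E$, which is the source of the non-regular power of $\MSOE$, degenerates to equality of positions, and the first (trace-index) coordinate of a pointed trace, ranging over the finite set $[1,k]$, can be eliminated by a bounded case distinction.

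\textbf{The translation.} For a trace-index assignment $\sigma$ sending the free first-order variables of an $\MSOE$ formula $\psi$ into $[1,k]$, I would define an $\MSO$ formula $T_\sigma(\psi)$ over $\AP\times[1,k]$ by induction, replacing each first-order $\MSOE$ variable $x$ by a position variable $p_x$ and each second-order variable $X$ by $k$ monadic $\MSO$ variables $X_1,\ldots,X_k$ (with $X_\ell$ standing for $\{i\mid(\pi_\ell,i)\in X\}$). The atomic cases would be $T_\sigma(P_a(x))\DefinedAs P_{(a,\sigma(x))}(p_x)$, $T_\sigma(x\in X)\DefinedAs p_x\in X_{\sigma(x)}$, and $T_\sigma(E(x,y))\DefinedAs p_x=p_y$; the intra-trace predicates $x<y$ and $x=y$ would become $p_x<p_y$ and $p_x=p_y$ when $\sigma(x)=\sigma(y)$ and $\neg\top$ otherwise, since under $\varphi_{\textit{wf}}$ "same trace" is equivalent to "same index". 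Boolean connectives commute with $T_\sigma$, and quantifiers would be handled by $T_\sigma(\exists x.\psi)\DefinedAs\bigvee_{\ell\in[1,k]}\exists p_x.\,T_{\sigma[x\mapsto\ell]}(\psi)$ and $T_\sigma(\exists X.\psi)\DefinedAs\exists X_1\ldots\exists X_k.\,T_\sigma(\psi)$. Since $k$ is a constant, each $T_\sigma(\psi)$ is a genuine (finite) $\MSO$ formula.

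\textbf{Correctness.} I would then set $\varphi' \DefinedAs \varphi_{\textit{wf}}\wedge T_\emptyset(\varphi)$ and prove, by induction on $\psi$, the invariant that for every well-formed $\nu$ with $\Lang(\nu)=\{\pi_1,\ldots,\pi_k\}$, every $\sigma$, and every $\MSO$-assignment of positions and set-copies, one has $\Pi,\V\models_{\Lang(\nu)}\psi$ iff $\nu$ satisfies $T_\sigma(\psi)$ under the corresponding assignment, where $\Pi(z)\DefinedAs(\pi_{\sigma(z)},p_z)$ and $\V(X)\DefinedAs\{(\pi_\ell,i)\mid i\in X_\ell\}$. The atomic and Boolean steps are immediate (with well-formedness invoked for $<$ and $=$), and the quantifier steps follow because every element of $\Lang(\nu)\times\N$ is uniquely $(\pi_\ell,i)$ for $\ell\in[1,k]$ and $i\in\N$, matching exactly the disjunction over $\ell$ together with the position quantifier. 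Specialising to the sentence $\varphi$ (so $\sigma$ is empty) yields that a well-formed $\nu$ satisfies $T_\emptyset(\varphi)$ iff $\Lang(\nu)\models\varphi$; since the truth of an $\MSOE$ sentence depends only on the \emph{set} $\Lang(\nu)$ and not on the ordering chosen for the encoding, all $k$-codes of a $k$-model are captured, while $\varphi_{\textit{wf}}$ discards non-well-formed words. Thus the models of $\varphi'$ would be precisely the $k$-codes of the $k$-models of $\varphi$, i.e.\ the $k$-language of $\varphi$.

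\textbf{Main obstacle.} The delicate point will be the treatment of first-order quantification: because each $\MSOE$ element carries a trace index, every $\exists x$ must be unfolded into a $k$-way disjunction that fixes this index, and the translation must then keep the \emph{intra-trace} semantics of $<$ and $=$ apart from the \emph{inter-trace, same-level} semantics of $E$. Making the equivalence "same index $\Leftrightarrow$ same trace" (needed to translate $<$ and $=$ soundly) hold is exactly where the well-formedness of $\nu$ is essential, so I would be careful to state the induction invariant only for well-formed $\nu$ and to conjoin $\varphi_{\textit{wf}}$ into $\varphi'$.
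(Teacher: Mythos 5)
Your proposal is correct and follows essentially the same route as the paper's proof: your trace-index assignment $\sigma$ is the paper's choice function $g$, and the translation (projecting atomic predicates to $P_{(a,\sigma(x))}$, splitting each second-order variable into $k$ copies, collapsing $E$ to position equality, mapping $<$ and $=$ to $\neg\top$ when the indices differ, unfolding $\exists x$ into a $k$-way disjunction) matches the paper's construction clause by clause, as does conjoining the well-formedness sentence and proving correctness by the same induction invariant. The only cosmetic difference is that you reuse a single position variable $p_x$ per $\MSOE$ variable where the paper introduces $k$ copies $y_{i,1},\ldots,y_{i,k}$ and selects the one indexed by $g(x_i)$; this changes nothing in the argument.
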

\setcounter{lemma}{\value{aux}}
\setcounter{section}{\value{auxSec}}
\begin{proof}
Fix $k\geq 1$. Let $\psi$ be a $\MSOE$ formula over $\AP$ with first-order variables $V_1=\{x_1,\ldots,x_n\}$ and second-order variables
$V_2 =\{X_1,\ldots,X_m\}$ for some $n\geq 1$ and $m\geq 0$. For each first-order variable $x_i$, we introduce
$k$ fresh first-order variables $y_{i,1},\ldots,y_{i,k}$. Moreover,
 for each second-order variable  $X_j$, we introduce $k$ fresh second-order variables
$Y_{j,1},\ldots,Y_{j,k}$. Let $V'_1=\{y_{1,1},\ldots,y_{1,k},\ldots,y_{n,1},\ldots,y_{n,k}\}$
and $V'_2=\{Y_{1,1},\ldots,Y_{1,k},\ldots,Y_{m,1},\ldots,Y_{m,k}\}$.
Let $\Lang=\{\pi_1,\ldots,\pi_k\}$ be a set of traces (over $\AP$) of cardinality $k$.
Intuitively, for a first-order variable $x_i$ mapped to a pointed trace $(\pi_\ell,h)$, we use the
variable $y_{i,\ell}$ in $\MSO$ for keeping track of the position $h$ associated to the trace $\pi_\ell$ in $\Lang$.
Moreover, for   a set $A\subseteq \Lang\times \N$ mapped to variable $X_i$, we use the second-order variables  $Y_{i,1},\ldots,Y_{i,k}$ for partitioning the set of timestamps in $A$ in accordance to the first component  of the  pointed traces in $A$ (recall that the second-order variables in $\MSO$ are mapped to sets of positions of the given trace).

Let $g: V_1 \mapsto [1,k]$ be a choice function associating to each first-order variable $x_i$ an  index $\ell\in [1,k]$. We associate to the $\MSOE$ formula $\psi$ and the choice function $g$ an $\MSO$ formula $f(\psi,g)$ over $\AP\times [1,k]$, $V'_1$, and $V'_2$.
The mapping $f(\psi,g)$ is inductively defined as follows:
\begin{itemize}
  \item $f$ is homomorphic for the Boolean connectives.\vspace{0.1cm}

  \item $f(P_a(x_i),g)\DefinedAs P_{(a,g(x_i))}(y_{i,g(x_i)})$ \vspace{0.1cm}

  \item $f(x_i\in X_j,g)\DefinedAs y_{i,g(x_i)}\in Y_{j,g(x_i)}$ \vspace{0.1cm}

  \item
  $
  f(x_i<x_j,g)\DefinedAs
          \left\{\begin{array}{ll}
                  \neg\top             & \text{ if }g(x_i)\neq g(x_j) \\
                 y_{i,g(x_i)}< y_{j,g(x_j)}            &  \text{ otherwise}
                \end{array}\right.
 $\vspace{0.1cm}

 \item
  $
  f(x_i=x_j,g)\DefinedAs
          \left\{\begin{array}{ll}
                  \neg\top             & \text{ if }g(x_i)\neq g(x_j) \\
                 y_{i,g(x_i)}= y_{j,g(x_j)}            &  \text{ otherwise}
                \end{array}\right.
 $\vspace{0.1cm}

 \item $f(E(x_i,x_j),g)\DefinedAs y_{i,g(x_i)}= y_{j,g(x_j)}$ \vspace{0.1cm}

 \item $f(\exists x_i.\,\psi,g)\DefinedAs \displaystyle{\bigvee_{\ell\in [1,k]}}\exists y_{i,\ell}.\, f(\psi,g[x_i \mapsto \ell])$\vspace{0.1cm}

 \item $f(\exists X_i.\, \psi,g)\DefinedAs \exists Y_{i,1}.\ldots \exists Y_{i,k}. f(\psi,g)$
\end{itemize}\vspace{0.1cm}

\noindent where $g[x_i \mapsto \ell](x_j)=\ell$ if $j=i$, $g[x_i \mapsto \ell](x_j)=g(x_j)$ otherwise.  Fix a well-formed trace $\nu$  over $\AP\times [1,k]$ and let $\Lang(\nu)$ be the set of traces over $\AP$ having cardinality $k$ associated with $\nu$. Then, $\Lang(\nu)$ can be written in the form $\Lang(\nu)= \{\pi_1,\ldots,\pi_k\}$ where
for each $\ell \in [1,k]$ and $i\geq 0$, $\pi_\ell(i) =\{p \in\AP \mid (p,\ell)\in \nu(i)\}$.
Let $g: V_1 \mapsto [1,k]$ be a choice function, $\V_1$ be a first-order $\MSO$-valuation over $V'_1$  (i.e., a function associating a position $i\geq 0$ to each variable in $V'_1$), and $\V_2$ be a second-order $\MSO$-valuation over $V'_2$ for $\MSO$
(i.e., a function associating a subset of $\N$ to each variable in $V'_2$).
We denote by $\Pi(\V_1,g)$  and $\V(\V_2)$ the first-order $\MSOE$-valuation and  the second-order $\MSOE$-valuation over $V_1$ and $V_2$, respectively, for the set of traces $\Lang(\nu)$, defined as follows:
\begin{itemize}
  \item for each $x_i\in V_1$, let $g(x_i)= \ell$: then $\Pi(\V_1,g)(x_i)\DefinedAs (\pi_\ell,\V_1(y_{i,\ell}))$;
  \item for each $X_i\in V_2$, $\V(\V_2)(X_i)\DefinedAs \displaystyle{\bigcup_{1\leq \ell\leq k} \{\pi_\ell\}\times \V_2(Y_{i,\ell})}$.
\end{itemize}
We first show the following.\vspace{0.2cm}

\noindent \textbf{Claim.} $\V_1,\V_2 \models_\nu f(\psi,g)$ if and only if $\Pi(\V_1,g),\V(\V_2)\models_{\Lang(\nu)} \psi$.\vspace{0.1cm}

\noindent \textbf{Proof of the claim: } by induction on the structure of $\psi$. The case of Boolean connectives directly follows from the induction hypothesis. For the other cases, we proceed as follows:
\begin{itemize}
  \item $\psi= P_a(x_i)$ for some $a\in \AP$ and $i\in [1,n]$. Let $g(x_i)=\ell$ and $\V_1(y_{i,\ell})=h$. By construction,
  $f(\psi,g)=P_{(a,\ell)}(y_{i,\ell})$ and $\Pi(\V_1,g)(x_i)=(\pi_\ell,h)$. Moreover,
  $a\in \pi_\ell(h)$ iff $(a,\ell)\in \nu(h)$. Hence, the result follows.   \vspace{0.1cm}

  \item $\psi = x_i\in X_j$ for some $i\in [1,n]$ and $j\in [1,m]$. Let $g(x_i)=\ell$, $\V_1(y_{i,\ell})=h$, and $\V_2(Y_{j,\ell})=A\subseteq \N$. By construction,
  $f(\psi,g)= y_{i,\ell}\in Y_{j,\ell}$, $\Pi(\V_1,g)(x_i)=(\pi_\ell,h)$, and $\V(\V_2)(X_j)\cap (\{\pi_\ell\}\times \N)= \{\pi_\ell\}\times A$. Hence, $\Pi(\V_1,g)(x_i)\in \V(\V_2)(X_j)$ iff $\V_1(y_{i,\ell})\in \V_2(Y_{j,\ell})$, and the result holds in this case as well. \vspace{0.1cm}

  \item $\psi=  x_i< x_j$ for some $i,j\in [1,n]$. Let $g(x_i)=\ell$, $g(x_j)=\ell'$,  $\V_1(y_{i,\ell})=h$, and $\V_1(y_{j,\ell'})=h'$. By construction,
$\Pi(\V_1,g)(x_i)=(\pi_\ell,h)$ and $\Pi(\V_1,g)(x_j)=(\pi_{\ell'},h')$. Moreover, \emph{either} $\ell\neq \ell'$ and $f(\psi,g)=\neg\top$, \emph{or} $\ell=\ell'$ and $f(\psi,g)=y_{i,\ell}<y_{j,\ell}$. Thus, since $\pi_\ell\neq \pi_{\ell'}$ if $\ell\neq \ell'$,  the result easily follows.   \vspace{0.1cm}

\item $\psi=  x_i= x_j$ for some $i,j\in [1,n]$. This case is similar to the previous one.\vspace{0.1cm}

\item $\psi=E(x_i,x_j)$ for some $i,j\in [1,n]$. Let $g(x_i)=\ell$, $g(x_j)=\ell'$,  $\V_1(y_{i,\ell})=h$, and $\V_1(y_{j,\ell'})=h'$. By construction,
$\Pi(\V_1,g)(x_i)=(\pi_\ell,h)$, $\Pi(\V_1,g)(x_j)=(\pi_{\ell'},h')$, and $f(\psi,g)= y_{i,\ell}=y_{j,\ell'}$.
Hence, (i) $\V_1,\V_2 \models_\nu f(\psi,g)$ iff $h=h'$ and (ii) $\Pi(\V_1,g),\V(\V_2)\models_{\Lang(\nu)} \psi$ iff $h=h'$, and the result follows. \vspace{0.1cm}

\item $\psi = \exists x_i.\, \theta$ for some $i\in [1,n]$. First, assume that $\V_1,\V_2 \models_\nu f(\exists x_i.\, \theta,g)$.
By construction, there exists $\ell\in [1,k]$ such that $\V_1,\V_2 \models_\nu  \exists y_{i,\ell}.\, f(\theta,g[x_i\mapsto \ell])$. Hence, for some $h\in\N$, $\V_1[y_{i,\ell} \mapsto h],\V_2\models_\nu f(\theta,g[x_i\mapsto \ell])$.
Let $\V'_1 = \V_1[y_{i,\ell} \mapsto h]$ and $g' = g[x_i\mapsto \ell]$. By the induction hypothesis, we have that
 $\Pi(\V'_1,g'),\V(\V_2)\models_{\Lang(\nu)} \theta$. Since $\Pi(\V'_1,g')=\Pi(\V_1,g)[x_i \mapsto (\pi_\ell,h)]$, we obtain that
  $\Pi(\V_1,g)[x_i \mapsto (\pi_\ell,h)],\V(\V_2)\models_{\Lang(\nu)} \theta$. Hence, $\Pi(\V_1,g),\V(\V_2)\models_{\Lang(\nu)} \exists x_i.\, \theta$. The converse implication is similar.\vspace{0.1cm}

  \item $\psi = \exists X_i.\, \theta$ for some $i\in [1,m]$. First, assume that $\V_1,\V_2 \models_\nu f(\exists X_i.\, \theta,g)$.
By construction, for each $\ell\in [1,k]$, there exists a set $A_\ell\subseteq \N$ such that  $\V_1,\V_2[Y_{i,1}\mapsto A_1,\ldots,Y_{i,k}\mapsto A_k] \models_\nu    f(\theta,g)$.
Let $\V'_2 = \V_2[Y_{i,1}\mapsto A_1,\ldots,Y_{i,k}\mapsto A_k]$. By the induction hypothesis, we have that
 $\Pi(\V_1,g),\V(\V'_2)\models_{\Lang(\nu)} \theta$. Let $A= \displaystyle{\bigcup_{\ell\in [1,k]}}\{\pi_\ell\}\times A_i$. Since $\V(\V'_2)=\V(\V_2)[X_i\mapsto A]$, we obtain that
  $\Pi(\V_1,g),\V(\V_2)[X_i\mapsto A]\models_{\Lang(\nu)} \theta$. Hence, $\Pi(\V_1,g),\V(\V_2)\models_{\Lang(\nu)} \exists X_i.\, \theta$. The converse implication is similar.
\end{itemize}
This concludes the proof of the claim.\qed\vspace{0.2cm}

Fix an arbitrary choice function $g_0: V_1\mapsto [1,k]$. For a $\MSOE$ sentence $\varphi$ over $\AP$, $V_1,$ and $V_2$, let $\varphi'$ be the $\MSO$ sentence over $\AP\times [1,k]$
given by
\[
\varphi':= f(\varphi,g_0)\wedge \displaystyle{\bigwedge_{\ell, \ell'\in [1,k]:\,\ell\neq \ell} \exists z_0.\,  \bigvee_{a\in\AP}}
(P_{(a,\ell)}(z_0) \leftrightarrow \neg P_{(a,\ell')}(z_0))
\]
Note that the second conjunct captures the well-formed traces over $\AP\times [1,k]$. By the previous claim, it follows that
the $k$-language of $\varphi$ is the set of traces (over $\AP\times [1,k]$) satisfying $\varphi'$.
This concludes the proof of Lemma~\ref{lemma:MSOEkregular}.
\end{proof}

\end{document}